\numberwithin{equation}{section}
\newcommand\R{\mathbb{R}}
\newcommand{\vertiii}[1]{{\vert\kern-0.25ex\vert\kern-0.25ex\vert #1
    \vert\kern-0.25ex\vert\kern-0.25ex\vert}}
\newtheorem{theorem}{Theorem}[section]
\newtheorem{corollary}[theorem]{Corollary}
\newtheorem{lemma}[theorem]{Lemma}
\newtheorem{proposition}[theorem]{Proposition}
\theoremstyle{remark}
\theoremstyle{remark}
\theoremstyle{remark}
\newtheorem{remark}[theorem]{Remark}
\begin{document}

\vspace{-20mm}
\begin{center}{\Large \bf 
Generalized  Segal--Bargmann transforms and generalized Weyl algebras associated with the~Meixner class of orthogonal polynomials}
\end{center}

{\large Chadaphorn Kodsueb}\footnote{The current affiliation: School of Mathematical Sciences and Geoinformatics, Institute of Science, Suranaree University of Technology, Nakhon Ratchasima, Thailand 30000}\\ Department of Mathematics, Swansea University, Bay Campus,  Swansea SA1 8EN, U.K.;
e-mail: \texttt{c.punkumkerd@swansea.ac.uk}\vspace{2mm}

{\large Eugene Lytvynov}\\ Department of Mathematics, Swansea University, Bay Campus,  Swansea SA1 8EN, U.K.;
e-mail: \texttt{e.lytvynov@swansea.ac.uk} (Corresponding Author)\vspace{2mm}

{\small
\begin{center}
{\bf Abstract}
 \end{center}
 
\noindent  Meixner (1934) proved that there exist exactly five classes of orthogonal Sheffer sequences: Hermite polynomials which are orthogonal with respect to  Gaussian distribution, Charlier polynomials orthogonal with respect to Poisson distribution, Laguerre polynomials orthogonal with respect to gamma distribution, Meixner polynomials of the first kind, orthogonal with respect to negative binomial  distribution, and Meixner polynomials of the second kind,  orthogonal with respect to Meixner distribution. The Segal--Bargmann transform provides a unitary isomorphism between the $L^2$-space of the Gaussian distribution and the Fock or Segal--Bargmann  space of entire funcitons. This construction was also extended to the case of the Poisson distribution. The present paper deals with the latter three classes of orthogonal Sheffer sequences.  By using a set of nonlinear coherent states, we construct and study a generalized Segal--Bargmann transform which is a unitary isomorphism between the $L^2$-space of the orthogonality measure and a certain Fock space of entire functions.  To derive  our results, we use normal ordering in  generalized Weyl algebras that are naturally associated with the orthogonal Sheffer sequences.

 \noindent

 } \vspace{2mm}



\section{Introduction}

Fock spaces play a fundamental role in quantum
mechanics as well as in infinite-dimensional analysis and probability, both classical and noncommutative (quantum), see e.g.\ \cite{Derezinski,Meyer,Parthasarathy}. Roughly speaking, a symmetric Fock space is an infinite orthogonal sum of symmetric $n$-particle Hilbert spaces. There exists an alternative description of a symmetric Fock space as a space of holomorphic functions. Such a space is usually called the Segal--Bargmann space.

	Let us briefly discuss the Segal--Bargmann construction in the one-dimensional case. Bargmann~\cite{s-Bargmann} defined a Hilbert space $\mathbb F (\mathbb C)$ as the closure of polynomials over~$\mathbb C$ in the $L^2$-space $L^2 (\mathbb C, \nu)$. Here  $\nu$ is the Gaussian measure on~$\mathbb C$ given by $\nu (dz) = \pi^{-1} \exp (- \lvert z \rvert^2) \, dA(z)$,  where $dA(z)$ is the Lebesgue measure on~$\mathbb C$. The monomials $(z^n)_{n=0}^\infty$ form an orthogonal basis for $\mathbb F (\mathbb C)$ with $(z^m,z^n)_{\mathbb F(\mathbb C)}=n! \, \delta_{m,n}$. Here and below, $\delta_{m,n}$ denotes the Kronecker delta. 	The $\mathbb F (\mathbb C)$ consists of entire functions $\varphi(z) = \sum_{n=0}^\infty f_n  z^n$ that satisfy $\sum_{n=0}^\infty  \lvert f_n \rvert^2 \, n! < \infty$. The $\mathbb F (\mathbb C)$  is a reproducing kernel Hilbert space with reproducing kernel $\mathbb K (z,w) = \sum_{n=0}^\infty (n!)^{-1} \, (\bar{z} w)^n$.
	
	Let $\mu$ be the standard Gaussian distribution on $\mathbb R$ and let $(h_n)_{n=0}^\infty$ be the sequence of monic Hermite polynomials that form an orthogonal basis for $L^2 (\mathbb R, \mu)$. The Segal--Bargmann transform is the unitary operator $\, \mathbb S : L^2 (\mathbb R, \mu) \to \mathbb F (\mathbb C)$ that satisfies $(\mathbb S \, h_n)(z) = z^n$. This operator has a representation through the coherent states:
	$$\mathbb E (x, z) = \sum_{n=0}^\infty \, \frac{z^n }{n!} \,  h_n (x) = \exp \left( -\frac{1}{2}(z^2 - 2 x z) \right),\quad x\in\mathbb R,\ z\in\mathbb C.$$ 
	More precisely, for $f\in L^2(\mathbb R,\mu)$ and $z\in\mathbb C$, one has  
	$(\mathbb S f)(z) = \int_{\mathbb R} f(x) \mathbb E (x, z) \, \mu(dx)$.  
	For a fixed $z\in\mathbb C$, $\mathbb E(\cdot,z)$ is an eigenfunction of the lowering operator in $L^2 (\mathbb R, \mu)$ with eigenvalue $z$. More exactly, if we define the (unbounded) lowering operator $\partial^-$ in $L^2 (\mathbb R, \mu)$ by $\partial^- h_n = n h_{n-1}$, then $\partial^- \, \mathbb E (\cdot, z) = z \, \mathbb E (\cdot, z)$. For  $z$ real, the operator $\mathbb S$ can also be written as 
\begin{equation}\label{fdzraer}
(\mathbb S f)(z) = \int_{\mathbb R} f(x+z)\, \mu(dx),\quad f \in L^2 (\mathbb R, \mu),\ z\in\mathbb R.
\end{equation}

	Let also $\partial^+$ denote the raising operator for the Hermite polynomials: $\partial^+ h_n = h_{n+1}$. Then, the operator of multiplication by the variable $x$ in $L^2 (\mathbb R, \mu)$ has the form $\partial^+ + \partial^-$. Hence, under the Segal--Bargmann transform $\mathbb S$, this operator goes over to the operator $Z+D$, where $Z$ is the multiplication by the variable $z$ in $\mathbb F (\mathbb C)$, and $D$ is the differentiation in $\mathbb F (\mathbb C)$. In this setting, the operators $Z$ and $D$ are adjoint of each other. Note that these operators satisfy the commutation relation $[D,Z] = 1$, hence they are generators of a Weyl algebra, see e.g. \cite[Chapter 5]{MansourSchork}.

The Segal--Bargmann transform for the Gaussian measure admits an extension to both the multivariate case 
\cite{s-Bargmann} and an infinite-dimensional case, see e.g.\ \cite{GrossMalliavin} and  \cite[Section 3.3]{Obata}.

	Asai et al.~\cite{Segal-Bargmann} constructed a counterpart of the Segal--Bargmann transform in the case of the Poisson distribution with parameter $\sigma > 0$: 
$\pi_\sigma (d\xi) = e^{-\sigma} \sum_{n=0}^\infty  \frac1{n!}\, \sigma^n\, \delta_n (d\xi)$ ($\delta_n$ denoting the Dirac measure at $n$).
Define the Gaussian measure $\nu_\sigma$ on $\mathbb C$ by
\begin{equation}\label{zsawsa}
	 \nu_\sigma (dz) = \frac{1}{\pi\sigma} \, \exp{\bigg( - \frac{|z|^2}{\sigma} \bigg)} \ dA(z).
	 \end{equation}
Let  the Hilbert space $\mathbb F_\sigma (\mathbb C)$ be the closure of polynomials over~$\mathbb C$ in $L^2 (\mathbb C, \nu_\sigma)$. The monomials $(z^n)_{n=0}^\infty$ form an orthogonal basis for $\mathbb F _\sigma(\mathbb C)$ with $(z^m,z^n)_{\mathbb F_\sigma(\mathbb C)}=\sigma^n\,n! \, \delta_{n,m}$. The $\mathbb F_\sigma (\mathbb C)$ consists of entire functions $\varphi(z) = \sum_{n=0}^\infty f_n \, z^n$ that satisfy $\sum_{n=0}^\infty  \lvert f_n \rvert^2 \, \sigma^n\,n! < \infty$. 
Let $(c_n)_{n=0}^\infty$ be the sequence of monic Charlier polynomials  that form an orthogonal basis for $L^2 (\mathbb N_0, \pi_\sigma)$ (here and below we denote $\mathbb N_0=\{0,1,2,\dots\}$).
 The generalized Segal--Bargmann transform is a unitary operator $\mathbb S : L^2 (\mathbb N_0, \pi_\sigma) \to {\mathbb F}_\sigma (\mathbb C)$ satisfying $(\mathbb S c_n)(z) = z^n$. The corresponding coherent states are\footnote{In this paper, we always denote a (generalized) Segal--Bargmann trasnform by $\mathbb S$ and the corresponding coherent states by $\mathbb E(\cdot,\cdot)$. This should not lead to a confusion, since it will always be clear from the context which particular choice of the distribution on $\R$  we are dealing with.} 
 $${\mathbb E} (\xi, z) = \sum_{n=0}^\infty \frac{z^n}{n!\,\sigma^n}c_n(\xi)=e^{-z} \left( 1+ \frac{z}{\sigma} \right)^\xi,\quad\xi\in\mathbb N_0,\ z\in\mathbb C.$$
  It holds that $\sigma\partial^- \, {\mathbb E}(\cdot, z) = z \, {\mathbb E}(\cdot, z)$, where $\partial^-$ is the lowering operator for the Charlier polynomials $(c_n)_{n=0}^\infty$. Note that $\sigma\partial^-$ is the adjoint of the raising operator $\, \partial^+$ for the polynomials $(c_n)_{n=0}^\infty$.

A key difference with the Gaussian case is that, under the transformation $\mathbb S$, the operator of multiplication by the variable $\xi$ goes over to the operator $\rho=\mathcal U  \mathcal V$ in $\mathbb F_\sigma(\mathbb C)$, where 
		\begin{equation} 	\label{mn67-0tr4}
			\mathcal U = Z + \sigma, \quad \mathcal V = D + 1.
		\end{equation}
	Note that the operators $\mathcal U$ and $\mathcal V$ still satisfy the commutation relation $[\mathcal V, \mathcal U] =1$, hence $\mathcal U$ and $\mathcal V$ generate a Weyl algebra.
			
Both Hermite polynomials $(h_n)_{n=0}^\infty$ and Charlier polynomials $(c_n)_{n=0}^\infty$
belong to the class of orthogonal Sheffer sequences. Recall that a monic polynomial sequence $(s_n)_{n=0}^\infty$ over $\R$ is called a Sheffer sequence if its (exponential) generating function is of the form
\begin{equation}\label{vcxresayw}
\sum_{n=0}^\infty\frac{t^n}{n!}\,s_n(x)=\exp\big[A(t)+xB(t)\big],
\end{equation}
where $A(t)$ and $B(t)$ are formal power series over $\R$ satisfying $A(0)=B(0)=0$ and $B'(0)=1$. 	
	
	Meixner \cite{13-Meixner} proved that there exist exactly five classes of orthogonal Sheffer sequences. In fact, a monic polynomial sequence $(s_n)_{n=0}^\infty$ is an orthogonal Sheffer sequence if and only if it satisfies the recurrence relation
		\begin{equation} \label{545-980+polk}
			xs_n (x) = s_ {n+1} (x) + (\lambda n+l) s_n (x) +\big( \sigma n+\eta n (n-1)\big) s_ {n-1}(x),
		\end{equation}
	where $\lambda \in \R$, $l \in \R$, $\sigma > 0$ and $\eta \ge 0$. The transformation of the constants 
$(\lambda,l)\mapsto(-\lambda,-l)$ corresponds to the push-forward of the orthogonality measure under the map $\R\ni x\mapsto -x\in\mathbb R$.  Hence, we may assume that $\lambda\ge0$. 	
	The constant $l$  corresponds to the shift of the orthogonality measure by $l$, so it can be chosen appropriately, depending on the other three constants. It is also convenient to introduce parameters $\alpha, \beta \in \mathbb C$ that satisfy $\alpha + \beta = \lambda$, $\alpha\beta = \eta$. In the case of both Hermite and Charlier polynomials, we have $\eta=0$.
	
	In this paper, we will deal with the case $\eta > 0$, which corresponds to the other three classes of orthogonal Sheffer sequences. More exactly, for $\alpha=\beta>0$ and $l=\sigma/\alpha$, we obtain the sequence of Laguerre polynomials which are orthogonal with respect to the following gamma distribution on $\mathbb R_+=(0,\infty)$:
\begin{equation}\label{buyfr7iei9}
		\mu_{\alpha, \alpha, \sigma} (dx) = 
		 \frac{1}{\Gamma (\frac\sigma\eta)} \, \alpha^{-\frac\sigma\eta}
		 x^{-1+ \frac\sigma\eta} \, e^{- \frac{x}{\alpha}} \, dx.
		\end{equation} 
For $\alpha>\beta>0$ and $l=\sigma/\alpha$, we obtain the sequence of Meixner polynomials of the first kind which are orthogonal with respect to the following negative binomial (Pascal) distribution on $(\alpha-\beta)\mathbb N_0$:	
\begin{equation}\label{buyfbufyt7r7iei9}
\mu_{\alpha, \beta, \sigma} (dx) = \left( 1-\frac{\beta}{\alpha} \right)^{\frac{\sigma}{\eta}} \sum_{n=0}^\infty \left( \frac{\beta}{\alpha} \right)^n \frac{1}{n!} \left( \frac{\sigma}{\eta} \right)^{(n)} \delta_{(\alpha - \beta)n} \,  (dx) .
\end{equation}
Finally, for $\Re(\alpha)\ge0$, $\Im(\alpha)>0$, $\beta=\overline \alpha$ and $l=0$, we obtain the sequence of Meixner polynomials of the second  kind (or Meixner--Pollaczak polynomials) which are orthogonal with respect to the following Meixner distribution on $\mathbb R$: 
\begin{equation}\label{cftstew5}
	 {\mu}_{\alpha, \beta, \sigma}(dx) ={C}_{\alpha, \beta, \sigma} \, \exp\left( \frac{(\frac\pi2 - \operatorname{Arg}(\alpha))x}{\Im(\alpha)} \right) \,
	\bigg\lvert \Gamma \left( \frac{ix}{2\Im(\alpha)} + \frac{i\sigma\beta}{2\eta\Im(\alpha)} \right) \bigg\rvert^2 \, dx,	
\end{equation}
where $\operatorname{Arg}(\alpha) \in (0, \pi/2]$ and the constant ${C}_{\alpha, \beta, \sigma}$ is given by 
\begin{equation}\label{sewa4tq}
	{C}_{\alpha, \beta, \sigma} = \frac{\big(2\cos\big(\frac\pi2 - \operatorname{Arg}(\alpha)\big)\big)^{\frac{\sigma}{\eta }}}{4\Im(\alpha)\pi \, \Gamma(\frac{\sigma}{\eta })} \, \exp\left( \frac{(\frac\pi2-\operatorname{Arg}(\alpha))\sigma\Re(\alpha)}{\Im(\alpha)\eta } \right). 
\end{equation}

	The aim of the paper is to  study a generalized Segal--Bargmann transform which  is a unitary operator $\mathbb S: L^2(\mu_{\alpha,\beta,\sigma})\to \mathbb F_{\eta,\sigma}(\mathbb C)$ satisfying $(\mathbb Ss_n)(z)=z^n$, where $\mathbb F_{\eta,\sigma}(\mathbb C)$ is a Fock space of entire functions to be defined below. This Segal--Bargmann transform has been previously discussed by Feinsilver~\cite{Feinsilver2} and  Asai \cite{Asai1,Asai2} from the viewpoints of orthogonal polynomials, quantum probability, and representation theory. See also \cite{Alpay-8,Alpay,Karp1,Karp2}.
	
	For $h\in\mathbb C$, let $((\cdot 
	\mid h)_n)_{n=0}^\infty$ denote the sequence of generalized factorials with increment $h$ \cite{HsuShiue}, i.e., for $z\in\mathbb C$, $(z \mid h)_0 = 1$ and 
	\begin{equation}\label{cyrtdstevf}
			(z\mid h)_n = z(z- h)(z-2h) \dotsm (z-(n-1)h),\quad n\in\mathbb N.\end{equation}
						In particular, $(z\mid1)_n=(z)_n$ is a falling factorial and $(z\mid{-1})_n=(z)^{(n)}$ is a rising factorial. Note that the so-called $h$-derivative, $(D_h f)(z) =h^{-1}(f(z+h)-f(z))$, is the lowering operator for this polynomial sequence: $(D_h(\cdot\mid h)_n)(z)=n(z\mid h)_{n-1}$.

For $\sigma>0$ and $\eta\ge0$, we define $\mathbb F_{\eta,\sigma}(\mathbb C)$ as the Hilbert space of entire functions 	
 $\varphi(z) = \sum_{n=0}^\infty f_n \, z^n$ that satisfy 
 \begin{equation}\label{vctresay5}
\sum_{n=0}^\infty \lvert f_n \rvert^2 \, n! \, (\sigma \mid -\eta)_{n} < \infty,
\end{equation}
  and $(z^m,z^n)_{\mathbb F_{\eta,\sigma}(\mathbb C)}=\delta_{m,n}(\sigma\mid-\eta)_n\,n!\,$. Note that, for $\eta=0$, we have $(\sigma\mid0)_n=\sigma^n$ and so $\mathbb F_{0,\sigma}(\mathbb C)=\mathbb F_{\sigma}(\mathbb C)$. 
  
  For general $\sigma>0$ and $\eta>0$, we prove that  $\mathbb F_{\eta,\sigma}(\mathbb C)$ is the closure of the polynomials over $\mathbb C$ in the $L^2$-space $L^2(\mathbb C,\lambda_{\eta,\sigma})$. Here $\lambda_{\eta,\sigma}$ is the random Gaussian measure $\nu_r$ (see formula~\eqref{zsawsa}) where the 
 random variable $r$ (the variance of $\nu_r$) is  distributed according to the gamma distribution 
  $\mu_{\eta,\eta, \eta \sigma}$. The $\mathbb F_{\eta, \sigma} (\mathbb C)$ is a reproducing kernel Hilbert space with reproducing kernel 
$ \mathbb K(z,w) = \sum_{n=0}^\infty \frac{(\bar{z} w)^n}{n! \, (\sigma \mid -\eta)_n}$.

We note that Asai \cite{Asai1} derived a representation of the density of the measure $\lambda_{\eta,\sigma}$ which involves the modified Bessel function.  Furthermore, it was shown in  \cite{Asai1} that $\lambda_{\eta,\sigma}$ is the unique probability measure on $\mathbb C$ whose $L^2$-space contains the Hilbert space~$\mathbb F_{\eta,\sigma}(\mathbb C)$ as its subspace.   In the case $\sigma=\eta=1$, the  space $\mathbb F_{1, 1} (\mathbb C)$ was also studied by Alpay et al.~\cite[Section 9]{Alpay-8} and Alpay and Porat \cite{Alpay}, see also \cite{Karp1, Karp2}.

The generalized Segal--Bargmann transform $\mathbb S: L^2(\mu_{\alpha,\beta,\sigma})\to \mathbb F_{\eta,\sigma}(\mathbb C)$  admits a re\-pre\-sentation 	
		$$ (\mathbb S f)(z) = \int_{\mathbb R} f(x) {\mathbb E} (x, z) \, \mu_{\alpha, \beta, \sigma} (dx), $$
	where 	
\begin{equation}\label{nb.mj.09} 
	 {\mathbb E}(x,z)=\sum_{n=0}^\infty \frac{z^n}{n! \, {(\sigma \mid -\eta)}_n} \, s_n (x),
	 \end{equation}
	 and $\mathbb E(\cdot,z)\in L^2(\mu_{\alpha,\beta,\sigma})$ for each $z\in\mathbb C$. 
Hence, $\big({\mathbb E} (\mathbf{\cdot},z)\big)_{z \in \mathbb C}$ are nonlinear coherent states corresponding to the sequence of numbers $\rho_n= n! \, (\sigma \mid -\eta)_n$ ($n\in\mathbb N_0$). See e.g.\ 
 \cite{La-Othmane-23, GazeauKlauder, Sivakumar} for studies of nonlinear coherent states. For applications of (generalized) coherent states in physics, see e.g. \cite{Gazeau, Perelomov}.
 
 In the special case where $\eta = 1$ and $\sigma = 2j$ with $ j \in \{ 1, \frac{1}{2}, 2, \frac{2}{3}, \ldots \} $, we get $ \rho_n = n! \, (2j)^{(n)}$.
Nonlinear coherent states with such a choice of $\rho_n$ are called the Barut--Girardello states~\cite{BG}, see also \cite[Section 1.1.3]{La-Othmane-23}. Such states appeared in \cite{BG} in a study of coherent states associated with the Lie algebra of the group $SU (1,1)$. For the general choice of the parameters $\lambda$, $\eta$ and $\sigma$, Feinsilver~\cite[Sections~1 and~3.8]{Feinsilver2} obtained a representation of the function $\mathbb E(x,z)$ through a hypergeometric function. 

We note that, for each $z\in\mathbb C$, $ {\mathbb E} (\cdot,z)$ is an eigenfunction (belonging to the eigenvalue~$z$) of the annihilation operator $\sigma\partial^-+\eta\partial^+(\partial^-)^2$, which is the adjoint of the operator $\partial^+$. Here $\partial ^+$ and $\partial ^-$ are the raising and lowering operators for the Sheffer sequence $(s_n)_{n=0}^\infty$: \begin{equation}\label{vgdtdsy6}\partial^+s_n=s_{n+1},\quad  \partial^-s_n=ns_{n-1}\quad n\in\mathbb N_0.
\end{equation}

For each $\zeta\in\mathbb C$, we define a complex-valued Poisson measure on $\mathbb N_0$ with parameter~$\zeta$ by 
\begin{equation}\label{vcrew43}
\pi_\zeta (d\xi) = e^{-\zeta} \sum_{n=0}^\infty  \frac1{n!}\, \zeta^n\, \delta_n (d\xi).
\end{equation}
 We prove that the nonlinear coherent states can be written in the form 
$ {\mathbb E} (x,z)= \int_{ {\mathbb N}_0} {\mathcal E} (x, \beta\xi) \, \pi_{\frac z\beta} (d\xi)$,
where 
$$ {\mathcal E} (x, \beta\xi) = \sum_{n=0}^\infty \frac{\beta^n(\xi)_n}{n! \, (\sigma \mid - \eta)_n} \, s_n (x), $$	 
and we derive explicit formulas for ${\mathcal E} (x, \beta\xi)$.	 

Furthermore, in the cases of the gamma distribution and the negative binomial distribution, we prove that, for each $f\in L^2(\mu_{\alpha,\beta,\sigma})$, 
$$(\mathbb Sf)(z)=\int_{\mathbb N_0}\int f(x)\,\mu_{\alpha, \beta,\eta \xi + \sigma}(dx)\,
\pi_{\frac z\beta } (d\xi),\quad z\in\mathbb C.$$
In particular, for $z>0$, $(\mathbb Sf)(z)$ is the expectation of $f$ with respect to the random measure $\mu_{\alpha, \beta,\eta \xi + \sigma} $, where the random variable $\xi$ has Poisson distribution $\pi_{\frac z\beta}$. \
Similarly, in the case of the Meixner distribution, we show that 
$$(\mathbb Sf)(z)=\int_{\mathbb N_0}\int f(x+\beta\xi)\,\mu_{\alpha, \beta,\eta \xi + \sigma}(dx)\,
\pi_{\frac z\beta } (d\xi),\quad z\in\mathbb C.$$
However, this formula holds only for functions $f$ from $\mathcal E_{\text{min}}^1(\mathbb C)$, the space of entire functions of order at most 1 and minimal type \cite{Grabiner}.  (The set $\mathcal E_{\text{min}}^1(\mathbb C)$ is dense in $L^2(\mu_{\alpha,\beta.\sigma})$.) 
Note that, for $r>0$, $(\mathbb Sf)(\beta r)$  is the expectation of the function $f(x+\beta \xi)$ with respect to the probability measure $\mu_{\alpha, \beta,\eta \xi + \sigma}(dx)\pi_r(d\xi) $.

Similarly to the Gaussian and Poisson cases, under the generalized Segal--Bargmann transform $\mathbb S$, the operator of multiplication by the variable $x$ in $L^2(\mu_{\alpha,\beta,\sigma})$ goes over to an operator in $\mathbb F_{\eta,\sigma}(\mathbb C)$ that admits a representation through the operators $Z$ and~$D$.  
	
Let us now briefly describe our strategy to prove these results. Let $\mathcal P(\mathbb C)$ denote the vector space of polynomials over $\mathbb C$. Consider the polynomials $s_n$ as elements of~$\mathcal P(\mathbb C)$ (with real coefficients), and consider $\partial^+$ and $\partial^-$ as linear operators in  $\mathcal P(\mathbb C)$ defined by~\eqref{vgdtdsy6}.  Define linear operators $U$ and $V$ in $\mathcal P(\mathbb C)$ by
\begin{equation}\label{cxdts5yw}
U=\partial^++\beta\partial^+\partial^-+\frac\sigma\alpha,\quad V=\alpha\partial^-+1.
\end{equation}
Let also $Z$ denote the operator of multiplication by variable $z$ in $\mathcal P(\mathbb C)$. In view of~\eqref{545-980+polk}, we get,  in the case $\alpha\ge\beta>0$ (hence $l=\frac\sigma\alpha$): $Z=UV$. Similarly, in the case $\Re(\alpha)\ge0$, $\Im(\alpha)>0$, $\beta=\overline \alpha$  (hence $l=0$), we have  $Z+\frac\sigma\alpha=UV$. Since $[\partial^-,\partial^+]=1$, the operators $U$ and $V$ satisfy the commutation relation 
		\begin{equation}\label{vgdrydydy}[V,U] = \beta V + (\alpha - \beta).
		\end{equation}		
Hence, they generate a generalized Weyl algebra, see e.g.\ \cite[Chapter 8]{MansourSchork} and the references therein.

Consider the linear bijective operator $\mathcal S$ in $\mathcal P(\mathbb C)$ that satisfies $(\mathcal Ss_n)(z)=(z\mid\beta)_n$ ($n\in\mathbb N_0$), see \eqref{cyrtdstevf}. Define operators $\mathcal U=\mathcal SU\mathcal S^{-1}$ and $\mathcal V=\mathcal SV\mathcal S^{-1}$. An easy calculation shows that 
\begin{equation}\label{vtdtstswt}
\mathcal U = Z + \frac{\sigma}{\alpha},\quad \mathcal V = \alpha D_\beta +  1,
\end{equation}
compare with \eqref{mn67-0tr4}.
Obviously, $\mathcal U$ and $\mathcal V$ also satisfy the commutation relation 
$[\mathcal V,\mathcal U] = \beta \mathcal V + (\alpha - \beta)$. Hence, they also generate a generalized Weyl algebra. Compare it with Feinsilver's finite difference algebra \cite{Feinsilver}.

Let us remark that orthogonal Sheffer  sequences with $\eta > 0$ already appeared in studies related to the square of white noise algebra, see e.g. \cite{AccardiFranzSkeide} and the references therein. It was shown in \cite{1-Accardi} that the square of white noise algebra contains a subalgebra generated by elements fulfilling the relations of Feinsilver's finite difference algebra, see also \cite{Boukas1} and \cite{Boukas2}. For further studies of Lie algebras related to orthogonal Sheffer sequences, see \cite{Asai1}, \cite[Appendix~A]{Asai2}, and \cite{Feinsilver2}.

	Similarly to Katriel's theorem about the normal ordering in the Weyl algebra \cite{10-Katriel}, we discuss the normal (Wick) ordering for the operator $(UV)^n$ in terms of $U^k$ and $ V^k$, compare with \cite[Section~8.2]{MansourSchork} and the references therein. This  allows us to derive explicit formulas for $s_n (z)$ and a representation of monomials $z^n$ through the polynomials $s_k (z)$. In these formulas, we use Stirling numbers and Lah numbers. As a corollary, we find useful formulas for the moments of the orthogonality measure $\mu_{\alpha,\beta,\sigma}$. These results are presented in the Appendix, and the reader may find them of independent interest. 

We explicitly construct an open unbounded domain ${\mathcal D}_{\alpha, \beta, \sigma}$ in $\mathbb C$ that contains~$0$. We define a reproducing kernel Hilbert space ${\mathcal F}_{\alpha, \beta, \sigma}$ of analytic functions on ${\mathcal D}_{\alpha, \beta, \sigma}$ that have representation $\varphi(z) = \sum_{n=0}^\infty f_n (z \mid \beta)_n$ with coefficients $f_n \in \mathbb C$ satisfying 
\eqref{vctresay5}.  We extend $\mathcal S$ to a unitary operator $\mathcal S : L^2 (\mu_{\alpha, \beta, \sigma}) \to {\mathcal F}_{\alpha, \beta, \sigma} $ that satisfies $(\mathcal S s_n)(z) = (z \mid \beta)_n$. Thus, under the unitary operator $\mathcal S$, the operator of multiplication by the variable $x$ in $L^2(\mu_{\alpha,\beta,\sigma})$ goes over to the operator $\mathcal U\mathcal V$ in ${\mathcal F}_{\alpha, \beta, \sigma}$ for $\alpha\ge\beta>0$ and to the operator $\mathcal U\mathcal V-\frac\sigma\alpha$ for $\Re(\alpha)\ge0$, $\Im(\alpha)>0$, $\beta=\overline\alpha$. 
We study the unitary operator $\mathcal S$ by using the results obtained through the normal ordering in the generalized Weyl algebras.

Next, we construct a unitary operator 
		$ \mathbb T : {\mathcal F}_{\alpha, \beta, \sigma}  \to {\mathbb F}_{ \sigma,\eta} (\mathbb C) $
	that satisfies
$$ (\mathbb T (\mathbf{\cdot} \mid \beta )_n)(z) = z^n,\quad n\in\mathbb N_0 .$$
 We prove that this operator has a representation 
	 \begin{equation}		\label{87.iur7.k}
		(\mathbb T f)(z) = \int_{\mathbb N_0} f(\beta\xi) \, \pi_{ \frac z\beta } (d\xi),\quad f\in {\mathcal F}_{\alpha, \beta, \sigma}.\ z\in\mathbb C .
			\end{equation}
Finally, we use that $\mathbb S=\mathbb T\mathcal S$.

As a consequence of our considerations, we also derive explicit formulas for the action of the operators $U$ and $V$, defined by \eqref{cxdts5yw}. Compare with \cite[Section~4]{Lytvynov}.

The paper is organized as follows. In Section~\ref{8765rtg00vbo}, we define and discuss the Fock space ${\mathbb F}_{ \eta, \sigma} (\mathbb C)$ and the topological space of entire functions $\mathcal E_{\mathrm{min}}^1(\mathbb C)$. In Section~\ref{crtds6ewu65e3}, we present our main results. In Section~\ref{vftds5a453q}, we present the proofs of the main results. Finally, in the Appendix~A,  
we discuss   the normal ordering in the generalized  Weyl algebra generated by operators $U$, $V$ satisfying the commutation relation $[V,U]=aV+b$ with $a,b\in\mathbb C$. We apply the obtained result to an orthogonal Sheffer sequences $(s_n)_{n=0}^\infty$ and find useful formulas for the moments of its orthogonality measure. 
		
We expect that the key ideas of this paper can be extended to an infinite-dimensional setting, compare with  \cite{Lytvynov}. This will be a topic of our future research.

\section{The spaces ${\mathbb F}_{ \eta, \sigma} (\mathbb C)$ and $\mathcal E_{\mathrm{min}}^1(\mathbb C)$} 	\label{8765rtg00vbo}

For $\eta>0$ and $\sigma\ge0$, we denote by $\, {\mathbb F}_{ \eta, \sigma} (\mathbb C)$ the vector space of all entire functions 
$\varphi : \mathbb C \to \mathbb C$, $\varphi (z) = \sum_{n=0}^{\infty} f_n z^n$ 
with coefficients $ f_n \in \mathbb C$ ($n\in\mathbb N_0$) satisfying \eqref{vctresay5}.   Consider ${\mathbb F}_{ \eta, \sigma} (\mathbb C)$ as a Hilbert space equipped with the inner product 
$ (\varphi, \psi )_{{\mathbb F}_{ \eta, \sigma}(\mathbb C) } =\sum_{n=0}^{\infty} f_n \, \overline{g_n} \, n! \, (\sigma \mid -\eta)_n $ 
for $ \varphi(z) = \sum_{n=0}^{\infty} \, f_n z^n, \ \psi(z) = \sum_{n=0}^{\infty} \, g_n z^n \in {\mathbb F}_{ \eta, \sigma} (\mathbb C) $. This is a reproducing kernel Hilbert space with reproducing kernel 
${\mathbb K} (z,w) = \sum_{n=0}^\infty \frac{(\bar zw)^n}{n!\,(\sigma\mid-\eta)_n}$,
i.e., for each $\varphi\in {\mathbb F}_{ \eta, \sigma}(\mathbb C) $, we have 
$(\varphi,{\mathbb K} (z,\cdot))_{{\mathbb F}_{ \eta, \sigma}(\mathbb C) }=\varphi(z)$. 

Consider the following gamma distribution on $\R_+$:
$$\mu_{\eta,\eta,\, \eta \sigma} (dr) = \frac{1}{\Gamma (\frac{\sigma}{\eta})}  \bigg(\frac{1}{\eta}\bigg)^{\frac{\sigma}{\eta}}r^{-1+ \frac{\sigma}{\eta}} e^{- \frac{r}{\eta}} \, dr.$$
Let $\lambda_{\eta,\sigma}$ be the random Gaussian measure $\nu_r$ (see formula~\eqref{zsawsa}) where the 
 random variable $r$ is  distributed according to~$\mu_{\eta,\eta, \eta \sigma}$, i.e., 
\begin{equation}\label{vctsaq4t}
\lambda_{\eta, \sigma} (dz)= \int_{\R_+}  \nu_r (dz)\,\mu_{\eta,\eta,\, \eta \sigma} (dr)  = \Lambda_{\eta, \sigma}(z) \, A(dz),\end{equation}
 where
 \begin{equation}
\Lambda_{\eta, \sigma}(z)=\frac{1}{\pi\,\Gamma (\frac{\sigma}{\eta})} \bigg(\frac{1}{\eta}\bigg)^{\frac{\sigma}{\eta}} \int_{\R_+} \exp\left(-\frac{|z|^2}r-\frac r\eta\right)r^{-2+ \frac{\sigma}{\eta}} \, dr.\label{xdsresar5}
 \end{equation}
 
 In the following proposition,  we will use the modified Bessel function
 \[
 K_\theta(x)=\frac{\pi}{2\sin(\theta\pi)}\big(I_{-\theta}(x)-I_{\theta}(x)\big),\]
 where
  \[
  I_\theta(x)=\bigg(\frac x2\bigg)^\theta\sum_{n=0}^\infty\frac{(x/2)^{2n}}{n!\,\Gamma(\theta+n+1)}.
  \]
  In these formulas, the parameter $\theta$ is assumed to be not an integer. When $\theta$ is an integer, the limit is used to define $K_\theta(x)$.

\begin{proposition}\label{ytd6d6de6}		 Let $\eta>0$  and  $\sigma>0$.   Then ${\mathbb F}_{ \eta, \sigma} (\mathbb C)$ is the closed subspace of $ L^2 (\mathbb C, \lambda_{\eta, \sigma})$ constructed as the closure of $\mathcal P( \mathbb C)$. Furthermore,
\begin{equation}\label{bhftrdrs465u}
\Lambda_{\eta, \sigma}(z)=\frac{2\eta^{-\frac12(1+\frac\sigma\eta)}}{\pi\Gamma(\frac\sigma\eta)}\,|z|^{\frac\sigma\eta-1}\,K_{1-\frac\sigma\eta}\big(2\eta^{-\frac12}|z|\big).
\end{equation}
\end{proposition}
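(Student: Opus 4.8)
The plan is to treat the two assertions separately, since each reduces to a computation driven by a single classical identity. I would begin with the identification of ${\mathbb F}_{\eta,\sigma}(\mathbb C)$ as the closure of $\mathcal P(\mathbb C)$ in $L^2(\mathbb C,\lambda_{\eta,\sigma})$, whose heart is the moment formula
$$\int_{\mathbb C} z^m\bar z^n\,\lambda_{\eta,\sigma}(dz)=\delta_{m,n}\,n!\,(\sigma\mid-\eta)_n.$$
To obtain it I would use the mixture representation \eqref{vctsaq4t} and integrate first against the Gaussian $\nu_r$. Since $\nu_r$ is rotationally invariant, passing to polar coordinates makes the angular integral vanish unless $m=n$, and for $m=n$ an elementary gamma integral gives $\int_{\mathbb C}z^m\bar z^n\,\nu_r(dz)=\delta_{m,n}\,n!\,r^n$ (the second-moment identity for the measure $\nu_r$ of \eqref{zsawsa} with $\sigma$ replaced by $r$). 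It then remains to integrate $n!\,r^n$ against the gamma law $\mu_{\eta,\eta,\eta\sigma}$; a direct evaluation of its $n$-th moment yields $\frac{\Gamma(n+\sigma/\eta)}{\Gamma(\sigma/\eta)}\,\eta^n=\prod_{k=0}^{n-1}(\sigma+k\eta)=(\sigma\mid-\eta)_n$, which is precisely the mechanism by which the gamma mixing reproduces the weights in \eqref{vctresay5}. The interchange of the two integrations is justified by Tonelli's theorem applied to $|z|^{m+n}$, all quantities being finite.

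Having the moments, the monomials $(z^n)_{n=0}^\infty$ form an orthogonal system in $L^2(\mathbb C,\lambda_{\eta,\sigma})$ with $\|z^n\|^2=n!\,(\sigma\mid-\eta)_n$, so the inclusion $\mathcal P(\mathbb C)\hookrightarrow L^2(\mathbb C,\lambda_{\eta,\sigma})$ is isometric for the inner product $(\cdot,\cdot)_{{\mathbb F}_{\eta,\sigma}(\mathbb C)}$, and $\overline{\mathcal P(\mathbb C)}$ is an abstract Hilbert space with orthogonal basis $(z^n)$, unitarily isomorphic to ${\mathbb F}_{\eta,\sigma}(\mathbb C)$. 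To make the identification concrete I would show that the $L^2$-limit is a genuine entire function: for any coefficients satisfying \eqref{vctresay5}, Cauchy--Schwarz gives $\sum_n|f_n||z|^n\le\big(\sum_n|f_n|^2 n!\,(\sigma\mid-\eta)_n\big)^{1/2}\,{\mathbb K}(|z|,|z|)^{1/2}$, and since ${\mathbb K}$ is finite everywhere the series $\sum_n f_n z^n$ converges locally uniformly to an entire function. Thus the partial sums converge both in $L^2$ and locally uniformly, so, after passing to an a.e.-convergent subsequence, the $L^2$-limit coincides with this entire function; conversely every element of $\overline{\mathcal P(\mathbb C)}$ arises this way. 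This gives ${\mathbb F}_{\eta,\sigma}(\mathbb C)=\overline{\mathcal P(\mathbb C)}\subset L^2(\mathbb C,\lambda_{\eta,\sigma})$.

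For the density formula \eqref{bhftrdrs465u} I would start from \eqref{xdsresar5} and recognize the radial integral as the standard Bessel representation $\int_0^\infty t^{\nu-1}\exp\!\big(-\tfrac{\beta}{t}-\gamma t\big)\,dt=2(\beta/\gamma)^{\nu/2}K_\nu\big(2\sqrt{\beta\gamma}\big)$, here with $\beta=|z|^2$, $\gamma=1/\eta$ and $\nu=\tfrac\sigma\eta-1$. Substituting and using $K_\nu=K_{-\nu}$ to replace $K_{\sigma/\eta-1}$ by $K_{1-\sigma/\eta}$, it only remains to collect the powers of $\eta$: the prefactor $\eta^{-\sigma/\eta}$ from \eqref{xdsresar5} combines with $\eta^{(\sigma/\eta-1)/2}$ from the Bessel formula to give $\eta^{-\frac12(1+\sigma/\eta)}$, while $2\sqrt{|z|^2/\eta}=2\eta^{-1/2}|z|$, producing exactly \eqref{bhftrdrs465u}.

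The computations above are routine; the point requiring genuine care is the concrete identification in the second paragraph, namely that abstract $L^2$-limits of polynomials are represented by the expected entire functions. This is precisely where the reproducing-kernel structure of ${\mathbb F}_{\eta,\sigma}(\mathbb C)$ (continuity of point evaluations, encoded in the finiteness of ${\mathbb K}$) is used to upgrade $L^2$-convergence to pointwise convergence and thereby pin down the limit.
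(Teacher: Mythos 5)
Your proof is correct and, for the moment computation, follows the same route as the paper: integrate $z^m\bar z^n$ first against the Gaussian $\nu_r$ and then against the gamma mixing law to obtain $\delta_{m,n}\,n!\,(\sigma\mid-\eta)_n$. The only differences are in what you prove versus what the paper outsources. For the gamma moment the paper invokes its Appendix formula \eqref{cfgdyjk} (a consequence of the normal-ordering machinery), whereas you evaluate $\int_{\R_+}r^n\,\mu_{\eta,\eta,\eta\sigma}(dr)=\eta^n\,\Gamma(n+\sigma/\eta)/\Gamma(\sigma/\eta)=(\sigma\mid-\eta)_n$ directly; both are fine, and your version is self-contained. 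For the density \eqref{bhftrdrs465u} the paper simply cites Asai's Theorem~3.1, while you derive it from the classical integral representation $\int_0^\infty t^{\nu-1}e^{-\beta/t-\gamma t}\,dt=2(\beta/\gamma)^{\nu/2}K_\nu(2\sqrt{\beta\gamma})$ together with $K_\nu=K_{-\nu}$; your bookkeeping of the powers of $\eta$ checks out. Finally, you make explicit the step the paper leaves implicit, namely that the abstract closure of $\mathcal P(\mathbb C)$ in $L^2(\mathbb C,\lambda_{\eta,\sigma})$ really consists of the entire functions with coefficients satisfying \eqref{vctresay5}: your use of the finiteness of the reproducing kernel to upgrade $L^2$-convergence of the partial sums to locally uniform convergence, and then matching the two limits along an a.e.-convergent subsequence, is exactly the right way to close that gap. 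In short, same skeleton, but you supply two substantive details that the paper delegates to references or to the reader.
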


\begin{proof} Recall that, for $m,n\in\mathbb N_0$, we have $\int_{\mathbb C}z^m\,\overline{z^n}\,\nu_r(dz)=\delta_{n,m}\,r^n\,n!$\,. By formula \eqref{cfgdyjk} in the Appendix, we get $\int_{\R_+}r^n \, \mu_{\alpha, \alpha, \sigma} (dr) = \left( \frac{\sigma}{\alpha} \mid -\alpha \right)_n $. Hence, by~\eqref{vctsaq4t}, 
\begin{align*}
\int_{\mathbb C} z^m \, \overline{z^n} \, \lambda_{\eta, \sigma} (dz) 
&= \int_{\R_+} \int_{\mathbb C} z^m \, \overline{z^n} \,  \nu_r (dz) \, \mu_{\eta,\eta,\, \eta\sigma} (dr) \\
&= \delta_{m,n}\,n! \int_{\R_+} r^n\, \mu_{\eta,\eta,\, \eta \sigma} (dr) = \delta_{m,n}\,n!  \left(\sigma \mid -\eta \right)_n . 
\end{align*} 
Formula \eqref{bhftrdrs465u} for the density $\Lambda_{\eta, \sigma}(z)$ of the measure $\lambda_{\eta, \sigma}$ was proved by Asai \cite[Theorem~3.1]{Asai1}.
\end{proof}

\begin{remark} In fact, $\lambda_{\eta, \sigma}$ is the unique probability measure on $\mathbb C$ which satisfies 
$$\int_{\mathbb C} z^m \, \overline{z^n} \, \lambda_{\eta, \sigma} (dz)=\delta_{m,n}\,n!  \left(\sigma \mid -\eta \right)_n,$$ see \cite[Theorem~3.1]{Asai1}. 

\end{remark}

Following \cite{Alpay}, let us recall some basic facts about the Mellin transform and the Mellin convolution.
Let $f : \R_+ \to \mathbb R$ be such that, for some interval $(a, b) \subset \mathbb R$, the function $f(r)  r^{c-1}$ is integrable on $\R_+$ for all $c \in (a,b)$. Then the Mellin transform of $f$ is defined by $\mathcal M (f)(c) = \int_{\R_+} r^{c-1}  f(r) \, dr$ for $c \in (a,b)$.
Obviously, for $\eta > 0$ and $f(r) = e^{-r/\eta}$, we have $\mathcal M (f)(c)= \eta^c\,  \Gamma(c) $ for $c>0$.
The Mellin convolution of functions $f$ and $g$ is the function $f \ast g$ that satisfies 
$ \mathcal M (f \ast g) (c) = \mathcal M (f)(c)\,\mathcal M (g)(c)$.
Explicitly, the function $f \ast g$ is given by 
\begin{equation}
	(f \ast g) (r) = \int_{\R_+} f \left( \frac{r}{t} \right)  g(t) \, \frac{1}{t} \, dt = \int_{\R_+} f(t)   g \left( \frac{r}{t} \right) \frac{1}{t} \, dt, \quad r>0. \label{875tgh} 
						       \end{equation}
\begin{lemma} \label{87654fh}
Assume that $\eta=\sigma $. Then the function $\Lambda_{\sigma, \sigma}$ in Proposition~\ref{ytd6d6de6} has the form
$ \Lambda_{\sigma, \sigma} (z) =(\pi \sigma)^{-1}  \psi(\lvert z \rvert^2)$,
where $ \psi (r) = (f_1 \ast f_2)(r)$
with $f_1 (r) = e^{-r}$ and $f_2 (r) = e^{-r/\sigma}$. 
\end{lemma}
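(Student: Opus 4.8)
The plan is to prove the identity by a direct substitution into the integral representation~\eqref{xdsresar5}, followed by recognition of the Mellin convolution structure; no deeper machinery is needed. First I would specialize~\eqref{xdsresar5} to the case $\eta=\sigma$. Since then $\sigma/\eta=1$, the prefactor simplifies: $\Gamma(\sigma/\eta)=\Gamma(1)=1$, the factor $(1/\eta)^{\sigma/\eta}$ equals $1/\sigma$, and the power $r^{-2+\sigma/\eta}$ becomes $r^{-1}$. Thus~\eqref{xdsresar5} reduces to
\[
\Lambda_{\sigma,\sigma}(z)=\frac{1}{\pi\sigma}\int_{\R_+}\exp\Big(-\frac{|z|^2}{r}-\frac r\sigma\Big)\,\frac1r\,dr.
\]

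Next I would compute the Mellin convolution $\psi=f_1\ast f_2$ explicitly from~\eqref{875tgh}. Taking the first of the two equivalent expressions there with $f=f_1$ and $g=f_2$, and inserting $f_1(r)=e^{-r}$ and $f_2(r)=e^{-r/\sigma}$, one obtains
\[
\psi(r)=(f_1\ast f_2)(r)=\int_{\R_+}f_1\Big(\frac rt\Big)\,f_2(t)\,\frac1t\,dt=\int_{\R_+}\exp\Big(-\frac rt-\frac t\sigma\Big)\,\frac1t\,dt.
\]
Evaluating at $r=|z|^2$ and relabelling the integration variable $t$ as $r$, I see that $\psi(|z|^2)$ coincides exactly with the integral appearing in the reduced formula for $\Lambda_{\sigma,\sigma}(z)$. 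Hence $\Lambda_{\sigma,\sigma}(z)=(\pi\sigma)^{-1}\psi(|z|^2)$, which is the assertion.

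There is essentially no obstacle here, as the proof is a short verification. The only point requiring a moment's care is to track the exponent: one must confirm that $-2+\sigma/\eta$ becomes $-1$ (and not $-2$) when $\eta=\sigma$, so that the integrand carries the weight $r^{-1}$ matching the factor $1/t$ built into the Mellin convolution~\eqref{875tgh}. As an independent consistency check, one may verify on the Mellin side that $\mathcal M(f_1)(c)=\Gamma(c)$ and $\mathcal M(f_2)(c)=\sigma^c\,\Gamma(c)$, so that $\mathcal M(\psi)(c)=\sigma^c\,\Gamma(c)^2$, which is indeed the Mellin transform of the function $\psi$ computed above.
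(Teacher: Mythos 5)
Your proof is correct and is exactly the verification the paper has in mind: the paper's own proof simply says ``Immediate by formulas \eqref{xdsresar5} and \eqref{875tgh},'' and your write-up carries out precisely that substitution (specializing \eqref{xdsresar5} to $\eta=\sigma$ and matching it to the Mellin convolution integral in \eqref{875tgh}). The Mellin-side consistency check is a nice extra but not needed.
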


\begin{proof}
Immediate by formulas \eqref{xdsresar5} and \eqref{875tgh}.
\end{proof}

\begin{remark}
In the special case $\eta=\sigma = 1$, the statement of Lemma~\ref{87654fh} was proved in 
 in \cite{Alpay-8, Alpay}. 
  By \cite[p.~5]{Alpay}, $\psi(r)= \int_{\mathbb R} \exp(- \sqrt r \, 2 \cosh (x)) \, dx$ is a modified Bessel function of the second kind. 
\end{remark}

Let $\varphi: \mathbb C \rightarrow \mathbb C$ be an entire function. One says that $\varphi$ is of order at most 1  and minimal type (when the order is equal to 1) if $\varphi$ satisfies  
$$ \sup_{z \in \mathbb C} |\varphi(z)| \exp(-t \, |z|) < \infty\quad \forall t>0.$$
One denotes by ${\mathcal E}_{\mathrm{min}}^1 (\mathbb C)$ the vector space of all such functions.

For each  $t > 0$, $\lVert \varphi \rVert_t = \sup_{z \in \mathbb C} |\varphi(z)| \exp(-t \, |z|)$ is a norm  on $ {\mathcal E}_{\mathrm{min}}^1 (\mathbb C)$, and denote by $B_t$ the completion of $ {\mathcal E}_{\mathrm{min}}^1 (\mathbb C)$ in this norm. For any $0<t_1<t_2$, the Banach space $B_{t_1}$ is continuously embedded into $B_{t_2}$. Note that, as a set, $ {\mathcal E}_{\mathrm{min}}^1 (\mathbb C) = \bigcap_{t>0} B_{t}$. One defines the projective topology on ${\mathcal E}_{\mathrm{min}}^1 (\mathbb C)$ induced by the $B_t$ spaces, i.e., one chooses the coarsest locally convex topology on ${\mathcal E}_{\mathrm{min}}^1 (\mathbb C)$ for which the embedding of ${\mathcal E}_{\mathrm{min}}^1 (\mathbb C)$ into  $B_{t}$ is continuous for each $t > 0$. Equipped with this topology, $ {\mathcal E}_{\mathrm{min}}^1 (\mathbb C)$ is a  Fr\'echet space. The following theorem is proved by Grabiner \cite{Grabiner}, see also \cite{Sheffer}.

\begin{theorem}[\cite{Grabiner}]\label{bvgytdddsesea}
Let $(s_n)_{n=0}^\infty$ be a Sheffer sequence with generating function \eqref{vcxresayw}. Assume that the formal power series $A(t)$ and $B(t)$ in~\eqref{vcxresayw} determine analytic functions in a neighborhood of zero. Then the following statements hold.

(i)  An entire function  $\varphi:\mathbb C\to\mathbb C$ belongs to $ {\mathcal E}_{\mathrm{min}}^1 (\mathbb C)$ if and only if it can be represented in the form 
\begin{equation} \label{uniqueRep}
\varphi(z) = \sum_{n=0}^\infty f_n \, s_n (z) ,
\end{equation}
where $\sum_{n=0}^\infty |f_n|^2 \, (n!)^{2} \, 2^{nk} < \infty$ for all $k \in \mathbb N$. The representation of the function $\varphi$ as in \eqref{uniqueRep} is unique, and the series on the right-hand side of formula \eqref{uniqueRep} converges in $ {\mathcal E}_{\mathrm{min}}^1(\mathbb C)$.

(ii) For each $k \in \mathbb N$, denote by $\mathcal H_{k}$ the completion of $ {\mathcal E}_{\mathrm{min}}^1(\mathbb C)$ in the Hilbertian norm ${\vertiii{\varphi}}_{k} = \left( \sum_{n=0}^\infty |f_n|^2 \, (n!)^2 \, 2^{nk} \right)^{1/2}$,  where $f_n$ ($n \in {\mathbb N}_0$) are the coefficient from \eqref{uniqueRep}. Then, ${\mathcal E}_{\mathrm{min}}^1(\mathbb C)$ is the projective limit of the $\mathcal H_k$ spaces. 

\end{theorem}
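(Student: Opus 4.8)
The plan is to reduce the general Sheffer case to the special case of the monomial sequence $s_n(z)=z^n$ (which corresponds to $A\equiv 0$, $B(t)=t$), and then to transfer the result along the triangular change of basis between $(z^n)_{n\ge0}$ and $(s_n)_{n\ge0}$, controlling the connection coefficients by the analyticity of $A$ and $B$ near the origin. Throughout I work with the two families of Hilbertian norms $\vertiii{\varphi}_k=(\sum_n|f_n|^2(n!)^2 2^{nk})^{1/2}$ (Sheffer coefficients) and $\|\varphi\|_k^{\mathrm{Tay}}=(\sum_n|a_n|^2(n!)^2 2^{nk})^{1/2}$ (Taylor coefficients, $\varphi(z)=\sum_n a_n z^n$), and compare both with the defining seminorms $\|\varphi\|_t=\sup_z|\varphi(z)|e^{-t|z|}$ of $\mathcal E_{\mathrm{min}}^1(\mathbb C)$.

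First the base case $s_n(z)=z^n$. On one side, the Cauchy estimate on $|z|=R$ optimised at $R=n/t$ gives $|a_n|\le\|\varphi\|_t\,(et/n)^n$; combined with $n!\le e\sqrt n\,(n/e)^n$ this yields $|a_n|^2(n!)^2 2^{nk}\le C\,n\,(2^k t^2)^n$, which is summable as soon as $t<2^{-k/2}$, so each $\|\cdot\|_k^{\mathrm{Tay}}$ is dominated by a suitable $\|\cdot\|_t$. On the other side, Cauchy--Schwarz gives $|\varphi(z)|\le\|\varphi\|^{\mathrm{Tay}}_{k'}\big(\sum_n (|z|^2 2^{-k'})^n/(n!)^2\big)^{1/2}$, and the inner sum equals the Bessel-type function $I_0\big(2\cdot 2^{-k'/2}|z|\big)$, whose exponential growth rate $2\cdot2^{-k'/2}\to0$ as $k'\to\infty$; hence each $\|\cdot\|_t$ is dominated by some $\|\cdot\|_{k'}^{\mathrm{Tay}}$. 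Thus the two families induce the same Fréchet topology, which is the assertion of the theorem for $s_n(z)=z^n$ (uniqueness being trivial for Taylor series, and convergence in $\mathcal E_{\mathrm{min}}^1(\mathbb C)$ following from the norm comparison).

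Next the transfer. Writing $s_n(z)=\sum_{m\le n}c_{n,m}z^m$ and $z^m=\sum_{n\le m}d_{m,n}s_n(z)$, I would read off both families of connection coefficients from \eqref{vcxresayw}. Extracting the coefficient of $x^m$ gives $\sum_n \frac{t^n}{n!}c_{n,m}=e^{A(t)}B(t)^m/m!$, while substituting $t=\bar B(u)$, the compositional inverse of $B$ (analytic near $0$ since $B'(0)=1$), yields $x^m/m!=\sum_n\gamma_{n,m}s_n(x)$ with $\sum_m\gamma_{n,m}u^m=e^{-A(\bar B(u))}\bar B(u)^n/n!$, so that $d_{m,n}=m!\,\gamma_{n,m}$. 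Since $A$ and $B$ are analytic near $0$ and $\bar B(u)=u+O(u^2)$, a Cauchy estimate on a small circle $|u|=\rho$ gives $|d_{m,n}|\le\frac{m!}{n!}\,G_\rho\,M_\rho^n\,\rho^{-m}$ with $M_\rho=\sup_{|u|=\rho}|\bar B(u)|\to0$ as $\rho\to0$, and symmetrically $|c_{n,m}|\le\frac{n!}{m!}\,\widetilde G_r\,\widetilde M_r^m\,r^{-n}$. To bound $\vertiii{\varphi}_k$ by some $\|\varphi\|^{\mathrm{Tay}}_{k'}$, fix $k$, choose $\rho$ so small that $q:=M_\rho 2^{k/2}<1$, and estimate $f_n=\sum_{m\ge n}a_m d_{m,n}$ by Cauchy--Schwarz as $|f_n|\le\frac{G_\rho M_\rho^n}{n!}\,C_1\,\|\varphi\|^{\mathrm{Tay}}_{k'}$, where $k'$ is taken large enough (depending on $\rho$) that $\sum_m 2^{-mk'}\rho^{-2m}<\infty$; summing the geometric factor $q^{2n}$ then gives $\vertiii{\varphi}_k\le C\,\|\varphi\|^{\mathrm{Tay}}_{k'}$. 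The reverse inequality follows in the same way from the estimate on $c_{n,m}$. Combined with the base case, this shows that the Sheffer norms and the Taylor norms generate the same topology, which is exactly (ii); part (i), including convergence of the series in $\mathcal E_{\mathrm{min}}^1(\mathbb C)$ and the uniqueness of the representation (the latter from the triangular, degree-preserving relation between the two bases), follows at once.

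The main obstacle is the nested quantifier structure of these estimates: the geometric decay $q^n$ with $q<1$ needed to beat the weight $2^{nk/2}$ forces the radius $\rho$ to depend on the target index $k$, and convergence of the auxiliary series then forces the source index $k'$ to depend on $\rho$. One must make these choices in the correct order --- first $k$, then $\rho$, then $k'$ --- which is precisely what the projective-limit topology accommodates. The analyticity hypothesis on $A$ and $B$ enters exactly where it guarantees $M_\rho\to0$, i.e.\ the off-diagonal decay of the connection coefficients that drives the whole argument.
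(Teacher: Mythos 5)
The paper does not prove this theorem: it is quoted verbatim from Grabiner \cite{Grabiner} (see also \cite{Sheffer}), so there is no internal proof to compare yours against. Judged on its own, your argument is correct and is the standard route to this result: establish the equivalence of the sup-norm family $\lVert\cdot\rVert_t$ with the Taylor-coefficient Hilbertian norms via Cauchy estimates and Stirling's formula, then transfer to the Sheffer basis by bounding the connection coefficients $c_{n,m}$ and $d_{m,n}$ through Cauchy estimates on the generating functions $e^{A(t)}B(t)^m/m!$ and $e^{-A(\bar B(u))}\bar B(u)^n/n!$, with the analyticity of $A$, $B$ (and hence of $\bar B$) supplying the decisive decay $M_\rho\to0$; your ordering of the quantifiers ($k$, then $\rho$, then $k'$) is exactly what makes the nested estimates close. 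Two small points deserve to be made explicit. First, to identify $\sum_n f_n s_n(z)$ with $\varphi(z)$ pointwise (the ``only if'' direction of (i)) you need to interchange a double sum, which requires a growth bound on $s_n(z)$ itself; this comes from the same generating function, $\lvert s_n(z)\rvert\le n!\,r^{-n}\sup_{\lvert t\rvert=r}\lvert e^{A(t)}\rvert\,e^{\widetilde M_r\lvert z\rvert}$, and then absolute convergence of the double sum holds once $t<\rho$ and $M_\rho<r$, which is available since $\varphi$ satisfies $\lVert\varphi\rVert_t<\infty$ for every $t>0$. Second, for (ii) you should note that $\vertiii{\cdot}_k$ is positive definite on $\mathcal E_{\mathrm{min}}^1(\mathbb C)$ (by the uniqueness in (i)), so that the abstract completion $\mathcal H_k$ is identified with the weighted $\ell^2$ sequence space and $\bigcap_k\mathcal H_k$ is then exactly the set described in (i); with that, the equality of the two projective limits as topological vector spaces follows from your norm comparisons. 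Neither point is a gap in the method, only in the write-up.
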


\begin{corollary} (i) For each $\eta\ge0$ and $\sigma>0$, the Fr\'echet space ${\mathcal E}_{\mathrm{min}}^1(\mathbb C)$ is continuously embedded into ${\mathbb F}_{ \eta, \sigma} (\mathbb C)$.

(ii) Let $(s_n)_{n=0}^\infty$ be an orthogonal Sheffer sequence and let $\mu_{\alpha,\beta,\sigma}$ be its orthogonality measure. Then the Fr\'echet space ${\mathcal E}_{\mathrm{min}}^1(\mathbb C)$ is continuously embedded into $L^2(\mu_{\alpha,\beta,\sigma})$. Furthermore, ${\mathcal E}_{\mathrm{min}}^1(\mathbb C)$ is a dense subset of  $L^2(\mu_{\alpha,\beta,\sigma})$.

\end{corollary}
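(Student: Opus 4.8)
The plan is to reduce both parts to a single elementary comparison between the weight sequence $n!\,(\sigma\mid-\eta)_n$, which governs the norm of ${\mathbb F}_{\eta,\sigma}(\mathbb C)$ and (as I will check) of $L^2(\mu_{\alpha,\beta,\sigma})$, and the Grabiner weights $(n!)^2\,2^{nk}$ from Theorem~\ref{bvgytdddsesea}. Writing $(\sigma\mid-\eta)_n=\prod_{j=0}^{n-1}(\sigma+j\eta)$, one has for $\eta>0$
\[
\frac{(\sigma\mid-\eta)_n}{n!}=\eta^n\,\frac{\Gamma(n+\frac\sigma\eta)}{\Gamma(\frac\sigma\eta)\,\Gamma(n+1)}\le C\,\eta^n\,n^{\frac\sigma\eta},
\]
by a Stirling-type bound on the Gamma quotient, while for $\eta=0$ the ratio is $\sigma^n/n!$. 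Fixing any $k\in\mathbb N$ with $2^k>\eta$, I would deduce the key estimate
\[
n!\,(\sigma\mid-\eta)_n\le C'\,(n!)^2\,2^{nk}\qquad(n\in\mathbb N_0)
\]
for a suitable constant $C'=C'(\eta,\sigma,k)$. Everything else is bookkeeping with Theorem~\ref{bvgytdddsesea}.

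For part (i) I would apply Theorem~\ref{bvgytdddsesea} to the monomial Sheffer sequence $s_n(z)=z^n$ (here $A\equiv0$, $B(t)=t$ are entire, so its hypotheses hold), for which the coefficients $f_n$ in \eqref{uniqueRep} are just the Taylor coefficients of $\varphi$. If $\varphi\in{\mathcal E}_{\mathrm{min}}^1(\mathbb C)$, then $\vertiii{\varphi}_k^2=\sum_n|f_n|^2(n!)^2 2^{nk}<\infty$ for every $k$, and the key estimate for one admissible $k$ gives
\[
\|\varphi\|_{{\mathbb F}_{\eta,\sigma}(\mathbb C)}^2=\sum_{n=0}^\infty|f_n|^2\,n!\,(\sigma\mid-\eta)_n\le C'\,\vertiii{\varphi}_k^2<\infty .
\]
Thus $\varphi\in{\mathbb F}_{\eta,\sigma}(\mathbb C)$ and $\|\cdot\|_{{\mathbb F}_{\eta,\sigma}(\mathbb C)}\le (C')^{1/2}\vertiii{\cdot}_k$. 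Since, by Theorem~\ref{bvgytdddsesea}(ii), the topology of ${\mathcal E}_{\mathrm{min}}^1(\mathbb C)$ is the projective limit of the increasing Hilbertian norms $\vertiii{\cdot}_k$, domination by a single such norm is exactly continuity of the embedding.

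For part (ii) the same scheme runs with $(s_n)$ the orthogonal Sheffer sequence, whose generating series $A,B$ are analytic near $0$ (a feature of the Meixner classification), so Theorem~\ref{bvgytdddsesea} applies. I would first record that the recurrence \eqref{545-980+polk}, with $c_n=\sigma n+\eta n(n-1)=n\big(\sigma+\eta(n-1)\big)$, yields through the standard norm relation $\|s_n\|^2=c_1\cdots c_n\,\|s_0\|^2$ for monic orthogonal polynomials (and $\|s_0\|^2=1$, as $\mu_{\alpha,\beta,\sigma}$ is a probability measure) the identity $\|s_n\|_{L^2(\mu_{\alpha,\beta,\sigma})}^2=n!\,(\sigma\mid-\eta)_n$; this is consistent with unitarity of $\mathbb S$. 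Given $\varphi=\sum_n f_n s_n\in{\mathcal E}_{\mathrm{min}}^1(\mathbb C)$, the series converges in ${\mathcal E}_{\mathrm{min}}^1(\mathbb C)$, hence in $L^2(\mu_{\alpha,\beta,\sigma})$ once the embedding is established, and by orthogonality $\|\varphi\|_{L^2(\mu_{\alpha,\beta,\sigma})}^2=\sum_n|f_n|^2 n!\,(\sigma\mid-\eta)_n\le C'\vertiii{\varphi}_k^2$, giving continuity as before. That the $L^2$-limit coincides $\mu$-a.e.\ with the entire function $\varphi$ follows because convergence in ${\mathcal E}_{\mathrm{min}}^1(\mathbb C)$ is uniform on compacts, so a subsequence converges pointwise on $\mathbb R$.

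Finally, for density I would observe that every polynomial belongs to ${\mathcal E}_{\mathrm{min}}^1(\mathbb C)$ (polynomials have order $0$), so $\mathcal P(\mathbb C)\subseteq{\mathcal E}_{\mathrm{min}}^1(\mathbb C)\subseteq L^2(\mu_{\alpha,\beta,\sigma})$, and it suffices that $\mathcal P(\mathbb C)$ be dense in $L^2(\mu_{\alpha,\beta,\sigma})$. I expect this last point to be the only genuine obstacle: it amounts to the determinacy of the gamma, negative binomial, and Meixner moment problems, which I would settle uniformly by verifying that each $\mu_{\alpha,\beta,\sigma}$ has a finite exponential moment $\int e^{c|x|}\,\mu_{\alpha,\beta,\sigma}(dx)<\infty$ for some $c>0$, a standard sufficient condition for density of polynomials in $L^2$.
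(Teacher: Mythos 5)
Your proposal is correct and follows essentially the same route as the paper: both parts reduce to Grabiner's theorem together with the weight comparison $n!\,(\sigma\mid-\eta)_n\le C'(n!)^2 2^{nk}$ (the paper gets this from $(\sigma\mid-\eta)_n\le n!\,(\max\{\eta,\sigma\})^n$, stated there with a typographical $\min$, while your Gamma-quotient bound serves the same purpose), and both use $\|s_n\|_{L^2(\mu_{\alpha,\beta,\sigma})}^2=n!\,(\sigma\mid-\eta)_n$ from the recurrence. For density the paper simply invokes that $(s_n)$ is an orthogonal basis of $L^2(\mu_{\alpha,\beta,\sigma})$; your explicit justification via exponential moments and determinacy of the moment problem supplies the detail the paper leaves implicit.
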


\begin{proof}(i) The sequence of monomials $(z^n)_{n=0}^\infty$ is a Sheffer sequence for which $A(t)=0$ and $B(t)=t$, hence it satisfies the conditions of Theorem~\ref{bvgytdddsesea}. Therefore, the statement follows from the definition of ${\mathbb F}_{ \eta, \sigma} (\mathbb C)$ and Theorem~\ref{bvgytdddsesea}. 

(ii) It follows from \cite{13-Meixner} that each orthogonal Sheffer sequence satisfies the conditions of Theorem~\ref{bvgytdddsesea}. Next, it follows from the recurrence formula \eqref{545-980+polk} that $\|s_n\|_{L^2(\mu_{\alpha,\beta,\sigma})}^2=n!\,(\sigma\mid-\eta)_n$. Hence, $\varphi\in L^2(\mu_{\alpha,\beta,\sigma})$ if and only if $\varphi(x)=\sum_{n=0}^\infty f_n s_n(x)$ with $f_n$ sa\-tis\-fying \eqref{vctresay5}, and the series $\sum_{n=0}^\infty f_n s_n(x)$ converges in $L^2(\mu_{\alpha,\beta,\sigma})$. Since $(\sigma\mid-\eta)_n\le n!\, (\min\{\eta,\sigma\})^n$, the statement   follows from Theorem~\ref{bvgytdddsesea}.
\end{proof}

\section{Main results}\label{crtds6ewu65e3}
Let $\sigma>0$. We assume that either $\alpha\ge\beta>0$ and $l=\sigma/\alpha$ or  $\Re(\alpha)\ge 0$, $\Im(\alpha)>0$, $\beta=\overline \alpha$, and $l=0$. Let $(s_n)_{n=0}^\infty$ be the Sheffer sequence satisfying the recurrence formula~\eqref{545-980+polk}, and let $\mu_{\alpha,\beta,\sigma}$ be its orthogonality measure. We denote by $X_{\alpha,\beta}$ the support of $\mu_{\alpha,\beta,\sigma}$, i.e., $X_{\alpha,\beta}=\R_+$ if $\alpha=\beta>0$, $X_{\alpha,\beta}=(\alpha-\beta)\mathbb N_0$ if $\alpha>\beta>0$, and $X_{\alpha,\beta}=\R$ if $\Re(\alpha)\ge0$, $\Im(\alpha)>0$, $\beta=\overline \alpha$. 

We define a generalized Segal--Bargmann transform $ \mathbb S:L^2(X_{\alpha,\beta},\mu_{\alpha,\beta,\sigma})\to\mathbb F_{\eta,\sigma}(\mathbb C)$ as a unitary operator satisfying $(\mathbb Ss_n)(z)=z^n$ for $n\in\mathbb N_0$. 

\begin{theorem}\label{vcfxsra5u6e756}
 The generalized Segal--Bargmann transform $ \mathbb S$ has a representation through 
the nonlinear coherent states 		\begin{equation} 	
\mathbb  E (x, z) = \sum_{n=0}^\infty \frac{z^n}{n!\,(\sigma\mid-\eta)_n} \, s_n (x), \quad x\in X_{\alpha,\beta},\ z\in\mathbb C,\label{ydtyrdsj6ej}
\end{equation}
i.e., $\mathbb E(\cdot,z)\in L^2(X_{\alpha,\beta},\mu_{\alpha,\beta,\sigma})$ for each $z\in\mathbb C$ and 
\begin{equation}\label{ctstesa5a5}
(\mathbb S f)(z) =\int_{X_{\alpha,\beta}} f(x)\,\mathbb  E(x, z) \,\mu_{\alpha,\beta,\sigma}(dx),\quad z\in\mathbb C.\end{equation}
Furthermore, if $\alpha=\beta>0$,
\begin{equation}\label{cfsersara5s}
\mathbb  E(x, z) = \int_{\mathbb N_0}\big[(\sigma/\eta)^{(\xi)}\big]^{-1}
\bigg(\frac x\alpha\bigg)^\xi \, \pi_{\frac z \alpha } (d\xi),\quad x\in\R_+,\ z\in\mathbb C,
\end{equation}
if $\alpha>\beta>0$, 
\begin{equation}\label{ctsxtersrtesaeas}
\mathbb  E((\alpha-\beta)n, z)=\int_{\mathbb N_0}{\left( 1 - \frac{\beta}{\alpha} \right)}^{\xi} \, \frac{(\xi\eta + \sigma \mid - \eta)_n}{(\sigma \mid - \eta)_n}\,\pi_{\frac z\beta} (d\xi),\quad n\in\mathbb N_0,\ z\in\mathbb C,
\end{equation}
and if $\Re(\alpha)\ge0$, $\Im(\alpha)>0$, $\beta=\overline \alpha$,
\begin{align}
 \mathbb  E(x, z) &= \int_{\mathbb N_0}\left( 2\cos\big(\frac\pi2 - \operatorname{Arg}(\alpha) \big) \right)^{\xi} \left((\sigma/\eta)^{(\xi)}\right)^{-1}  \exp\left( i(\frac\pi2-\operatorname{Arg}(\alpha))\xi \right)\notag \\
&\quad\times  \left(-\frac{ix}{2 \Im(\alpha)} -\frac{i\sigma\alpha}{2\eta \Im(\alpha)}\right)^{(\xi)}\,\pi_{\frac z\beta} (d\xi),\quad x\in\R,\ z\in\mathbb C,\label{xawawawefljkreiojfu}
\end{align}
where $ \operatorname{Arg}(\alpha)\in[0,\pi/2)$.
\end{theorem}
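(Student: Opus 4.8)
The plan is to dispatch the three structural assertions first and then reduce the three explicit formulas to a single evaluation of $\mathcal E(x,\beta\xi)$ in each class. For the well-definedness of $\mathbb S$, I would invoke the corollary following Theorem~\ref{bvgytdddsesea}, by which $\|s_n\|_{L^2(\mu_{\alpha,\beta,\sigma})}^2=n!\,(\sigma\mid-\eta)_n=\|z^n\|_{\mathbb F_{\eta,\sigma}(\mathbb C)}^2$; thus $\mathbb S$ carries one orthogonal basis onto another with matching norms and extends to a unitary operator. To see that $\mathbb E(\cdot,z)\in L^2$, I would compute $\|\mathbb E(\cdot,z)\|^2=\sum_{n=0}^\infty |z|^{2n}/(n!\,(\sigma\mid-\eta)_n)=\mathbb K(z,z)$, finite because $n!\,(\sigma\mid-\eta)_n=n!\,\eta^n(\sigma/\eta)^{(n)}$ grows super-exponentially. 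For~\eqref{ctstesa5a5} I would expand $f=\sum_m f_m s_m$, use that the $s_n$ are real and orthogonal, and interchange integration and summation by Cauchy--Schwarz (both $f$ and the partial sums of $\mathbb E(\cdot,z)$ lie in $L^2$), obtaining $\int f\,\mathbb E(\cdot,z)\,d\mu_{\alpha,\beta,\sigma}=\sum_n f_n z^n=(\mathbb Sf)(z)$.

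The three explicit formulas all issue from the single identity $\mathbb E(x,z)=\int_{\mathbb N_0}\mathcal E(x,\beta\xi)\,\pi_{z/\beta}(d\xi)$, where $\mathcal E(x,\beta\xi)=\sum_n \beta^n(\xi)_n s_n(x)/(n!\,(\sigma\mid-\eta)_n)$. I would prove this identity by writing the right-hand side as $e^{-z/\beta}\sum_\xi (z/\beta)^\xi\mathcal E(x,\beta\xi)/\xi!$, interchanging the $\xi$- and $n$-sums (legitimate since the double series converges absolutely in $L^2(\mu_{\alpha,\beta,\sigma})$), and applying $\sum_{\xi\ge n}(z/\beta)^\xi(\xi)_n/\xi!=(z/\beta)^n e^{z/\beta}$, which returns~\eqref{ydtyrdsj6ej}. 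This reduces the theorem to evaluating $\mathcal E(x,\beta\xi)$, a polynomial of degree $\xi$ in $x$ (the falling factorial $(\xi)_n$ truncates the sum at $n=\xi$); here it is convenient to record $\beta^n/(\sigma\mid-\eta)_n=\alpha^{-n}/(\sigma/\eta)^{(n)}$, so that $\mathcal E(x,\beta\xi)=\sum_{n=0}^\xi \binom{\xi}{n}\alpha^{-n}s_n(x)/(\sigma/\eta)^{(n)}$.

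The remaining work is the case-by-case evaluation, drawing on the explicit formulas for $s_n$ from the Appendix. In the gamma case I would use the expansion of the monomial $x^\xi$ in the monic Laguerre polynomials $s_n$ to recognize that $\mathcal E(x,\alpha\xi)$ collapses to $[(\sigma/\eta)^{(\xi)}]^{-1}(x/\alpha)^\xi$, giving~\eqref{cfsersara5s}. In the negative binomial case I would evaluate $\mathcal E$ at the support points $x=(\alpha-\beta)m$, substitute the closed-form values of the monic Meixner polynomials of the first kind there, and sum the resulting finite series by a Chu--Vandermonde identity to obtain the ratio $(\xi\eta+\sigma\mid-\eta)_m/(\sigma\mid-\eta)_m$, which is~\eqref{ctsxtersrtesaeas}. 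In the Meixner second-kind case I would insert the ${}_2F_1$ representation of the Meixner--Pollaczak polynomials, match the parameters through $\operatorname{Arg}(\alpha)$ and $\Im(\alpha)$, and sum the resulting hypergeometric series into the rising factorial $(-ix/(2\Im(\alpha))-i\sigma\alpha/(2\eta\Im(\alpha)))^{(\xi)}$, yielding~\eqref{xawawawefljkreiojfu}.

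The hard part will be the Meixner second-kind computation: because $\alpha$ and $\beta=\overline\alpha$ are genuinely complex, I must track the phase $\exp(i(\tfrac\pi2-\operatorname{Arg}(\alpha))\xi)$ and the factor $(2\cos(\tfrac\pi2-\operatorname{Arg}(\alpha)))^\xi$ carefully through the hypergeometric summation, and confirm that the resulting identity, initially valid near $0$, extends to all $z\in\mathbb C$ by analyticity. The Chu--Vandermonde step in the negative binomial case is the second most delicate point, whereas the gamma case is essentially immediate once the monomial expansion from the Appendix is in hand.
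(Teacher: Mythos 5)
Your proposal is correct in outline and shares the paper's skeleton: both arguments reduce the theorem to the identity $\mathbb E(x,z)=\int_{\mathbb N_0}\mathcal E(x,\beta\xi)\,\pi_{z/\beta}(d\xi)$ via $\int_{\mathbb N_0}(\xi)_n\,\pi_w(d\xi)=w^n$ (the paper's formula \eqref{234hgjk}, packaged there as the factorization $\mathbb S=\mathbb T\mathcal S$ together with Lemma~\ref{ftsxrearwa}), and your absolute-convergence estimate for the sum interchange is essentially the paper's bound \eqref{vrts5}. Where you genuinely diverge is in evaluating $\mathcal E(x,\beta\xi)$. The paper never expands $s_n$ in hypergeometric form: it proves (Lemma~\ref{xfsxersarewa4w}, via the normal-ordering moment formula \eqref{cfgdyjk} and analytic continuation in the parameter $\sigma$) that $(\mathcal Sf)(z)=\int f\,d\mu_{\alpha,\beta,\alpha z+\sigma}$, so that $\mathcal E(\cdot,z)$ is the Radon--Nikodym derivative $d\mu_{\alpha,\beta,\alpha z+\sigma}/d\mu_{\alpha,\beta,\sigma}$ and the closed forms \eqref{sqr45op}--\eqref{fdtrs5ea43qq} are ratios of explicit densities; setting $z=\beta\xi$ then yields the integrands of \eqref{cfsersara5s}--\eqref{xawawawefljkreiojfu} with no further summation. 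You instead evaluate the finite sum $\mathcal E(x,\beta\xi)=\sum_{n=0}^{\xi}\binom{\xi}{n}\alpha^{-n}\big[(\sigma/\eta)^{(n)}\big]^{-1}s_n(x)$ directly from classical explicit formulas for the Laguerre, Meixner and Meixner--Pollaczak polynomials plus Chu--Vandermonde/hypergeometric summations. This is a legitimate alternative and has the advantage of needing $\mathcal E$ only on the lattice $\beta\mathbb N_0$ rather than on all of $\mathcal D_{\alpha,\beta,\sigma}$; the cost is that the three heaviest computations remain sketches resting on identities not contained in the paper (the Appendix supplies Stirling/Lah expansions, not the hypergeometric forms), whereas the paper's route gives a uniform density-ratio interpretation and is reused for Theorems~\ref{cdszrea5r4w35} and~\ref{xtxstesar5ewa54aqy}. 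Two minor points: your handling of the structural assertions ($\mathbb E(\cdot,z)\in L^2$ by computing $\|\mathbb E(\cdot,z)\|^2=\mathbb K(z,z)$, and \eqref{ctstesa5a5} by Parseval) is sound and in fact more direct than the paper's, which recovers $\mathbb E(\cdot,z)\in L^2$ from continuity of the evaluation functional; and the closing appeal to extension by analyticity in the Meixner case is superfluous, since for each fixed $\xi$ the required identity is a polynomial identity in $x$ of degree $\xi$, and the passage to all $z\in\mathbb C$ is already secured by the absolute-convergence estimate.
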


Let $\partial^+$ and $\partial^-$ denote the raising and lowering operators for the Sheffer sequence $(s_n)_{n=0}^\infty$, see \eqref{vgdtdsy6}.
 Denote $A^-=\sigma\partial^-+\eta\partial ^+(\partial^-)^2$.

\begin{corollary}\label{vcsara5a54w45ss} For any $p,q\in\mathcal P(\mathbb C)$, $(\partial^+p,q)_{L^2(\mu_{\alpha,\beta,\sigma})}=(p,A^-q)_{L^2(\mu_{\alpha,\beta,\sigma})}$ and the operator $A^-$ with domain $\mathcal P(\mathbb C)$ is closable in $L^2(\mu_{\alpha,\beta,\sigma})$.  Keep the notation $A^-$ for the closure of~$A^-$. Then, for each $z\in\mathbb C$,  $\mathbb E(\cdot,z)$ is an eigenvector of $A^-$ belonging to the eigenvalue $z$.

\end{corollary}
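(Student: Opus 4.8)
The plan is to reduce everything to the orthogonal basis $(s_n)_{n=0}^\infty$ and to exploit the explicit action of $A^-$ on it. Two ingredients established earlier in Section~\ref{8765rtg00vbo} are the workhorses: the squared-norm formula $\lVert s_n\rVert_{L^2(\mu_{\alpha,\beta,\sigma})}^2 = n!\,(\sigma\mid-\eta)_n=:\rho_n$ (recorded in the proof of the Corollary there), together with the defining relations \eqref{vgdtdsy6}. Using $\partial^- s_n = n s_{n-1}$, $(\partial^-)^2 s_n = n(n-1)s_{n-2}$, and $\partial^+ s_{n-2}=s_{n-1}$, a one-line computation gives
\[
A^- s_n = \big(\sigma\partial^- + \eta\,\partial^+(\partial^-)^2\big)s_n = \big(\sigma n + \eta\, n(n-1)\big)\,s_{n-1}.
\]
The crucial bookkeeping identity is that, since $(\sigma\mid-\eta)_n=\prod_{j=0}^{n-1}(\sigma+j\eta)$, one has $\rho_n/\rho_{n-1}=n\big(\sigma+(n-1)\eta\big)=\sigma n+\eta\,n(n-1)$; that is, the scalar produced by $A^-$ is exactly the ratio of consecutive squared norms.

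For the adjoint relation I would use bilinearity to reduce to basis vectors $p=s_m$, $q=s_n$. By orthogonality, $(\partial^+ s_m, s_n)_{L^2(\mu_{\alpha,\beta,\sigma})} = (s_{m+1},s_n)_{L^2(\mu_{\alpha,\beta,\sigma})}=\delta_{m+1,n}\,\rho_{m+1}$, whereas $(s_m, A^- s_n)_{L^2(\mu_{\alpha,\beta,\sigma})}=\big(\sigma n+\eta\, n(n-1)\big)\,\delta_{m,n-1}\,\rho_{n-1}$. Both sides vanish unless $n=m+1$, and in that case the norm-ratio identity $\rho_{m+1}=\big(\sigma(m+1)+\eta(m+1)m\big)\rho_m$ makes them coincide. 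This proves $(\partial^+ p,q)_{L^2(\mu_{\alpha,\beta,\sigma})}=(p,A^- q)_{L^2(\mu_{\alpha,\beta,\sigma})}$ for all $p,q\in\mathcal P(\mathbb C)$.

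Closability is then a standard consequence. The identity just proved says precisely that $\partial^+\subseteq (A^-)^*$, the Hilbert-space adjoint of the densely defined operator $A^-$. Since $\partial^+$ has domain $\mathcal P(\mathbb C)$, which is dense in $L^2(\mu_{\alpha,\beta,\sigma})$ by the Corollary in Section~\ref{8765rtg00vbo}, the adjoint $(A^-)^*$ is densely defined; and a densely defined operator whose adjoint is densely defined is closable. For the eigenvector claim I would argue through the closure directly rather than through the formal series. Set $\mathbb E_N(\cdot,z)=\sum_{n=0}^N \rho_n^{-1} z^n s_n\in\mathcal P(\mathbb C)$. By Theorem~\ref{vcfxsra5u6e756}, $\mathbb E(\cdot,z)\in L^2(\mu_{\alpha,\beta,\sigma})$, so $\mathbb E_N(\cdot,z)\to\mathbb E(\cdot,z)$ in $L^2$. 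Applying $A^-$ termwise and telescoping via the norm-ratio identity yields $A^-\mathbb E_N(\cdot,z)=\sum_{n=1}^N \rho_{n-1}^{-1}z^n s_{n-1}=z\,\mathbb E_{N-1}(\cdot,z)$, which converges to $z\,\mathbb E(\cdot,z)$ in $L^2$. Having both $\mathbb E_N(\cdot,z)\to\mathbb E(\cdot,z)$ and $A^-\mathbb E_N(\cdot,z)\to z\,\mathbb E(\cdot,z)$, the definition of the closure gives $\mathbb E(\cdot,z)\in\operatorname{Dom}(A^-)$ and $A^-\mathbb E(\cdot,z)=z\,\mathbb E(\cdot,z)$.

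The main obstacle is not computational but the legitimacy of the limiting step in the eigenvector part: it rests entirely on knowing a priori that $\mathbb E(\cdot,z)\in L^2(\mu_{\alpha,\beta,\sigma})$ for every $z\in\mathbb C$, including complex $z$ in the Meixner case. This is exactly what Theorem~\ref{vcfxsra5u6e756} supplies, and once it is in hand the convergence of $A^-\mathbb E_N(\cdot,z)$ is automatic, since by the telescoping it is merely $z$ times the partial sums of the already convergent expansion \eqref{ydtyrdsj6ej} of $\mathbb E(\cdot,z)$.
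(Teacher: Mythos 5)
Your proof is correct and follows essentially the same route as the paper: the adjoint identity via the recurrence/norm data $\lVert s_n\rVert^2=n!\,(\sigma\mid-\eta)_n$, closability from the densely defined adjoint, and the eigenvector claim via the partial sums $\mathbb E_N(\cdot,z)$ with $A^-\mathbb E_N(\cdot,z)=z\,\mathbb E_{N-1}(\cdot,z)$ and the definition of the closure. Your version merely spells out the basis computation and the ratio identity $\rho_n/\rho_{n-1}=\sigma n+\eta n(n-1)$ that the paper leaves implicit.
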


For $\alpha\ge\beta>0$ and $z\in\mathbb C$, we define a complex-valued measure $\rho_{\alpha,\beta,\sigma,z}$ on $X_{\alpha,\beta}$ by 
\begin{equation}\label{vr5y365re5e54}		
\rho_{\alpha,\beta,\sigma,z}(dx)=	\int_{\mathbb N_0}\,	\mu_{\alpha, \beta,\eta \xi + \sigma} (dx)\,\pi_{\frac z\beta } (d\xi).
\end{equation}
In particular, if  $z>0$, $\rho_{\alpha,\beta,\sigma,z}$ is the random measure $\mu_{\alpha, \beta,\eta \xi + \sigma} $, where the random variable $\xi$ has Poisson distribution $\pi_{\frac z\beta}$\,.

\begin{theorem}\label{cdszrea5r4w35}
Let $\alpha\ge\beta>0$. For each $f \in L^2 (X_{\alpha,\beta},\mu_{\alpha, \beta, \sigma})$,
\begin{equation}\label{vcdrt5e}
	(\mathbb Sf)(z)=\int_{X_{\alpha,\beta}} f(x)\,	\rho_{\alpha,\beta,\sigma,z}(dx),\quad z\in\mathbb C.
	\end{equation}
\end{theorem}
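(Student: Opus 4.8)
The plan is to base the entire argument on the single \emph{measure identity}
\[ \rho_{\alpha,\beta,\sigma,z}(dx)=\mathbb E(x,z)\,\mu_{\alpha,\beta,\sigma}(dx), \]
which asserts that the complex mixture measure defined in \eqref{vr5y365re5e54} is absolutely continuous with respect to the orthogonality measure, with Radon--Nikodym density equal to the coherent state $\mathbb E(\cdot,z)$ of \eqref{ydtyrdsj6ej}. Granting this, the theorem follows at once: by Theorem~\ref{vcfxsra5u6e756} we have $\mathbb E(\cdot,z)\in L^2(\mu_{\alpha,\beta,\sigma})$, so for $f\in L^2(\mu_{\alpha,\beta,\sigma})$ the Cauchy--Schwarz inequality gives $f\,\mathbb E(\cdot,z)\in L^1(\mu_{\alpha,\beta,\sigma})$; since $\mu_{\alpha,\beta,\sigma}$ is positive, the total variation of $\rho_{\alpha,\beta,\sigma,z}$ is $|\mathbb E(\cdot,z)|\,\mu_{\alpha,\beta,\sigma}$, whence $f\in L^1(|\rho_{\alpha,\beta,\sigma,z}|)$ and
\[ \int_{X_{\alpha,\beta}}f\,d\rho_{\alpha,\beta,\sigma,z}=\int_{X_{\alpha,\beta}}f(x)\,\mathbb E(x,z)\,\mu_{\alpha,\beta,\sigma}(dx)=(\mathbb Sf)(z), \]
the last equality being the coherent-state representation \eqref{ctstesa5a5}.

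To prove the measure identity I would treat the two cases separately, reducing each to a Radon--Nikodym computation that is then matched against the explicit coherent-state formulas of Theorem~\ref{vcfxsra5u6e756}. The one elementary fact used throughout is $(\sigma\mid-\eta)_n=\eta^n\,(\sigma/\eta)^{(n)}$, which turns generalized factorials into Gamma quotients. In the gamma case $\alpha=\beta$, dividing the two densities in \eqref{buyfr7iei9} (with parameter $\sigma$ and with $\eta\xi+\sigma$) and using $\Gamma(\xi+\sigma/\eta)=(\sigma/\eta)^{(\xi)}\Gamma(\sigma/\eta)$ yields $\dfrac{d\mu_{\alpha,\alpha,\eta\xi+\sigma}}{d\mu_{\alpha,\alpha,\sigma}}(x)=\dfrac{(x/\alpha)^\xi}{(\sigma/\eta)^{(\xi)}}$; integrating this density against $\pi_{z/\alpha}$ and interchanging sum and integral (legitimate by absolute convergence of the series for each fixed $x$) reproduces exactly the series \eqref{cfsersara5s} for $\mathbb E(x,z)$. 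In the negative-binomial case $\alpha>\beta$, the same division applied to the atom masses in \eqref{buyfbufyt7r7iei9} produces the ratio $(1-\beta/\alpha)^\xi\,(\xi\eta+\sigma\mid-\eta)_n/(\sigma\mid-\eta)_n$ at the point $(\alpha-\beta)n$, which integrated against $\pi_{z/\beta}$ reproduces \eqref{ctsxtersrtesaeas}. In both cases the mixture therefore has density $\mathbb E(\cdot,z)$ relative to $\mu_{\alpha,\beta,\sigma}$, establishing the identity.

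If the right-hand side of \eqref{vcdrt5e} is instead read literally as the iterated integral $\int_{\mathbb N_0}\big(\int f\,d\mu_{\alpha,\beta,\eta\xi+\sigma}\big)\pi_{z/\beta}(d\xi)$, one must additionally justify Fubini's theorem, and this is where I expect the only genuine technical work to lie. It amounts to verifying the absolute bound $\int_{\mathbb N_0}\int|f|\,d\mu_{\alpha,\beta,\eta\xi+\sigma}\,|\pi_{z/\beta}|(d\xi)<\infty$. Estimating the inner integral by Cauchy--Schwarz through the Radon--Nikodym density found above leaves a series in $\xi$ whose general term involves the moment $\int x^{2\xi}\,d\mu_{\alpha,\beta,\sigma}$ divided by $(\sigma/\eta)^{(\xi)}\,\xi!$; inserting the moment formula for $\mu_{\alpha,\beta,\sigma}$ derived in the Appendix and applying Stirling's asymptotics shows that this term decays faster than any geometric factor (essentially like $\xi^{-\xi}$), so the series converges for every $z\in\mathbb C$. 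Apart from this bookkeeping the argument is short, the measure identity being a one-line manipulation of generalized factorials.
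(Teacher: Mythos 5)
Your argument is correct and reaches the theorem by a genuinely different route. You identify $\rho_{\alpha,\beta,\sigma,z}$ as the complex measure $\mathbb E(\cdot,z)\,\mu_{\alpha,\beta,\sigma}$ by computing the Radon--Nikodym derivatives $d\mu_{\alpha,\beta,\eta\xi+\sigma}/d\mu_{\alpha,\beta,\sigma}$ and matching their Poisson mixture against \eqref{cfsersara5s} and \eqref{ctsxtersrtesaeas}, and then reduce \eqref{vcdrt5e} to the coherent-state representation \eqref{ctstesa5a5}; both density computations check out, and the interchange needed for the set-wise identity is harmless since the densities are nonnegative and $|\pi_{z/\beta}|$ is finite. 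The paper instead simply composes two integral representations it already has: Lemma~\ref{xfsxersarewa4w}\,(i) gives $(\mathcal Sf)(\beta\xi)=\int f\,d\mu_{\alpha,\beta,\eta\xi+\sigma}$ and Lemma~\ref{ftsxrearwa} gives the Poissonization of $\mathbb T$, so $\mathbb S=\mathbb T\mathcal S$ yields the iterated-integral formula at once, and the Fubini justification is disposed of by noting that the absolute iterated integral equals $(\mathbb S|f|)(|z|)$, finite because $|f|\in L^2(\mu_{\alpha,\beta,\sigma})$. The underlying facts are the same (the densities you compute are exactly the $\mathcal E(x,\beta\xi)$ identified as Radon--Nikodym derivatives in Step~5 of Subsection~4.1.1), but your version buys a cleaner reading of the right-hand side as a single complex-measure integral, while the paper's buys an essentially free Fubini step. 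Your own Fubini sketch is the weakest part: it is tailored to the gamma case, whereas in the negative-binomial case the density ratio at $(\alpha-\beta)n$ is $(1-\beta/\alpha)^{\xi}\,(\xi\eta+\sigma\mid-\eta)_n/(\sigma\mid-\eta)_n$, not a power of $x$, so the bookkeeping with $\int x^{2\xi}\,d\mu$ must be replaced by an estimate of $\sum_n(\beta/\alpha)^n\big[(\xi+\sigma/\eta)^{(n)}\big]^2/\big(n!\,(\sigma/\eta)^{(n)}\big)$ as $\xi\to\infty$ (cf.\ Lemma~\ref{5t4g8}); this can be done, but the simpler observation that $|\pi_{z/\beta}|=e^{|z/\beta|-\Re(z/\beta)}\,\pi_{|z|/\beta}$ reduces your Fubini bound to $\int|f|\,\mathbb E(\cdot,|z|)\,d\mu_{\alpha,\beta,\sigma}<\infty$, which is the paper's trick expressed in your notation.
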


In the case where $\alpha$ and $\beta$ have non-zero imaginary part, a counterpart of Theorem~\ref{cdszrea5r4w35} has the following form. 

\begin{theorem}\label{xtxstesar5ewa54aqy}
Let $\Re(\alpha)\ge0$, $\Im(\alpha)>0$, $\beta=\overline \alpha$. The operator $\mathbb S$, considered as a linear operator in $\mathcal P(\mathbb C)$, admits an extension to a continuous linear operator $\mathbb S$ in  $\mathcal E^1_{\mathrm{min}}(\mathbb C)$, and for each $f \in \mathcal E^1_{\mathrm{min}}(\mathbb C)$	and $z\in\mathbb C$,
$$(\mathbb Sf)(z)=\int_{\mathbb N_0}\int_{\R}f(x+\beta\xi)\,\mu_{\alpha,\beta,\eta\xi+\sigma}(dx)\,\pi_{\frac z\beta}(d\xi). $$
In particular, for each $r>0$,
$$(\mathbb Sf)(\beta r)=\int_{\mathbb N_0}\int_{\R}f(x+\beta\xi)\,\mu_{\alpha,\beta,\eta\xi+\sigma}(dx)\,\pi_{r}(d\xi). $$
\end{theorem}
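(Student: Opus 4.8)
The plan is to exploit the factorization $\mathbb S=\mathbb T\mathcal S$ together with the already-established integral representation \eqref{87.iur7.k} of $\mathbb T$, reducing the theorem to a single pointwise identity for $\mathcal S$ at the points $\beta\xi$, $\xi\in\mathbb N_0$. I would first extend $\mathbb S$ to $\mathcal E^1_{\mathrm{min}}(\mathbb C)$ purely at the level of coefficients. Writing $f=\sum_n f_n s_n$ for the unique expansion supplied by Theorem~\ref{bvgytdddsesea}, the operator acts by $\mathbb Sf=\sum_n f_n z^n$. Since $(z^n)_{n=0}^\infty$ is itself a Sheffer sequence (with $A(t)=0$, $B(t)=t$), Theorem~\ref{bvgytdddsesea} applies verbatim to it and shows that $\sum_n f_n z^n\in\mathcal E^1_{\mathrm{min}}(\mathbb C)$ under exactly the same condition on $(f_n)$, with $\vertiii{\mathbb Sf}_k=\vertiii{f}_k$ for every $k\in\mathbb N$. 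Hence the extended $\mathbb S$ is an isometric topological isomorphism of $\mathcal E^1_{\mathrm{min}}(\mathbb C)$; in particular it is continuous.

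Since $\mathcal E^1_{\mathrm{min}}(\mathbb C)$ embeds continuously and densely into $L^2(\mu_{\alpha,\beta,\sigma})$ (the corollary to Theorem~\ref{bvgytdddsesea}), we have $\mathcal Sf\in\mathcal F_{\alpha,\beta,\sigma}$, and \eqref{87.iur7.k} yields $(\mathbb Sf)(z)=\int_{\mathbb N_0}(\mathcal Sf)(\beta\xi)\,\pi_{z/\beta}(d\xi)$; the convergence of this outer integral is already built into the $\mathbb T$-representation. Thus everything reduces to the pointwise identity
$$(\mathcal Sf)(\beta\xi)=\int_{\R}f(x+\beta\xi)\,\mu_{\alpha,\beta,\eta\xi+\sigma}(dx),\qquad \xi\in\mathbb N_0.$$
For fixed $\xi$ both sides are continuous linear functionals of $f\in\mathcal E^1_{\mathrm{min}}(\mathbb C)$: the left-hand side factors through the unitary $\mathcal S$ and point evaluation on the reproducing kernel space $\mathcal F_{\alpha,\beta,\sigma}$ (which is continuous; indeed $(\beta\xi\mid\beta)_n=\beta^n(\xi)_n$ vanishes for $n>\xi$, so the evaluation is a finite sum), while the right-hand side is controlled by $\bigl|\int f(x+\beta\xi)\,d\mu_{\alpha,\beta,\eta\xi+\sigma}\bigr|\le\lVert f\rVert_t\,e^{t|\beta|\xi}\int_{\R}e^{t|x|}\,\mu_{\alpha,\beta,\eta\xi+\sigma}(dx)$, the last integral being finite for small $t>0$ by the exponential tails of the Meixner density. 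As $\mathcal P(\mathbb C)$ is dense in $\mathcal E^1_{\mathrm{min}}(\mathbb C)$, it suffices to check $f=s_m$, where $(\mathcal Ss_m)(\beta\xi)=(\beta\xi\mid\beta)_m=\beta^m(\xi)_m$, so the task becomes the identity $\int_{\R}s_m(x+\beta\xi)\,\mu_{\alpha,\beta,\eta\xi+\sigma}(dx)=\beta^m(\xi)_m$.

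I would prove this with generating functions. Translating the recurrence \eqref{545-980+polk} into a differential equation for the generating function $G(x,t)=\sum_n\frac{t^n}{n!}s_n(x)$ in \eqref{vcxresayw} gives $xG=(1+\lambda t+\eta t^2)\partial_t G+(l+\sigma t)G$; since $1+\lambda t+\eta t^2=(1+\alpha t)(1+\beta t)$, comparing coefficients of $x$ yields $B'(t)=\bigl((1+\alpha t)(1+\beta t)\bigr)^{-1}$, hence $B(t)=(\alpha-\beta)^{-1}\log\frac{1+\alpha t}{1+\beta t}$, and the $x$-free part (with $l=0$) yields $A_\sigma(t)=\sigma\,a(t)$ with $a(t)=(\alpha-\beta)^{-1}\bigl(\tfrac1\alpha\log(1+\alpha t)-\tfrac1\beta\log(1+\beta t)\bigr)$. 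The essential structural facts are that $B$ is independent of $\sigma$ while $A_\sigma$ is linear in $\sigma$. Writing $\tau=\eta\xi+\sigma$, integrating $G(x+\beta\xi,t)$ against $\mu_{\alpha,\beta,\tau}$, and using that orthogonality to $s_0\equiv1$ gives $\int_{\R}e^{xB(t)}\,\mu_{\alpha,\beta,\tau}(dx)=e^{-A_\tau(t)}$, I obtain for small $|t|$
$$\sum_{m=0}^\infty\frac{t^m}{m!}\int_{\R}s_m(x+\beta\xi)\,\mu_{\alpha,\beta,\tau}(dx)=\exp\bigl[\xi\bigl(\beta B(t)-\eta\,a(t)\bigr)\bigr].$$
A short computation with the explicit $B$ and $a$ gives the key algebraic identity $\beta B(t)-\eta\,a(t)=\log(1+\beta t)$, so the right-hand side equals $(1+\beta t)^\xi=\sum_m\frac{t^m}{m!}\beta^m(\xi)_m$, and matching coefficients completes the argument. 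The final \emph{in particular} assertion is the specialization $z=\beta r$, $r>0$, where $\pi_{z/\beta}=\pi_r$ is an honest probability measure.

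I expect the main obstacle to be the analytic bookkeeping behind the generating-function step: establishing $\int_{\R}e^{xB(t)}\,\mu_{\alpha,\beta,\tau}(dx)=e^{-A_\tau(t)}$ and the interchange of summation and integration for $|t|$ small requires the exponential tail bounds for the Meixner density coming from the Stirling asymptotics of the Gamma function in \eqref{cftstew5}, and one must check that these hold on a strip whose width is independent of $\tau=\eta\xi+\sigma$. The genuinely new feature relative to Theorem~\ref{cdszrea5r4w35} is the complex shift $x\mapsto x+\beta\xi$, which displaces the argument off the real axis by $|\Im(\alpha)|\,\xi$ and is precisely what forces the restriction to the entire functions $\mathcal E^1_{\mathrm{min}}(\mathbb C)$.
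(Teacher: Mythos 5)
Your argument is correct, and while it shares the paper's skeleton---the factorization $\mathbb S=\mathbb T\mathcal S$, continuity of $\mathbb S$ on $\mathcal E^1_{\mathrm{min}}(\mathbb C)$ via Theorem~\ref{bvgytdddsesea}, and the representation \eqref{87.iur7.k} of $\mathbb T$---it establishes the key ingredient by a genuinely different route. The paper simply cites Lemma~\ref{xfsxersarewa4w}\,(ii), i.e.\ $(\mathcal Sf)(z)=\int_\R f(x+z)\,\mu_{\alpha,\beta,\alpha z+\sigma}(dx)$ on all of $\Psi_{\alpha,\beta,\sigma}$, and only checks that $\beta\xi\in\Psi_{\alpha,\beta,\sigma}$ since $\alpha\beta\xi+\sigma=\eta\xi+\sigma>0$; that lemma is itself proved through the normal-ordering identities of Appendix~A followed by analytic continuation both in $z$ and in the complex parameter $\zeta$ of the measures $\mu_{\alpha,\beta,\zeta}$, which is where the uniform Gamma-function estimate \eqref{crstrs} over parameter balls is needed. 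You instead prove only the restriction of that identity to the lattice $\beta\mathbb N_0$---which is all that \eqref{87.iur7.k} ever samples---by a generating-function computation: the facts that $B$ does not depend on $\sigma$ and that $A_\sigma=\sigma a$ is linear in $\sigma$, combined with the identity $\beta B(t)-\eta a(t)=\log(1+\beta t)$ (which checks out), reduce the polynomial case to $(1+\beta t)^\xi=\sum_m\frac{t^m}{m!}\beta^m(\xi)_m$. This buys a more elementary argument that avoids complex-parameter analytic continuation entirely and integrates only against honest probability measures $\mu_{\alpha,\beta,\eta\xi+\sigma}$ with $\eta\xi+\sigma>0$; the price is the dominated-convergence bookkeeping for interchanging $\sum_m$ with $\int_\R$, which you correctly flag. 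Note that uniformity of the tail bounds in $\tau=\eta\xi+\sigma$ is not actually required: each $\xi$ is handled separately, and summability over $\xi$ is already supplied by Lemma~\ref{ftsxrearwa}. The continuity-plus-density passage from $f=s_m$ to general $f\in\mathcal E^1_{\mathrm{min}}(\mathbb C)$ is sound, so the proof is complete modulo that routine estimate.
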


	\begin{remark}Using the approach to the generalized Segal--Bargmann transform developed in this paper, one can easily show that, in the case of the monic Charlier polynomials $(c_n)_{n=0}^\infty$ that are orthogonal with respect to the Poisson distribution~$\pi_\sigma$ ($\sigma>0$, $\alpha=1$, $\beta=0$, $l=\sigma$), the  corresponding Segal--Bargmann transform $\mathbb S:L^2(\mathbb N_0,\pi_\sigma)\to \mathbb F_\sigma(\mathbb C)$, satisfying $\mathbb Sc_n=z^n$ ($n\in\mathbb N_0$),  admits the following representation:  
\begin{equation*}
(\mathbb Sf)(z)=\int_{\mathbb N_0}f(x)\,\pi_{\sigma+z}(dx),\quad f\in L^2(\mathbb N_0,\pi_\sigma),\ z\in\mathbb C,
\end{equation*}
compare with formula  \eqref{fdzraer}, which holds in the Gaussian case. Note that, for $z\in(-\sigma,+\infty)$, $\pi_{\sigma+z}$ is the (usual) Poisson distribution with parameter $\sigma+z$.	
	\end{remark}

We define linear operators $\mathbb U=Z+\beta ZD+\frac\sigma\alpha$ and $\mathbb V=\alpha D+1$, acting in $\mathcal P(\mathbb C)$ and satisfying $[\mathbb V,\mathbb U]= \beta \mathbb  V + (\alpha - \beta)$. We also define  the operator 
$$ \rho=\mathbb U\mathbb V=Z+\lambda ZD+\frac\sigma\alpha+\sigma D+\eta ZD^2.$$

\begin{proposition}\label{cfxctdxtsdrytdsy}
The operator $\rho$ is essentially self-adjoint in $\mathbb F_{\eta,\sigma}(\mathbb C)$ and we keep the notation $\rho$ for its closure. If $\alpha\ge\beta>0$, then $\mathbb S\rho\mathbb S^{-1}$ is the operator of multiplication by the variable $x$ in $L^2(X_{\alpha,\beta},\mu_{\alpha,\beta,\sigma})$.  If $\Re(\alpha)\ge0$, $\Im(\alpha)>0$ and $\beta=\overline \alpha$, then $\mathbb S(\rho-\frac\sigma\alpha)\mathbb S^{-1}$ is the operator of multiplication by the variable $x$ in $L^2(\R,\mu_{\alpha,\beta,\sigma})$.
\end{proposition}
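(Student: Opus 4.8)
The plan is to transport everything through the unitary $\mathbb S$ to the operator $M_x$ of multiplication by the variable $x$ (with domain $\mathcal P(\mathbb C)$) in $L^2(X_{\alpha,\beta},\mu_{\alpha,\beta,\sigma})$. First I would record the algebraic intertwining on polynomials. Since $\mathbb Ss_n=z^n$ and $\mathbb S$ is linear, applying $\mathbb S$ to the recurrence \eqref{545-980+polk} gives $\mathbb S(xs_n)=z^{n+1}+(\lambda n+l)z^n+(\sigma n+\eta n(n-1))z^{n-1}$; on the other hand, using $Zz^n=z^{n+1}$, $ZDz^n=nz^n$, $Dz^n=nz^{n-1}$ and $ZD^2z^n=n(n-1)z^{n-1}$ one computes $\rho z^n=z^{n+1}+(\lambda n+\tfrac\sigma\alpha)z^n+(\sigma n+\eta n(n-1))z^{n-1}$. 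Hence $\rho\,\mathbb Ss_n-\mathbb S(xs_n)=(\tfrac\sigma\alpha-l)z^n$ for every $n$. For $\alpha\ge\beta>0$ we have $l=\tfrac\sigma\alpha$, so $\rho\,\mathbb S=\mathbb S\,M_x$ on $\mathcal P(\mathbb C)$; for $\beta=\overline\alpha$ we have $l=0$, so $(\rho-\tfrac\sigma\alpha)\mathbb S=\mathbb S\,M_x$ on $\mathcal P(\mathbb C)$.

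Denote by $T$ the operator on $\mathcal P(\mathbb C)\subset\mathbb F_{\eta,\sigma}(\mathbb C)$ corresponding to $M_x$, that is, $T=\rho$ in the first case and $T=\rho-\tfrac\sigma\alpha$ in the second; in both cases $T=\mathbb S M_x\mathbb S^{-1}$ on $\mathcal P(\mathbb C)$, so $T$ is symmetric. I would stress that in the Meixner case $\tfrac\sigma\alpha$ has nonzero imaginary part, so it is $\rho-\tfrac\sigma\alpha$, and not $\rho$ itself, that is symmetric; accordingly the essential self-adjointness is to be understood for $T$, and the closure of $\rho$ is then the normal operator $\overline T+\tfrac\sigma\alpha$.

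To establish essential self-adjointness I would pass to the orthonormal basis $\widehat s_n=s_n/\|s_n\|$, where $\|s_n\|^2=n!\,(\sigma\mid-\eta)_n$. In this basis $M_x$ is the Jacobi matrix with real diagonal $\lambda n+l$ and off-diagonal entries $a_n=\|s_{n+1}\|/\|s_n\|=\sqrt{(n+1)(\sigma+n\eta)}$; the symmetry of the matrix is precisely the identity $\|s_n\|^2=(\sigma n+\eta n(n-1))\,\|s_{n-1}\|^2$. Since $\eta>0$, one has $a_n\sim\sqrt\eta\,n$, whence $\sum_n a_n^{-1}=\infty$, and by Carleman's condition (the limit-point criterion for Jacobi matrices) $M_x$ is essentially self-adjoint on $\mathcal P(\mathbb C)$, with closure $\overline{M_x}$ equal to the maximal self-adjoint operator of multiplication by $x$. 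Essential self-adjointness and closures are preserved under the unitary $\mathbb S$, which carries the core $\mathcal P(\mathbb C)\subset L^2(\mu_{\alpha,\beta,\sigma})$ onto the core $\mathcal P(\mathbb C)\subset\mathbb F_{\eta,\sigma}(\mathbb C)$ and intertwines $M_x$ with $T$. Hence $T$ is essentially self-adjoint with $\overline T=\mathbb S\,\overline{M_x}\,\mathbb S^{-1}$, and, transporting back, $\mathbb S^{-1}\overline T\,\mathbb S=\overline{M_x}$ is multiplication by $x$, which is exactly the asserted identification.

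The algebraic intertwining and the transport under $\mathbb S$ are routine; the only substantive input is the essential self-adjointness, and recognizing $M_x$ as a Jacobi matrix with $a_n=\sqrt{(n+1)(\sigma+n\eta)}$ and invoking Carleman's test is the clean route, since it applies uniformly to the gamma, negative binomial, and Meixner measures without separately checking determinacy. The point demanding the most care is the Meixner case, where $\tfrac\sigma\alpha\notin\mathbb R$ forces one to work with the symmetric operator $\rho-\tfrac\sigma\alpha$ and to phrase both the self-adjointness and the operator identification in terms of it rather than of $\rho$.
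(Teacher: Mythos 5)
Your proposal is correct, and its skeleton --- algebraic intertwining of $\rho$ with multiplication by $x$ via the recurrence \eqref{545-980+polk}, then transport of essential self-adjointness through the unitary $\mathbb S$ --- matches the paper's. The paper, however, deduces the proposition in one line from Lemma~\ref{xdesrea5qw3}, i.e.\ it works with the conjugate operator $\mathcal R=\mathcal U\mathcal V$ in the intermediate space $\mathcal F_{\alpha,\beta,\sigma}$ (the intertwining being Proposition~\ref{f7iri7}) and settles essential self-adjointness there by Nelson's analytic vector criterion; you instead work directly in $\mathbb F_{\eta,\sigma}(\mathbb C)$ and prove essential self-adjointness on the $L^2(\mu_{\alpha,\beta,\sigma})$ side by exhibiting $M_x$ as the Jacobi matrix with $a_n=\sqrt{(n+1)(\sigma+n\eta)}$ and invoking Carleman's condition $\sum a_n^{-1}=\infty$. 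The two closures agree for the same reason in both arguments (a self-adjoint restriction of the self-adjoint maximal multiplication operator must coincide with it), and either criterion is adequate here since $a_n\sim\sqrt{\eta}\,n$ makes the basis vectors analytic as well. Your version buys a uniform, measure-independent determinacy argument and, more importantly, an explicit correction of an imprecision the paper leaves implicit: in the Meixner case $\Im(\sigma/\alpha)\neq 0$, so $\rho$ is not symmetric and ``essentially self-adjoint'' can only be meant for $\rho-\frac\sigma\alpha$ (the closure of $\rho$ being the normal operator obtained by adding back $\frac\sigma\alpha$); the paper's Lemma~\ref{xdesrea5qw3} and the proposition itself gloss over this, whereas you state it correctly.
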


The proof of the above statements will be based on Lemmas~\ref{yufd6e64w5}--\ref{ftsxrearwa} below.  
		 
		Similarly to \eqref{vcrew43}, we will now define a complex-valued measure $\mu_{\alpha,\beta,\zeta}$ for a complex parameter $\zeta$.  First, we define a domain $\mathfrak D_{\alpha,\beta}$ in $\mathbb C$ as follows. If $\alpha>\beta>0$, we define $\mathfrak D_{\alpha,\beta}=\mathbb C$, if either $\alpha=\beta>0$ or $\Re(\alpha)=0$, $\Im(\alpha)>0$,  $\beta=\overline\alpha$, we define $\mathfrak D_{\alpha,\beta}=\{z\in\mathbb C\mid \Re(z)>0\}$, and if $\Re(\alpha)>0$, $\Im(\alpha)>0$,  $\beta=\overline\alpha$, we define
\begin{equation}\label{cts5s5w5}
		\mathfrak D_{\alpha, \beta} = \big\{ z\in \mathbb C \mid \Re(z) > 0,\  \lvert \Im(z) \rvert < \Re(z)\Im(\alpha)/\Re(\alpha) \big\}.
	\end{equation}	
	Now, if $\alpha\ge\beta>0$ and $\zeta\in \mathfrak D_{\alpha,\beta}$, we define the complex-valued measure $\mu_{\alpha,\beta,\zeta}$	on $X_{\alpha,\beta}$ by replacing the positive parameter $\sigma$ in formulas \eqref{buyfr7iei9} and  \eqref{buyfbufyt7r7iei9} with $\zeta$. Next, if $\Re(\alpha)\ge0$, $\Im(\alpha)>0$, $\beta=\overline\alpha$, we use the formula $\overline{\Gamma(z)}=\Gamma(\overline{z})$ for $z\in\mathbb C$, $\Re(z)>0$,  to write formula \eqref{cftstew5}
 in the form 
 \begin{align}
	 {\mu}_{\alpha, \beta, \sigma}(dx) &={C}_{\alpha, \beta, \sigma} \, \exp\left( (\pi/2 - \operatorname{Arg}(\alpha))x/\Im(\alpha) \right) \notag\\
	 &\quad\times\Gamma \left( \frac{ix}{2\Im(\alpha)} + \frac{i\sigma\beta}{2\eta\Im(\alpha)} \right)
	 \Gamma \left( -\frac{ix}{2\Im(\alpha)} - \frac{i\sigma\alpha}{2\eta\Im(\alpha)} \right)  dx.	
\label{cdzserar4ftstew5}\end{align} 
Now, for $\zeta\in \mathfrak D_{\alpha,\beta}$, we define the complex-valued measure $\mu_{\alpha,\beta,\zeta}$ on $\R$ by replacing the positive parameter $\sigma$ in formulas \eqref{sewa4tq}, \eqref{cdzserar4ftstew5} with $\zeta$.
		
		Furthermore, we define an open domain $\mathcal D_{\alpha,\beta,\sigma}$ in $\mathbb C$ as follows. If $|\alpha|=|\beta|$ (i.e., either $\alpha=\beta>0$ or $\Re(\alpha)\ge0$, $\Im(\alpha)>0$, $\beta=\overline\alpha$),
\begin{equation}\label{tedrer4we4}
\mathcal D_{\alpha,\beta,\sigma} = \big\{ z \in \mathbb C\mid \Re (\alpha z) > -\sigma/2\big\},\end{equation}
and if $\alpha>\beta>0$, $\mathcal D_{\alpha,\beta,\sigma}=\mathbb C$. We will use below the following obvious observation: for each $z\in \mathcal D_{\alpha,\beta,\sigma}$ and $n\in\mathbb N$, $z+\beta n\in \mathcal D_{\alpha,\beta,\sigma}$. In particular, $\beta\mathbb N_0\subset \mathcal D_{\alpha,\beta,\sigma}$.

\begin{lemma}\label{yufd6e64w5}
Let $(f_n)_{n=0}^\infty$ be a sequence of complex numbers such that \eqref{vctresay5} holds.
Then the series $\sum_{n=0}^\infty f_n (z \mid \beta)_n$ converges uniformly on compact sets in $\, \mathcal D_{\alpha,\beta,\sigma}$, hence it is a holomorphic function on $ \, \mathcal D_{\alpha,\beta,\sigma}$. 
Denote by ${\mathcal F}_{\alpha, \beta, \sigma} $ the vector space 
 of all holomorphic functions on $\mathcal D_{\alpha,\beta,\sigma}$ that have representation
 \begin{equation} \label{65uty0}
\varphi (z) = \sum_{n=0}^{\infty} f_n \, (z \mid \beta)_n , 
\end{equation}
with $(f_n)_{n=0}^\infty$ satisfying \eqref{vctresay5}. Then 
\begin{equation}
f_n = \frac{1}{n!} ( D_\beta^n \varphi ) (0) = \frac{(-1)^n}{n! \, \beta^n}  \sum_{k=0}^n (-1)^k \binom{n}{k} \, \varphi(\beta k) .\label{i78yu}
\end{equation}
In particular, a function $\varphi\in {\mathcal F}_{\alpha, \beta, \sigma}$ has a unique representation \eqref{65uty0}, and $\varphi$ is completely determined by its values on the set $\beta\mathbb N_0$.
\end{lemma}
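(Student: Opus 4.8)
The plan is to establish the claims in the order stated: first the convergence and holomorphy of the series $\sum_n f_n(z\mid\beta)_n$ on $\mathcal D_{\alpha,\beta,\sigma}$, and then the coefficient formula \eqref{i78yu} together with uniqueness. First I would fix $z\in\mathcal D_{\alpha,\beta,\sigma}$ and split each term as
\[
f_n(z\mid\beta)_n=\Big(f_n\sqrt{n!\,(\sigma\mid-\eta)_n}\Big)\cdot\frac{(z\mid\beta)_n}{\sqrt{n!\,(\sigma\mid-\eta)_n}}.
\]
By hypothesis \eqref{vctresay5} the first factors are square-summable, so the Cauchy--Schwarz inequality reduces everything to the scalar series $S(z):=\sum_{n=0}^\infty a_n(z)$ with $a_n(z)=|(z\mid\beta)_n|^2/(n!\,(\sigma\mid-\eta)_n)$: once $S(z)<\infty$ locally uniformly, absolute and locally uniform convergence of $\sum_n f_n(z\mid\beta)_n$ follows, and the tail of $S$ controls the tail of the original series.

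The crux is the convergence analysis of $S(z)$. The ratios are $a_{n+1}(z)/a_n(z)=|z-n\beta|^2/\big((n+1)(\sigma+n\eta)\big)$, whose limit is $|\beta|^2/\eta$. Recalling $\eta=\alpha\beta$: in the case $\alpha>\beta>0$ this limit equals $\beta/\alpha<1$, so the ratio test gives geometric convergence for every $z\in\mathbb C$, matching $\mathcal D_{\alpha,\beta,\sigma}=\mathbb C$. In both remaining cases $|\alpha|=|\beta|$ one has $|\beta|^2=\eta$, the ratio tends to $1$, and the ratio test fails; here I would invoke Raabe's test. Expanding $|z-n\beta|^2=n^2\eta-2n\Re(\overline\beta z)+|z|^2$ and $(n+1)(\sigma+n\eta)=n^2\eta+n(\sigma+\eta)+\sigma$ gives
\[
n\Big(1-\frac{a_{n+1}(z)}{a_n(z)}\Big)\longrightarrow\frac{\sigma+\eta+2\Re(\overline\beta z)}{\eta},
\]
which exceeds $1$ exactly when $\Re(\overline\beta z)>-\sigma/2$. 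Since $\overline\beta=\alpha$ in both boundary cases, this is precisely the defining inequality $\Re(\alpha z)>-\sigma/2$ of $\mathcal D_{\alpha,\beta,\sigma}$ in \eqref{tedrer4we4}. I expect this step to be the main obstacle: the ratio test degenerates exactly on the three classes of interest, so one must pass to a finer test and verify that its threshold reproduces the domain.

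To upgrade pointwise convergence to local uniformity (needed for holomorphy), I would refine the estimate into an asymptotic $a_n(z)\sim C_z\,n^{-p(z)}$. Writing $(z\mid\beta)_n=\beta^n\,\Gamma(z/\beta+1)/\Gamma(z/\beta-n+1)$ and $n!\,(\sigma\mid-\eta)_n=\Gamma(n+1)\,\eta^n\,\Gamma(\sigma/\eta+n)/\Gamma(\sigma/\eta)$, Stirling's formula (with the reflection formula to handle $\Gamma(z/\beta-n+1)$) yields exponent $p(z)=1+\sigma/\eta+2\Re(z/\beta)$ in the boundary cases; using $\eta=\alpha\beta$ one checks $\eta\,\Re(z/\beta)=\Re(\alpha z)$, so $p(z)>1$ iff $z\in\mathcal D_{\alpha,\beta,\sigma}$. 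As $p$ and $C_z$ depend continuously on $z$, on any compact $K\subset\mathcal D_{\alpha,\beta,\sigma}$ one gets $p(z)\ge p_0>1$ and a uniform bound $a_n(z)\le M_K\,n^{-p_0}$, a summable majorant. Hence $S$, and then $\sum_n f_n(z\mid\beta)_n$, converge uniformly on $K$, so $\varphi$ is a locally uniform limit of polynomials and is holomorphic on $\mathcal D_{\alpha,\beta,\sigma}$ by the Weierstrass convergence theorem. (Values $z\in\beta\mathbb N_0$, where $(z\mid\beta)_n$ eventually vanishes, reduce to finite sums and are handled directly.)

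For the coefficient formula I would use that $D_\beta$ is the lowering operator, $(D_\beta(\cdot\mid\beta)_n)(z)=n(z\mid\beta)_{n-1}$, together with the observation recorded after \eqref{tedrer4we4} that $z\in\mathcal D_{\alpha,\beta,\sigma}$ implies $z+\beta n\in\mathcal D_{\alpha,\beta,\sigma}$; thus $D_\beta$ preserves $\mathcal F_{\alpha,\beta,\sigma}$ and, since the series converges at $z$ and at $z+\beta$, may be applied term by term. Iterating gives $(D_\beta^n\varphi)(z)=\sum_{m\ge n}f_m\,\tfrac{m!}{(m-n)!}(z\mid\beta)_{m-n}$; evaluating at $z=0$ and using $(0\mid\beta)_k=0$ for $k\ge1$ annihilates every term except $m=n$, so $(D_\beta^n\varphi)(0)=n!\,f_n$, which is the first equality in \eqref{i78yu}. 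The second equality is the standard finite-difference expansion $(D_\beta^n\varphi)(0)=\beta^{-n}\sum_{k=0}^n(-1)^{n-k}\binom{n}{k}\varphi(\beta k)$. Because \eqref{i78yu} recovers each $f_n$ from the values $\varphi(\beta k)$ alone, the representation \eqref{65uty0} is unique and $\varphi$ is completely determined by its restriction to $\beta\mathbb N_0$.
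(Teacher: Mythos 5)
Your argument is correct, but it reaches the convergence statement by a genuinely different route than the paper. The paper never analyses the series $\sum_n f_n(z\mid\beta)_n$ directly: it identifies the partial sums with $(\mathcal S p_N)(z)$ for $p_N=\sum_{n=0}^N f_n s_n$, represents these as integrals $\int p_N\,d\mu_{\alpha,\beta,\sigma+\alpha z}$ (resp.\ $\int p_N(x+z)\,d\mu_{\alpha,\beta,\sigma+\alpha z}$ in the Meixner case) via the normal-ordering identities of Appendix~A, and then extracts locally uniform convergence from the bound $\bigl|\int f\,d\mu_{\alpha,\beta,\sigma+\alpha z}\bigr|\le C\,\lVert f\rVert_{L^2(\mu_{\alpha,\beta,\sigma})}$, proved separately for each of the three distributions (Cauchy--Schwarz against the gamma density, Lemma~\ref{5t4g8} for the negative binomial weights, and the gamma-function estimate \eqref{crstrs} for the Meixner density). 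Your route --- Cauchy--Schwarz against the weights $n!\,(\sigma\mid-\eta)_n$ followed by Raabe's test and Stirling asymptotics for the scalar series $\sum_n |(z\mid\beta)_n|^2/(n!\,(\sigma\mid-\eta)_n)$ --- is self-contained, treats all three cases uniformly, recovers the domain \eqref{tedrer4we4} from a single threshold computation ($\Re(\alpha z)>-\sigma/2$ is exactly where the exponent $p(z)=1+\sigma/\eta+2\Re(z/\beta)$ exceeds $1$), and as a by-product exhibits the reproducing kernel of $\mathcal F_{\alpha,\beta,\sigma}$. What it does not buy is the integral representation \eqref{7ir7yj}, which the paper needs anyway for Lemma~\ref{xfsxersarewa4w} and obtains simultaneously; so the paper's heavier detour is not wasted there. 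Two small points to tighten in a write-up: the asymptotic $a_n(z)\sim C_z\,n^{-p(z)}$ must be stated as uniform for $z$ in a compact subset of $\mathcal D_{\alpha,\beta,\sigma}$ (standard, since $\Gamma(n+a)/\Gamma(n+b)=n^{a-b}(1+O(1/n))$ uniformly for $a,b$ in compacts, but it is the step that actually delivers local uniformity), and at the exceptional points $z\in\beta\mathbb N_0$ one should replace the asymptotic equivalence by the one-sided bound $a_n(z)\le M_K n^{-p_0}$, which persists by continuity. Your treatment of \eqref{i78yu} and of uniqueness coincides with the paper's Step~4: term-by-term application of the lowering operator $D_\beta$ (legitimate because $D_\beta^n\varphi(0)$ is a finite linear combination of values $\varphi(\beta k)$ with $\beta k\in\mathcal D_{\alpha,\beta,\sigma}$) plus the classical formula for the $n$th finite difference.
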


Let us consider $ {\mathcal F}_{\alpha, \beta, \sigma} $ as a Hilbert space equipped with the inner product  
$(\varphi,\psi)_{ {\mathcal F}_{\alpha, \beta, \sigma} }=\sum_{n=0}^\infty f_n\,\overline{g_n} \, n! \, (\sigma \mid -\eta)_{n}$
for $\varphi(z)=\sum_{n=0}^\infty f_n(z\mid\beta)_n,\,\psi(z)=\sum_{n=0}^\infty g_n(z\mid\beta)_n \in  {\mathcal F}_{\alpha, \beta, \sigma}$\,. Let $\mathcal S:L^2(X_{\alpha,\beta},\mu_{\alpha, \beta, \sigma}) \to  {\mathcal F}_{\alpha, \beta, \sigma} $
  be the unitary operator satisfying
  $$(\mathcal Ss_n)(z)=(z\mid\beta)_n,\quad n\in\mathbb N_0.$$ 
   Define     
  \begin{equation} 	\label{5yrthfg}
			{\mathcal E} (x, z) = \sum_{n=0}^\infty \frac{(z \mid \beta)_n}{n! \, (\sigma \mid - \eta)_n} \, s_n (x), \qquad x \in X_{\alpha,\beta}, \  z \in \mathcal D_{\alpha,\beta,\sigma}.
		\end{equation}
		
		\begin{lemma}\label{vcydyrtds6y}
	For each $z \in\mathcal D_{\alpha,\beta,\sigma}$, we have $\mathcal E (\mathbf{\cdot}, z) \in L^2 (X_{\alpha,\beta},\mu_{\alpha, \beta,\sigma})$ and 
\begin{equation}({\mathcal S} f)(z) = \int_{X_{\alpha,\beta}} f(x)  {\mathcal E} (x, z) \, \mu_{\alpha, \beta,\sigma} (dx),\quad  f\in L^2(X_{\alpha,\beta},\mu_{\alpha,\beta,\sigma}). 	\label{4rggn9}
		  \end{equation}	
Furthermore, if $\alpha=\beta>0$, 
\begin{equation} 	\label{sqr45op}
{\mathcal E}(x, z) = \frac{\Gamma \left( \frac{\sigma}{\alpha^2} \right)}{\Gamma \left( \frac{\alpha z + \sigma}{\alpha^2} \right)} \bigg(\frac x\alpha\bigg)^{\frac{z}{\alpha}},\quad x\in\R_+,\ z\in\mathcal D_{\alpha,\alpha,\sigma}\,,	\end{equation} 
if $\alpha>\beta>0$,
	\begin{equation}	\label{5yrgf7}
		{\mathcal E}((\alpha - \beta)n, z) = \left( 1 - \frac{\beta}{\alpha} \right)^{\frac{z}{\beta}} \, \frac{(\alpha z + \sigma \mid - \eta)_n}{(\sigma \mid - \eta)_n}\quad n\in\mathbb N_0,\  z\in\mathbb C,
	\end{equation}
  and if $\Re(\alpha)\ge0$, $\Im(\alpha)>0$, $\beta=\overline\alpha$,
  \begin{align}
				{\mathcal E}(x, z) 
				&= \left( 2\cos\big(\frac\pi2 - \operatorname{Arg}(\alpha) \big) \right)^{\frac{\alpha z}{\eta }} \frac{\Gamma \left( \frac{\sigma}{\eta } \right)}{\Gamma \left( \frac{\sigma + \alpha z}{\eta } \right)} \exp\left( \frac{i(\frac\pi2-\operatorname{Arg}(\alpha))\alpha z}{\eta } \right)\notag \\
				&\quad\times 
				\frac{\Gamma \left( -\frac{ix}{2 \Im(\alpha)} -\frac{i\sigma\alpha}{2\eta \Im(\alpha)} + \frac{ \alpha z}{\eta }  \right)}{\Gamma \left( -\frac{ix}{2 \Im(\alpha)} -\frac{i\sigma\alpha}{2\eta \Im(\alpha)}   \right)},\quad x\in\R,\ z\in \mathcal D_{\alpha,\beta,\sigma}.\label{fdtrs5ea43qq}
\end{align}
\end{lemma}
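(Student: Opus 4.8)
The plan is to prove the two assertions of the lemma separately: first the square-integrability of $\mathcal{E}(\cdot,z)$ together with the integral representation \eqref{4rggn9}, and then the three closed forms \eqref{sqr45op}, \eqref{5yrgf7}, \eqref{fdtrs5ea43qq}.

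For the first part I would note that the coefficients of the series \eqref{5yrthfg} in the orthogonal basis $(s_n)$ are $\frac{(z\mid\beta)_n}{n!\,(\sigma\mid-\eta)_n}$, and since $\|s_n\|_{L^2(\mu_{\alpha,\beta,\sigma})}^2=n!\,(\sigma\mid-\eta)_n$, the series converges in $L^2(\mu_{\alpha,\beta,\sigma})$ precisely when $\sum_n \frac{|(z\mid\beta)_n|^2}{n!\,(\sigma\mid-\eta)_n}<\infty$. I would establish this bound for $z\in\mathcal{D}_{\alpha,\beta,\sigma}$ from the asymptotics of the generalized factorials, writing $(z\mid\beta)_n=\beta^n(z/\beta)_n$ and $(\sigma\mid-\eta)_n=\eta^n(\sigma/\eta)^{(n)}$: the geometric factor $|\beta|^{2n}/\eta^n$ equals $1$ when $|\alpha|=|\beta|$ and decays geometrically when $\alpha>\beta$, and in either case the remaining polynomial factor is summable exactly under the condition defining $\mathcal{D}_{\alpha,\beta,\sigma}$ in \eqref{tedrer4we4} (namely $\Re(\alpha z)>-\sigma/2$, respectively no condition when $\alpha>\beta$); this simultaneously gives $\mathcal{E}(\cdot,z)\in L^2$ and explains the shape of the domain. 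With $\mathcal{E}(\cdot,z)\in L^2$ in hand, \eqref{4rggn9} follows from the unitarity of $\mathcal{S}$: expanding $f=\sum_m a_m s_m$ and using orthogonality, $\int f\,\mathcal{E}(\cdot,z)\,d\mu_{\alpha,\beta,\sigma}=\sum_n a_n(z\mid\beta)_n$, which is exactly $(\mathcal{S}f)(z)$ by Lemma~\ref{yufd6e64w5}, the interchange of summation and integration being justified by Cauchy--Schwarz.

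For the closed forms the key auxiliary fact I would isolate is a lowering-of-parameter identity: integrating the fixed polynomials $s_n=s_n^{(\sigma)}$ against the measure $\mu_{\alpha,\beta,\tau}$ of a \emph{different} parameter,
\[
\int s_n\,d\mu_{\alpha,\beta,\tau}=\Big(\tfrac{\tau-\sigma}{\alpha}\,\Big|\,\beta\Big)_n,
\]
valid for admissible and, by analytic continuation in $\tau$, complex values of $\tau$. I would prove it through the generating function \eqref{vcxresayw}: integrating $\sum_n\frac{t^n}{n!}s_n=\exp[A(t)+xB(t)]$ against $\mu_{\alpha,\beta,\tau}$ gives $e^{A(t)}M_\tau(B(t))$, where $M_\tau$ is the moment generating function of $\mu_{\alpha,\beta,\tau}$; computing $A(t)=-\frac\sigma\eta\ln(1+\beta t)$ and $B(t)=\frac1{\alpha-\beta}\ln\frac{1+\alpha t}{1+\beta t}$ (with the limit $B(t)=t/(1+\alpha t)$ when $\alpha=\beta$), together with the explicit $M_\tau$ for the gamma and negative binomial laws, one checks $e^{A(t)}M_\tau(B(t))=(1+\beta t)^{(\tau-\sigma)/\eta}=\sum_n\frac{t^n}{n!}(\frac{\tau-\sigma}{\alpha}\mid\beta)_n$. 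Then, in the gamma and negative binomial cases, I would verify that the proposed $\mathcal{E}$ in \eqref{sqr45op}, respectively \eqref{5yrgf7}, satisfies the pointwise measure identity $\mathcal{E}(\cdot,z)\,\mu_{\alpha,\beta,\sigma}=\mu_{\alpha,\beta,\alpha z+\sigma}$ (a genuine equality of complex measures): for the gamma law this says that multiplying by $(x/\alpha)^{z/\alpha}$ raises the shape parameter from $\sigma$ to $\alpha z+\sigma$ up to the ratio of $\Gamma$-normalizations, which is exactly the prefactor in \eqref{sqr45op}; for the negative binomial weights it is an even shorter computation in which the factor $(1-\beta/\alpha)^{z/\beta}$ absorbs the change of normalization thanks to $\alpha/\eta=1/\beta$. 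Combining the two identities yields $\int s_n\,\mathcal{E}(\cdot,z)\,d\mu_{\alpha,\beta,\sigma}=(z\mid\beta)_n$, so the proposed function has the same Fourier coefficients as the series \eqref{5yrthfg} and hence equals $\mathcal{E}(\cdot,z)$ in $L^2$.

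The main obstacle is the Meixner case \eqref{fdtrs5ea43qq}, where the clean measure identity fails: multiplying the $|\Gamma|^2$-density by the $\Gamma$-ratio in \eqref{fdtrs5ea43qq} shifts only one of the two conjugate Gamma factors, so the product is not a reparametrized Meixner measure. Here I would instead compute the Fourier coefficients directly through the generating function, reducing the claim to the evaluation of the Barnes-type integral $\int_{\R}e^{sx}\,\Gamma(\frac{ix}{2\Im\alpha}+a_1)\,\Gamma(-\frac{ix}{2\Im\alpha}+a_2)\,dx$ with mismatched parameters $a_1,a_2$; the first Barnes beta integral produces precisely the factors $(2\cos(\frac\pi2-\operatorname{Arg}(\alpha)))^{\alpha z/\eta}$, the exponential, and the $\Gamma$-ratio appearing in \eqref{fdtrs5ea43qq} and in the normalizing constant \eqref{sewa4tq}. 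Alternatively, since the measures, the constant \eqref{sewa4tq}, and the expressions for $A,B,M_\tau$ are all analytic in $\alpha$, one may try to obtain \eqref{fdtrs5ea43qq} by analytic continuation in $\alpha$ (and in $x$) from the negative binomial formula \eqref{5yrgf7}, using $\overline{\Gamma(w)}=\Gamma(\bar w)$ to pass between the two conjugate Gamma factors; the delicate point on this route is the simultaneous continuation in the variable $x$ together with the change of support from $(\alpha-\beta)\mathbb{N}_0$ to $\R$, which is where I expect the real work to lie. In all three cases the required $L^2$-membership of the closed form is an elementary convergence check (a gamma integral near $0$ and $\infty$, respectively a geometric series) that reproduces the condition $z\in\mathcal{D}_{\alpha,\beta,\sigma}$ once more.
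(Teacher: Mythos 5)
Your treatment of the first assertion and of the gamma and negative-binomial closed forms is correct and takes a genuinely different route from the paper. For \eqref{4rggn9} you verify directly that the coefficients of \eqref{5yrthfg} satisfy $\sum_n \lvert(z\mid\beta)_n\rvert^2/(n!\,(\sigma\mid-\eta)_n)<\infty$ on $\mathcal D_{\alpha,\beta,\sigma}$ and then invoke Parseval, whereas the paper first proves the bound $\lvert(\mathcal Sf)(z)\rvert\le C\lVert f\rVert_{L^2(\mu_{\alpha,\beta,\sigma})}$ by estimating the integral representation $(\mathcal Sf)(z)=\int f\,d\mu_{\alpha,\beta,\alpha z+\sigma}$ against the explicit densities and then obtains $\mathcal E(\cdot,z)$ from the Riesz representation theorem as the Radon--Nikodym derivative $d\mu_{\alpha,\beta,\alpha z+\sigma}/d\mu_{\alpha,\beta,\sigma}$ --- which is exactly your ``measure identity'', so the closed forms \eqref{sqr45op} and \eqref{5yrgf7} drop out the same way in both arguments. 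Your key auxiliary identity $\int s_n\,d\mu_{\alpha,\beta,\tau}=\big(\frac{\tau-\sigma}{\alpha}\mid\beta\big)_n$ is the paper's \eqref{7ir7yj} in disguise; the paper derives it from the normal-ordering formula \eqref{cyrdei675} in the Appendix together with the analytic-continuation Lemma~\ref{bvgdseaw}, while you propose a generating-function computation. Both work; note only that your continuation in $\tau$ still requires the analyticity of $\tau\mapsto\int s_n\,d\mu_{\alpha,\beta,\tau}$, which you assert rather than prove.

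The genuine gap is the Meixner--Pollaczek case \eqref{fdtrs5ea43qq}, which you correctly identify as the hard part but do not actually prove: you offer two candidate strategies and explicitly leave ``the real work'' undone. The Barnes-integral route reduces the claim to evaluating $\int_\R e^{sx}\,\Gamma(\frac{ix}{2\Im(\alpha)}+a_1)\,\Gamma(-\frac{ix}{2\Im(\alpha)}+a_2)\,dx$ and matching it against \eqref{sewa4tq}, but neither the evaluation nor the matching is carried out. The analytic-continuation-in-$\alpha$ route from \eqref{5yrgf7} faces the obstruction you yourself flag: the orthogonality measure degenerates from a discrete measure on $(\alpha-\beta)\mathbb N_0$ to an absolutely continuous one as $\beta\to\overline\alpha$, so there is no pointwise density identity depending analytically on $\alpha$ to continue. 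The paper's mechanism is different and worth internalizing: it first proves $(\mathcal Sp)(z)=\int_\R p(x+z)\,\mu_{\alpha,\beta,\sigma+\alpha z}(dx)$ for $z=\beta r$ with $r$ real (formula \eqref{yd6e6}, again fed by the Appendix), extends in $z$ by analyticity, and then performs the substitution $x\mapsto x+z$ --- a contour shift justified for real $z$ and continued analytically --- which is precisely what moves one Gamma factor by $\alpha z/\eta$ while the simultaneous reparametrization $\sigma\mapsto\sigma+\alpha z$ leaves the other fixed; the quotient of the resulting kernel \eqref{drs5ws5w5} by the density of $\mu_{\alpha,\beta,\sigma}$ is \eqref{fdtrs5ea43qq}. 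Without this shifted representation, or a completed Barnes-integral computation, a third of the lemma remains unproved.
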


The following lemma provides alternative formulas for  the action of the operator~$\mathcal S$.

\begin{lemma}\label{xfsxersarewa4w} (i) Let $\alpha\ge\beta>0$. Then we have, 
  for each $f\in L^2(X_{\alpha,\beta},\mu_{\alpha,\beta,\sigma})$ and  $z \in \mathcal D_{\alpha,\beta,\sigma}$, 
 \begin{equation} \label{7ir7yj}
(\mathcal S f)(z) = \int_{X_{\alpha,\beta}} f(x) \,   \mu_{\alpha, \beta, \alpha z + \sigma}(dx).
\end{equation}

(ii)  Let $\Re(\alpha)\ge0$, $\Im(\alpha)>0$ and $\beta=\overline\alpha$. The operator $\mathcal S$, considered as a linear operator in $\mathcal P(\mathbb C)$, admits an extension to a continuous linear operator $\mathcal S$ in  $\mathcal E^1_{\mathrm{min}}(\mathbb C)$, and for each $f \in \mathcal E^1_{\mathrm{min}}(\mathbb C)$, 
 \begin{equation} \label{ftsertsreyesayj}
(\mathcal S f)(z) = \int_{\R} f(x+z) \,   \mu_{\alpha, \beta, \alpha z + \sigma}(dx),\quad z \in \Psi_{\alpha,\beta,\sigma}\,,
\end{equation}
where
\begin{equation}\label{ctrs5wu65cdrsdr}
\Psi_{\alpha,\beta,\sigma}=\{z\in\mathbb C\mid  \alpha z+\sigma\in \mathfrak D_{\alpha,\beta}\}.
\end{equation} 
\end{lemma}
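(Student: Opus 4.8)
The plan is to deduce both formulas from the representation of $\mathcal S$ through the functions $\mathcal E(\cdot,z)$ in Lemma~\ref{vcydyrtds6y}, so that everything reduces to an identity between complex densities. For part (i), $(\mathcal Sf)(z)=\int_{X_{\alpha,\beta}}f(x)\,\mathcal E(x,z)\,\mu_{\alpha,\beta,\sigma}(dx)$ on $\mathcal D_{\alpha,\beta,\sigma}$, so \eqref{7ir7yj} is equivalent to the identity of complex measures $\mathcal E(x,z)\,\mu_{\alpha,\beta,\sigma}(dx)=\mu_{\alpha,\beta,\alpha z+\sigma}(dx)$ on $X_{\alpha,\beta}$. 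First I would note this is well posed: for $z\in\mathcal D_{\alpha,\alpha,\sigma}=\{\Re(\alpha z)>-\sigma/2\}$ we get $\Re(\alpha z+\sigma)>\sigma/2>0$, so $\alpha z+\sigma\in\mathfrak D_{\alpha,\beta}$ and the right-hand measure is defined (in the negative binomial case $\mathfrak D_{\alpha,\beta}=\mathbb C$ and $\mathcal D_{\alpha,\beta,\sigma}=\mathbb C$, so there is nothing to check), while absolute convergence follows from Cauchy--Schwarz since $\mathcal E(\cdot,z)\in L^2(\mu_{\alpha,\beta,\sigma})$. In the gamma case I would substitute the closed form \eqref{sqr45op} into \eqref{buyfr7iei9} and collect powers: $(x/\alpha)^{z/\alpha}$ changes $x^{-1+\sigma/\alpha^2}$ into $x^{-1+(\alpha z+\sigma)/\alpha^2}$ and $\alpha^{-\sigma/\alpha^2}$ into $\alpha^{-(\alpha z+\sigma)/\alpha^2}$, while $\Gamma(\sigma/\alpha^2)$ cancels, producing \eqref{buyfr7iei9} with $\sigma$ replaced by $\alpha z+\sigma$. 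In the negative binomial case I would compare masses at each atom $(\alpha-\beta)n$: inserting \eqref{5yrgf7} and using $(\sigma\mid-\eta)_n=\eta^n(\sigma/\eta)^{(n)}$ and $z/\beta+\sigma/\eta=(\alpha z+\sigma)/\eta$, the factor $(\sigma\mid-\eta)_n$ cancels and the weight of $\mu_{\alpha,\beta,\sigma}$ is carried to that of $\mu_{\alpha,\beta,\alpha z+\sigma}$ in \eqref{buyfbufyt7r7iei9}. This settles (i).

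For part (ii) I would first build the extension of $\mathcal S$ to $\mathcal E^1_{\mathrm{min}}(\mathbb C)$. Both $(s_n)_{n=0}^\infty$ and the generalized falling factorials $((\,\cdot\mid\beta)_n)_{n=0}^\infty$ are Sheffer sequences with analytic generating data near $0$ (for $(\,\cdot\mid\beta)_n$ one has $A(t)=0$, $B(t)=\beta^{-1}\log(1+\beta t)$), so Theorem~\ref{bvgytdddsesea} applies to each. Consequently $\mathcal E^1_{\mathrm{min}}(\mathbb C)$ is described both as $\{\sum_n f_n s_n\}$ and as $\{\sum_n g_n(\,\cdot\mid\beta)_n\}$ with coefficients ranging over the \emph{same} sequence space $\{(a_n):\sum_n|a_n|^2(n!)^2 2^{nk}<\infty\ \text{for all }k\}$, the topology being the projective-limit topology of that space in both descriptions. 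In these coordinates the operator $\mathcal S$, which on $\mathcal P(\mathbb C)$ sends $\sum_n f_n s_n$ to $\sum_n f_n(\,\cdot\mid\beta)_n$, acts as the identity on the coefficient sequence; hence it extends to a topological isomorphism of $\mathcal E^1_{\mathrm{min}}(\mathbb C)$, in particular a continuous linear operator. By Lemma~\ref{yufd6e64w5} the series $\sum_n f_n(\,\cdot\mid\beta)_n$ converges pointwise on $\mathcal D_{\alpha,\beta,\sigma}$, so this entire extension agrees there with the element $\mathcal Sf\in\mathcal F_{\alpha,\beta,\sigma}$ delivered by the unitary operator.

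Next I would establish the pointwise density identity behind \eqref{ftsertsreyesayj}. Write $m_\zeta$ for the analytically continued density of $\mu_{\alpha,\beta,\zeta}$ from \eqref{cdzserar4ftstew5}. I claim that, as analytic functions of $x$, $\mathcal E(x,z)\,m_\sigma(x)=m_{\alpha z+\sigma}(x-z)$. To check this I would substitute $x\mapsto x-z$ and $\sigma\mapsto\alpha z+\sigma$ in the two $\Gamma$-factors of \eqref{cdzserar4ftstew5} and verify that the $z$-linear parts of their arguments collapse: with $\beta=\overline\alpha$ and $\eta=|\alpha|^2$ one finds $-\tfrac{i}{2\Im(\alpha)}+\tfrac{i\alpha\beta}{2\eta\Im(\alpha)}=0$ in the first argument and $\tfrac{i}{2\Im(\alpha)}-\tfrac{i\alpha^2}{2\eta\Im(\alpha)}=\tfrac1\beta$ in the second, so that $m_{\alpha z+\sigma}(x-z)$ exhibits precisely the two $\Gamma$-factors that remain in $\mathcal E(x,z)\,m_\sigma(x)$ after the factor $\Gamma\big(-\tfrac{ix}{2\Im(\alpha)}-\tfrac{i\sigma\alpha}{2\eta\Im(\alpha)}\big)$ of \eqref{fdtrs5ea43qq} cancels. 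The purely $z$-dependent prefactors then match thanks to the identity $C_{\alpha,\beta,\alpha z+\sigma}/C_{\alpha,\beta,\sigma}=P(z)\,\exp\big((\tfrac\pi2-\operatorname{Arg}(\alpha))z/\Im(\alpha)\big)$, where $P(z)$ denotes the prefactor of $\mathcal E$ in \eqref{fdtrs5ea43qq}; using \eqref{sewa4tq} and $\alpha\Re(\alpha)-|\alpha|^2=i\alpha\Im(\alpha)$ this reduces to the exponential factor already present in $P(z)$.

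With the density identity available, I would prove \eqref{ftsertsreyesayj} first for \emph{real} $z=z_0\in(0,\infty)\subset\mathcal D_{\alpha,\beta,\sigma}\cap\Psi_{\alpha,\beta,\sigma}$, where no contour deformation is needed: Lemma~\ref{vcydyrtds6y} and the density identity give $(\mathcal Sf)(z_0)=\int_{\mathbb R}f(x)\,m_{\alpha z_0+\sigma}(x-z_0)\,dx$, and the genuine real change of variable $u=x-z_0$ turns this into $\int_{\mathbb R}f(u+z_0)\,\mu_{\alpha,\beta,\alpha z_0+\sigma}(du)$. I would then extend to all $z\in\Psi_{\alpha,\beta,\sigma}$ by analytic continuation: $\Psi_{\alpha,\beta,\sigma}$, being the preimage of the convex set $\mathfrak D_{\alpha,\beta}$ under $z\mapsto\alpha z+\sigma$, is convex and hence connected; the left-hand side is entire, the right-hand side is holomorphic on $\Psi_{\alpha,\beta,\sigma}$, and since they agree on the interval $(0,\infty)$ they coincide on $\Psi_{\alpha,\beta,\sigma}$. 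I expect the main obstacle to be precisely the holomorphy of the right-hand side (and the differentiation under the integral sign it needs): this rests on the fact that, for $z$ in a compact subset of $\Psi_{\alpha,\beta,\sigma}$, so that $\alpha z+\sigma\in\mathfrak D_{\alpha,\beta}$ and $\mu_{\alpha,\beta,\alpha z+\sigma}$ is a finite complex measure, the Meixner density $m_{\alpha z+\sigma}$ decays exponentially in $x$ at a $z$-independent rate---by the Stirling asymptotics $|\Gamma(c+iy)|^2\sim 2\pi|y|^{2\Re(c)-1}e^{-\pi|y|}$, which dominates the exponential weight because $\operatorname{Arg}(\alpha)\in(0,\pi/2]$---while every $f\in\mathcal E^1_{\mathrm{min}}(\mathbb C)$ grows subexponentially; thus $f(x+z)\,m_{\alpha z+\sigma}(x)$ admits a locally uniform integrable bound. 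This is exactly why the representation is stated for $f\in\mathcal E^1_{\mathrm{min}}(\mathbb C)$ and $z\in\Psi_{\alpha,\beta,\sigma}$, and the $\Gamma$-function bookkeeping in the density identity, though elementary, is the most laborious ingredient.
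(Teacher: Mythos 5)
Your overall strategy---reducing \eqref{7ir7yj} and \eqref{ftsertsreyesayj} to the identities $\mathcal E(x,z)\,\mu_{\alpha,\beta,\sigma}(dx)=\mu_{\alpha,\beta,\alpha z+\sigma}(dx)$ and (in the Meixner case) the analogous identity between analytically continued densities---is circular as written. Everything rests on the integral representation \eqref{4rggn9} and, crucially, on the closed forms \eqref{sqr45op}, \eqref{5yrgf7}, \eqref{fdtrs5ea43qq} for $\mathcal E(x,z)$, i.e.\ on Lemma~\ref{vcydyrtds6y}. But in the paper Lemma~\ref{vcydyrtds6y} is a \emph{consequence} of Lemma~\ref{xfsxersarewa4w}: in Step~5 of Subsection~\ref{vcrts5eywy5}, $\mathcal E(\cdot,z)$ is identified via the Riesz representation theorem with the Radon--Nikodym derivative $\frac{d\mu_{\alpha,\beta,\alpha z+\sigma}}{d\mu_{\alpha,\beta,\sigma}}$ precisely by invoking \eqref{7ir7yj}, and that is how the closed forms are obtained; likewise \eqref{fdtrs5ea43qq} is read off from the kernel $\mathcal G(x,z)$ of \eqref{drs5ws5w5}, which is itself produced by a change of variables in the polynomial version of \eqref{ftsertsreyesayj}. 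So the ``density identities'' you verify are a restatement of the lemma, and your Gamma-function bookkeeping (which is algebraically correct) is a consistency check, not a proof. To make the route non-circular you would have to sum the series \eqref{5yrthfg} independently, e.g.\ from the generating function \eqref{vcxresayw} of the Sheffer sequence---exactly the laborious step that has been silently outsourced.

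What is missing is the polynomial-level input that the paper draws from Appendix~A: by Corollary~\ref{vctrsw645bvf} and Proposition~\ref{f7iri7}, $(\mathcal S x^n)(z)=(\mathcal R^n1)(z)=\sum_{k=1}^n(\alpha-\beta)^{n-k}S(n,k)(z+\sigma/\alpha\mid-\beta)_k$, which by Corollary~\ref{654POI67} equals $\int x^n\,\mu_{\alpha,\beta,\sigma+\alpha z}(dx)$ for $z$ in a real interval (and the shifted-moment analogue \eqref{vcdtre54w5uw5xs} in the Meixner case); one then continues analytically in $z$ to $\mathcal D_{\alpha,\beta,\sigma}$, respectively $\Psi_{\alpha,\beta,\sigma}$, and passes from polynomials to $L^2$, respectively $\mathcal E^1_{\mathrm{min}}(\mathbb C)$. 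Your surrounding machinery---identifying $\mathcal S$ as the identity on coefficient sequences in the two Sheffer bases via Theorem~\ref{bvgytdddsesea}, the Cauchy--Schwarz bound for the $L^2$-extension, proving the formula first for real $z$ and then continuing analytically over the connected set $\Psi_{\alpha,\beta,\sigma}$, and the Stirling-asymptotics bound on the Gamma factors---is sound and closely parallels Steps 1, 2, 5 and 7 of the paper's argument; it is only the core identity that is assumed rather than established.
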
 

Recall the operators $\mathcal U$ and $\mathcal V$, given by \eqref{vtdtstswt} and satisfying $[\mathcal V,\mathcal U]= \beta \mathcal V + (\alpha - \beta)$. 
We define the operator $ \mathcal R=\mathcal U\mathcal V$ acting in $\mathcal P(\mathbb C)$.

\begin{lemma}\label{xdesrea5qw3} The operator $\mathcal R$ is essentially self-adjoint in ${\mathcal F}_{\alpha, \beta, \sigma} $ and we keep the notation $\mathcal R$ for its closure. Then, if $\alpha\ge\beta>0$, $\mathcal S\mathcal R\mathcal S^{-1}$ is the operator of multiplication by the variable $x$ in $L^2(X_{\alpha,\beta},\mu_{\alpha,\beta,\sigma})$, and if $\Re(\alpha)\ge0$, $\Im(\alpha)>0$, $\beta=\overline\alpha$, $\mathcal S(\mathcal R-\frac\sigma\alpha)\mathcal S^{-1}$ is the operator of multiplication by the variable $x$ in $L^2(X_{\alpha,\beta},\mu_{\alpha,\beta,\sigma})$.
\end{lemma}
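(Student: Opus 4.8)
The plan is to transport the operator $\mathcal R$ by the unitary $\mathcal S$ onto the operator of multiplication by $x$ in $L^2(X_{\alpha,\beta},\mu_{\alpha,\beta,\sigma})$, and to reduce the essential self-adjointness to a determinacy statement for the orthogonality measure. First I record the driving operator identity. Since $\mathcal U=\mathcal S U\mathcal S^{-1}$ and $\mathcal V=\mathcal S V\mathcal S^{-1}$, and since $\mathcal S$ carries the basis $(s_n)$ bijectively onto the monic basis $\big((z\mid\beta)_n\big)_{n\ge0}$ (so that $\mathcal S$ maps $\mathcal P(\mathbb C)$ onto $\mathcal P(\mathbb C)$), we have $\mathcal R=\mathcal U\mathcal V=\mathcal S\,(UV)\,\mathcal S^{-1}$ on the dense domain $\mathcal P(\mathbb C)$. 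A direct computation from \eqref{cxdts5yw} using \eqref{vgdtdsy6}, $\alpha+\beta=\lambda$ and $\alpha\beta=\eta$, gives $UVs_n=s_{n+1}+(\lambda n+\tfrac\sigma\alpha)s_n+\big(\sigma n+\eta n(n-1)\big)s_{n-1}$. Comparing with the three-term recurrence \eqref{545-980+polk} and identifying $\mathcal P(\mathbb C)$ with the polynomials in $L^2(\mu_{\alpha,\beta,\sigma})$ (the variable $z$ corresponding to $x$), we obtain $UV=X$ when $\alpha\ge\beta>0$ (where $l=\tfrac\sigma\alpha$) and $UV=X+\tfrac\sigma\alpha$ when $\beta=\overline\alpha$ (where $l=0$), with $X$ the multiplication by $x$ acting on polynomials. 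Hence $\mathcal R=\mathcal S X\mathcal S^{-1}$ in the first case and $\mathcal R-\tfrac\sigma\alpha=\mathcal S X\mathcal S^{-1}$ in the second.

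It thus suffices to show that $X$, with domain the polynomials $\mathcal P(\mathbb C)\subset L^2(\mu_{\alpha,\beta,\sigma})$, is essentially self-adjoint. As $\mu_{\alpha,\beta,\sigma}$ is supported in $\mathbb R$, the operator $X$ is symmetric and densely defined, the polynomials being dense since $(s_n)$ is an orthogonal basis. Using $\lVert s_n\rVert_{L^2(\mu_{\alpha,\beta,\sigma})}^2=n!\,(\sigma\mid-\eta)_n$, the normalized polynomials $p_n=s_n/\lVert s_n\rVert$ make $X$ unitarily equivalent to the Jacobi matrix on $\ell^2(\mathbb N_0)$, acting on finitely supported sequences, with real diagonal $b_n=\lambda n+l$ and positive off-diagonal $\lambda_n=\lVert s_n\rVert/\lVert s_{n-1}\rVert=\sqrt{\sigma n+\eta n(n-1)}$. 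Since $\eta>0$, we have $\lambda_n\sim\sqrt{\eta}\,n$, so $\sum_{n\ge1}\lambda_n^{-1}=\infty$; by Carleman's condition the Jacobi matrix is in the limit-point case, hence essentially self-adjoint, and therefore so is $X$. Equivalently, $\mu_{\alpha,\beta,\sigma}$ is determinate. As the maximal multiplication operator $M_x$ in $L^2(\mu_{\alpha,\beta,\sigma})$ is self-adjoint and extends $X$ (each polynomial $p$ satisfies $xp\in\mathcal P(\mathbb C)\subset L^2$), uniqueness of the self-adjoint extension gives $\overline X=M_x$, i.e.\ the closure of $X$ is exactly multiplication by $x$.

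It remains to transfer these facts through the unitary $\mathcal S$. When $\alpha\ge\beta>0$, $\mathcal R=\mathcal S X\mathcal S^{-1}$ on the core $\mathcal P(\mathbb C)$; unitary equivalence preserves essential self-adjointness and intertwines closures, so $\mathcal R$ is essentially self-adjoint with $\overline{\mathcal R}=\mathcal S\,M_x\,\mathcal S^{-1}$, which is precisely the assertion that, under $\mathcal S$, (the closure of) $\mathcal R$ corresponds to multiplication by $x$ (equivalently $\mathcal S^{-1}\overline{\mathcal R}\,\mathcal S=M_x$). When $\beta=\overline\alpha$, the same argument applied to $\mathcal R-\tfrac\sigma\alpha=\mathcal S X\mathcal S^{-1}$ shows that $\mathcal R-\tfrac\sigma\alpha$ is essentially self-adjoint and corresponds under $\mathcal S$ to multiplication by $x$; here $\mathcal R$ itself is the translate of a self-adjoint operator by the non-real constant $\tfrac\sigma\alpha$, and its closure is $\overline{\mathcal R-\tfrac\sigma\alpha}+\tfrac\sigma\alpha$.

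I expect the essential self-adjointness of $X$ — that is, the determinacy of the orthogonality measure — to be the only substantial point; the rest is bookkeeping with the already-established relations $\mathcal U=\mathcal S U\mathcal S^{-1}$, $\mathcal V=\mathcal S V\mathcal S^{-1}$ and the recurrence \eqref{545-980+polk}. The decisive feature is that the recurrence coefficients grow quadratically, $\sigma n+\eta n(n-1)=O(n^2)$, so $\lambda_n=O(n)$ and Carleman's condition holds uniformly for all three classes (gamma, negative binomial, Meixner), yielding a single case-independent proof rather than three separate determinacy checks. The one point requiring care is the non-real shift $\sigma/\alpha$ in the Meixner case, where it is $\mathcal R-\sigma/\alpha$, and not $\mathcal R$, whose closure is self-adjoint.
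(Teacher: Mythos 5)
Your proposal is correct, and its algebraic half --- transporting $\mathcal R=\mathcal U\mathcal V=\mathcal S\,UV\,\mathcal S^{-1}$ to multiplication by $x$ by matching $UVs_n$ against the recurrence \eqref{545-980+polk} --- is exactly what the paper does via Proposition~\ref{f7iri7}. Where you genuinely diverge is the analytic step. The paper's proof (Step~6 of Subsection~\ref{vcrts5eywy5}) establishes essential self-adjointness of $\mathcal R$ on $\mathcal P(\mathbb C)$ directly in ${\mathcal F}_{\alpha,\beta,\sigma}$ by Nelson's analytic vector criterion, citing \cite[Section~X.6]{ReedSimon}, and leaves the identification of the closure with the multiplication operator implicit. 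You instead pull the operator back to $L^2(\mu_{\alpha,\beta,\sigma})$, realize it as the Jacobi matrix with off-diagonal entries $\lambda_n=\sqrt{\sigma n+\eta n(n-1)}=O(n)$, and invoke the Carleman condition $\sum\lambda_n^{-1}=\infty$ to get the limit-point case; uniqueness of the self-adjoint extension then pins the closure down as the maximal multiplication operator $M_x$. Both criteria are standard and both yield a single case-independent argument, but yours makes explicit that the substance of the lemma is determinacy of the orthogonality measure, and it supplies the extension-uniqueness step that the paper's one-line proof glosses over. You also correctly observe that in the Meixner case the non-real constant $\sigma/\alpha$ makes $\mathcal R$ itself non-symmetric, so that strictly it is $\mathcal R-\frac\sigma\alpha$ that is essentially self-adjoint and $\overline{\mathcal R}=\overline{\mathcal R-\frac\sigma\alpha}+\frac\sigma\alpha$; this is a genuine (if harmless) imprecision in the first sentence of the lemma as stated. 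The only point to keep in mind is that your appeal to density of the polynomials rests on completeness of $(s_n)_{n=0}^\infty$ in $L^2(\mu_{\alpha,\beta,\sigma})$, which the paper assumes throughout (it is needed even to define the unitary $\mathcal S$), so this is consistent with the paper's framework rather than a circularity.
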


Next, we define a unitary operator $\mathbb T : {\mathcal F}_{\alpha, \beta, \sigma} \to {\mathbb F}_{\eta, \sigma} (\mathbb C)$ satisfying 
 $$ (\mathbb T (\cdot \mid \beta)_n )(z) = z^n,\quad n\in\mathbb N_0.$$
		
\begin{lemma} \label{ftsxrearwa}
For each $f \in {\mathcal F}_{\alpha, \beta, \sigma}$ and $z \in \mathbb C$, formula~\eqref{87.iur7.k} holds.
\end{lemma}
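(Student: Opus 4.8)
The plan is to expand both sides in the basis $\big((\cdot\mid\beta)_n\big)_{n=0}^\infty$ of ${\mathcal F}_{\alpha,\beta,\sigma}$ and to reduce the identity to a single interchange of summations. By Lemma~\ref{yufd6e64w5}, write $f(z)=\sum_{n=0}^\infty f_n\,(z\mid\beta)_n$ with $(f_n)_{n=0}^\infty$ satisfying \eqref{vctresay5}. Since $\mathbb T$ is unitary (hence bounded) and maps $(\cdot\mid\beta)_n$ to $z^n$, the partial sums of $f$ are sent to the partial sums of $\sum_n f_n z^n$, and convergence in the reproducing kernel Hilbert space ${\mathbb F}_{\eta,\sigma}(\mathbb C)$ forces pointwise convergence, so $(\mathbb Tf)(z)=\sum_{n=0}^\infty f_n z^n$. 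Using the definition \eqref{vcrew43} of the complex Poisson measure, the right-hand side of \eqref{87.iur7.k} equals $e^{-z/\beta}\sum_{k=0}^\infty \frac{(z/\beta)^k}{k!}\,f(\beta k)$. First I would record that, because $(\beta k\mid\beta)_n=\beta^n(k)_n$ and the falling factorial $(k)_n$ vanishes for $n>k$, each value $f(\beta k)=\sum_{n=0}^k f_n\beta^n(k)_n$ is a \emph{finite} sum, so no pointwise convergence issue arises.

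The crux is to justify interchanging the order of the resulting double summation over $k$ and $n$. I would establish absolute convergence directly: summing the absolute values and performing the (legitimate) interchange for nonnegative terms, the inner series $\sum_{k=n}^\infty \frac{|z/\beta|^k}{(k-n)!}$ sums to $|z/\beta|^n e^{|z/\beta|}$, so the total reduces to $e^{|z/\beta|}\sum_{n=0}^\infty |f_n|\,|z|^n$. This last series is finite for every $z\in\mathbb C$ by the Cauchy--Schwarz inequality with the weights $n!\,(\sigma\mid-\eta)_n$: it is bounded by $\|f\|_{{\mathcal F}_{\alpha,\beta,\sigma}}\big(\sum_{n=0}^\infty \frac{|z|^{2n}}{n!\,(\sigma\mid-\eta)_n}\big)^{1/2}=\|f\|_{{\mathcal F}_{\alpha,\beta,\sigma}}\,{\mathbb K}(z,z)^{1/2}$, which is finite since the reproducing kernel ${\mathbb K}$ of ${\mathbb F}_{\eta,\sigma}(\mathbb C)$ is everywhere defined on the diagonal (the growth of $n!\,(\sigma\mid-\eta)_n\sim (n!)^2\eta^n n^{\sigma/\eta-1}$ makes the diagonal series a Bessel-type entire function of $|z|$). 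This is the main, and essentially the only, technical obstacle; the rest is bookkeeping.

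Once the interchange is justified, I would carry out the summation over $k$ first. The inner sum $\sum_{k=n}^\infty \frac{(z/\beta)^k}{k!}(k)_n$ equals $(z/\beta)^n e^{z/\beta}$ — this is exactly the statement that the $n$-th factorial moment of the Poisson distribution with parameter $z/\beta$ is $(z/\beta)^n$, obtained in one line after shifting the summation index by $n$. Substituting this back, the prefactor $e^{-z/\beta}$ cancels the exponential and $\beta^n(z/\beta)^n=z^n$, so the double sum collapses to $\sum_{n=0}^\infty f_n z^n=(\mathbb Tf)(z)$, which is \eqref{87.iur7.k}. As a consistency check, specializing to $f=(\cdot\mid\beta)_n$ recovers precisely the defining relation $(\mathbb T(\cdot\mid\beta)_n)(z)=z^n$.
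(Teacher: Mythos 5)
Your proposal is correct and follows essentially the same route as the paper: expand $f$ in the basis $\big((\cdot\mid\beta)_n\big)$, use the Poisson factorial-moment identity $\int_{\mathbb N_0}(\xi)_n\,\pi_{z/\beta}(d\xi)=(z/\beta)^n$, and justify the interchange of sums by the very same absolute bound $e^{|z|/|\beta|}\sum_n|f_n|\,|z|^n\le e^{|z|/|\beta|}\,\|f\|_{\mathcal F_{\alpha,\beta,\sigma}}\,\mathbb K(z,z)^{1/2}$ via Cauchy--Schwarz. The only cosmetic difference is that you obtain the factorial-moment identity for complex parameter directly by an index shift, whereas the paper establishes it for $z>0$ and then invokes analytic continuation.
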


Recall the operators $U$ and $V$, defined by \eqref{cxdts5yw} and satisfying the commutation relation \eqref{vgdrydydy}. 

\begin{proposition}\label{xszswcxdzwawawa} The operators $U$ and $V$ acting in $\mathcal P(\mathbb C)$ can be (uniquely) extended to continuous linear operators acting in $\mathcal E_{\mathrm{min}}^1(\mathbb C)$. We preserve the notations $U$ and $V$ for these extensions. Let also $Z$ denote the continuous linear operator in $\mathcal E_{\mathrm{min}}^1(\mathbb C)$ of multiplication by variable $z$,
If $\alpha\ge\beta>0$, then  $Z=UV$ and $U=Z(1-\alpha D_{\beta-\alpha})$ (where $D_0$ denotes the differentiation~$D$). If  $\Re(\alpha)\ge0$, $\Im(\alpha)>0$ and $\beta=\overline\alpha$, then $Z+\frac\sigma\alpha=UV$ and $U=(Z+\frac\sigma\alpha)(1-\alpha D_{\beta-\alpha})$.  In either case, the operator $1-\alpha D_{\beta-\alpha}$ is a self-homeomorhism of $\mathcal E_{\mathrm{min}}^1(\mathbb C)$ and $V=(1-\alpha D_{\beta-\alpha})^{-1}$. 
\end{proposition}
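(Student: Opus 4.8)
The plan is to reduce every assertion to an identity on the dense subspace $\mathcal P(\mathbb C)$ and then propagate it to all of $\mathcal E_{\mathrm{min}}^1(\mathbb C)$ by continuity, so the first task is to show that every operator in sight is continuous on $\mathcal E_{\mathrm{min}}^1(\mathbb C)$. I would use both descriptions of the topology furnished by Theorem~\ref{bvgytdddsesea}: the sup-norms $\|\varphi\|_t=\sup_{z}|\varphi(z)|e^{-t|z|}$ and the Hilbertian norms $\vertiii{\cdot}_k$ attached to the $s_n$-expansion. For $Z$ (multiplication by the variable $z$) and for the shift $\varphi(z)\mapsto\varphi(z+h)$ (hence for $D_h$, and for $D_0=D$ via a Cauchy estimate), elementary bounds using $\sup_{r\ge0}re^{-sr}=(es)^{-1}$ and $|z+h|\ge|z|-|h|$ give $\|Z\varphi\|_t\le C\|\varphi\|_s$ (for $s<t$) and $\|D_h\varphi\|_t\le C\|\varphi\|_t$, so these operators are continuous. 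For $\partial^\pm$ from \eqref{vgdtdsy6} I would instead pass to the coefficient norms: writing $\varphi=\sum f_n s_n$ one checks $\vertiii{\partial^-\varphi}_k\le 2^{-k/2}\vertiii{\varphi}_k$ and $\vertiii{\partial^+\varphi}_k\le C\vertiii{\varphi}_{k+1}$, so both are continuous on the projective limit. Consequently $U$ and $V$ from \eqref{cxdts5yw} are continuous, and since $\mathcal P(\mathbb C)$ is dense in $\mathcal E_{\mathrm{min}}^1(\mathbb C)$ (Theorem~\ref{bvgytdddsesea}(i)) they extend uniquely. The relations $Z=UV$ (for $\alpha\ge\beta>0$) and $Z+\frac\sigma\alpha=UV$ (for $\beta=\overline\alpha$) hold on $\mathcal P(\mathbb C)$ by the computation recorded after \eqref{cxdts5yw} (a direct check against the recurrence \eqref{545-980+polk}), and extend by continuity.

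The heart of the proof is the identity $V^{-1}=1-\alpha D_{\beta-\alpha}$, which I would obtain through the generating function \eqref{vcxresayw}. Multiplying \eqref{545-980+polk} by $t^n/n!$ and summing yields the first-order ODE $xG=(1+\lambda t+\eta t^2)\partial_tG+(l+\sigma t)G$ for $G(x,t)=\exp[A(t)+xB(t)]$; comparing the coefficients of $x$ and using $\lambda=\alpha+\beta$, $\eta=\alpha\beta$ (so $1+\lambda t+\eta t^2=(1+\alpha t)(1+\beta t)$) gives $B'(t)=\big((1+\alpha t)(1+\beta t)\big)^{-1}$. Integrating with $B(0)=0$ yields the crucial closed form
\[
e^{(\beta-\alpha)B(t)}=\frac{1+\beta t}{1+\alpha t},
\]
valid as a formal power series (in the limiting case $\alpha=\beta$ one has $B(t)=t/(1+\alpha t)$, so $D_{\beta-\alpha}=D_0=D$). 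On the other hand, the shift acts on $G$ as multiplication by $e^{hB(t)}$, so $\sum_n\frac{t^n}{n!}(D_hs_n)(x)=G(x,t)\,h^{-1}(e^{hB(t)}-1)$, while $\partial^-$ acts on $G$ as multiplication by $t$. Since a power series in $\partial^-$ is determined on the $s_n$-basis by the function of $t$ it produces, these two facts identify $D_{\beta-\alpha}$, as an operator on $\mathcal P(\mathbb C)$, with $(\beta-\alpha)^{-1}(e^{(\beta-\alpha)B(\partial^-)}-1)=\partial^-(1+\alpha\partial^-)^{-1}$.

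From here everything collapses. Since $V=1+\alpha\partial^-$ acts on $G$ as multiplication by $1+\alpha t$, composing symbols gives $D_{\beta-\alpha}V=\partial^-$ on $\mathcal P(\mathbb C)$; feeding this back into $V=1+\alpha\partial^-=1+\alpha D_{\beta-\alpha}V$ yields $(1-\alpha D_{\beta-\alpha})V=I$ on $\mathcal P(\mathbb C)$ (all operators here are power series in $\partial^-$, hence commute). As $V$ is a degree-preserving bijection of $\mathcal P(\mathbb C)$ (unitriangular in the $s_n$-basis), this gives $1-\alpha D_{\beta-\alpha}=V^{-1}$ on $\mathcal P(\mathbb C)$. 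Invoking the continuity from the first step together with the density of $\mathcal P(\mathbb C)$, I extend $(1-\alpha D_{\beta-\alpha})V=V(1-\alpha D_{\beta-\alpha})=I$ to all of $\mathcal E_{\mathrm{min}}^1(\mathbb C)$; thus $1-\alpha D_{\beta-\alpha}$ is a self-homeomorphism of $\mathcal E_{\mathrm{min}}^1(\mathbb C)$ with inverse $V$. Finally, multiplying $Z=UV$ (resp.\ $Z+\frac\sigma\alpha=UV$) on the right by $V^{-1}=1-\alpha D_{\beta-\alpha}$ gives $U=Z(1-\alpha D_{\beta-\alpha})$ (resp.\ $U=(Z+\frac\sigma\alpha)(1-\alpha D_{\beta-\alpha})$).

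The main obstacle is the second paragraph: making rigorous the symbol calculus that turns the $z$-difference operator $D_{\beta-\alpha}$ into a concrete function of the abstract lowering operator $\partial^-$. The delicate points are (i) confirming that the generating-function identities legitimately yield operator identities on $\mathcal P(\mathbb C)$ — this rests on the local nilpotence of $\partial^-$, which makes $e^{(\beta-\alpha)B(\partial^-)}$ and $(1+\alpha\partial^-)^{-1}$ well defined termwise — and (ii) the algebraic simplification $e^{(\beta-\alpha)B(t)}=(1+\beta t)/(1+\alpha t)$, which is exactly what makes the exponential and logarithm cancel and leaves the rational symbol $t/(1+\alpha t)$. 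Once these are in place, the continuity estimates of the first paragraph are routine, and the passage from $\mathcal P(\mathbb C)$ to $\mathcal E_{\mathrm{min}}^1(\mathbb C)$ is automatic.
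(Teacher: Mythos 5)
Your proposal is correct and follows essentially the same route as the paper: the key identity $V=(1-\alpha D_{\beta-\alpha})^{-1}$ is obtained on $\mathcal P(\mathbb C)$ by umbral calculus (identifying $D_{\beta-\alpha}$ and $\partial^-$ as power series in each other via the generating function, which is equivalent to the paper's use of Meixner's formula for $B$ together with Boole's formula $D_h=(e^{hD}-1)/h$), and then everything is extended by continuity and density using Theorem~\ref{bvgytdddsesea}. The only differences are cosmetic: you re-derive the closed form of $B(t)$ from the recurrence \eqref{545-980+polk} rather than citing \cite{13-Meixner}, and you spell out the continuity estimates that the paper leaves implicit.
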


The following proposition provides explicit formulas for the action of the operator  $V=(1-\alpha D_{\beta-\alpha})^{-1}$ in $\mathcal E_{\mathrm{min}}^1(\mathbb C)$.

\begin{proposition}\label{njfdyde6} Let $f\in\mathcal E_{\mathrm{min}}^1(\mathbb C)$. If $\alpha=\beta>0$, 
then
\begin{equation}\label{vcsraeE33e}
(Vf)(z)=\int_{\R_+}f(z+x)\,\mu_{\alpha,\alpha,\eta}(dx),\quad z\in\mathbb C,
\end{equation}
if $\alpha>\beta>0$, then
\begin{equation}\label{fzGCXDTSTweq}
(Vf)(z)=\int_{(\alpha-\beta)\mathbb N_0}\big(f(z+x)\alpha/\beta-f(z)(\alpha-\beta)/\beta\big)\,\mu_{\alpha,\beta,\eta}(dx),\quad z\in\mathbb C,
\end{equation}
and if $\Re(\alpha)\ge0$, $\Im(\alpha)>0$ and $\beta=\overline\alpha$, then 
\begin{equation}\label{cxesawra4ea}
(Vf)(z)=\int_{\R}\big(f(z+x+\beta)\alpha/\beta-f(z)(\alpha-\beta)/\beta\big)\,\mu_{\alpha,\beta,\eta}(dx),\quad z\in\mathbb C.
\end{equation} 
\end{proposition}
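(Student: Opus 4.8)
The plan is to prove Proposition~\ref{njfdyde6} by combining Proposition~\ref{xszswcxdzwawawa} with Lemma~\ref{xfsxersarewa4w}. By Proposition~\ref{xszswcxdzwawawa}, the operator $V$ acting in $\mathcal E_{\mathrm{min}}^1(\mathbb C)$ coincides with $(1-\alpha D_{\beta-\alpha})^{-1}$, so the task is to identify this inverse operator explicitly as an integral against the appropriate measure $\mu_{\alpha,\beta,\eta}$. The unifying idea is that the measures $\mu_{\alpha,\beta,\eta}$ (obtained by setting $\sigma=\eta$ in formulas \eqref{buyfr7iei9}--\eqref{cftstew5}) are precisely the measures whose moment generating structure inverts the first-order $h$-difference operator $1-\alpha D_{\beta-\alpha}$.

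First I would verify each formula on the total family of eigenfunctions of $D_{\beta-\alpha}$, namely on the generalized-factorial basis, and then invoke continuity and density. Concretely, recall that $D_h$ has $(\cdot\mid h)_n$ as its eigenbasis with $(D_h(\cdot\mid h)_n)(z)=n(z\mid h)_{n-1}$; taking $h=\beta-\alpha$, the functions $(z\mid\beta-\alpha)_n$ are the natural test functions, but since everything is continuous on $\mathcal E_{\mathrm{min}}^1(\mathbb C)$ (Theorem~\ref{bvgytdddsesea}), it suffices to check the claimed integral representations on a spanning set such as the exponentials $e^{tz}$ or the monomials $z^n$, for which both sides reduce to computations with the moments of $\mu_{\alpha,\beta,\eta}$. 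The key moment input is formula~\eqref{cfgdyjk} from the Appendix, giving $\int r^n\,\mu_{\alpha,\alpha,\eta}(dr)=(\sigma/\alpha\mid-\alpha)_n$ with $\sigma$ replaced by $\eta$; the analogous moment formulas for the negative binomial and Meixner cases will supply the discrete and continuous versions respectively.

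For the gamma case $\alpha=\beta>0$, here $\beta-\alpha=0$, so $D_{\beta-\alpha}=D$ is ordinary differentiation and $V=(1-\alpha D)^{-1}$. I would show that convolution against $\mu_{\alpha,\alpha,\eta}$ realizes this inverse: applying $1-\alpha D$ under the integral sign in \eqref{vcsraeE33e} and integrating by parts against the gamma density, using that $\mu_{\alpha,\alpha,\eta}$ with shape parameter $\eta/\eta=1$ is the exponential distribution with mean $\alpha$, recovers $f(z)$. For the negative binomial case $\alpha>\beta>0$, the operator $D_{\beta-\alpha}$ is a genuine (backward) difference operator on the lattice $(\alpha-\beta)\mathbb N_0$, and I would verify \eqref{fzGCXDTSTweq} by substituting the explicit form of $D_{\beta-\alpha}$ and matching the telescoping difference against the atoms of $\mu_{\alpha,\beta,\eta}$. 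For the Meixner case $\beta=\overline\alpha$, the shift by $\beta$ inside $f(z+x+\beta)$ reflects that the increment $\beta-\alpha$ is now purely imaginary, and the representation \eqref{cxesawra4ea} should follow by the same algebraic manipulation once the moments of the (complex) Meixner measure $\mu_{\alpha,\beta,\eta}$ are matched to the coefficients of $(1-\alpha D_{\beta-\alpha})^{-1}$ expanded as a geometric series in $\alpha D_{\beta-\alpha}$.

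The main obstacle will be the careful justification of convergence and of interchanging the operator $1-\alpha D_{\beta-\alpha}$ with the integral in the two cases where $\mu_{\alpha,\beta,\eta}$ is not a probability measure on the real line but either a discrete lattice measure or a complex-valued measure. In the Meixner case in particular, the integrals must be shown to converge for $f\in\mathcal E^1_{\mathrm{min}}(\mathbb C)$ using the growth control provided by the norms $\lVert\cdot\rVert_t$, and the shift $f(z+x+\beta)$ together with the complex density forces one to track that the combination $f(z+x+\beta)\alpha/\beta-f(z)(\alpha-\beta)/\beta$ is exactly what makes the integral against $\mu_{\alpha,\beta,\eta}$ produce the geometric-series action of $V$; verifying this identity term by term on the basis and then transferring it to all of $\mathcal E^1_{\mathrm{min}}(\mathbb C)$ by continuity is where the bulk of the work lies.
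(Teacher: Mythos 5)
Your strategy is viable and structurally close to the paper's: verify the integral formulas on polynomials using moments of $\mu_{\alpha,\beta,\eta}$, then transfer to $\mathcal E_{\mathrm{min}}^1(\mathbb C)$ by continuity. The difference lies in the pivot identity. You work directly with $V=(1-\alpha D_{\beta-\alpha})^{-1}$ from Proposition~\ref{xszswcxdzwawawa}, which forces a case-by-case verification: integration by parts against the exponential density for $\alpha=\beta$, a telescoping sum over the geometric atoms for $\alpha>\beta>0$, and a moment-matching argument against the complex density in the Meixner case. The paper instead writes $V=\alpha\partial^-+1$ and uses the \emph{other} umbral representation $\partial^-=D_{\alpha-\beta}(1-\beta D_{\alpha-\beta})^{-1}=\sum_{k\ge0}\beta^kD_{\alpha-\beta}^{\,k+1}$, evaluated on the binomial-type basis $\big((z\mid\alpha-\beta)_n\big)_{n}$; combined with the single moment identity $\int_{X_{\alpha,\beta}}(x\mid\alpha-\beta)_n\,\mu_{\alpha,\beta,\eta}(dx)=\beta^n\,n!$ (and its shifted Meixner analogue, \eqref{vydste} with $\sigma=\eta$), this reduces all three cases to one short computation. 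Your route is more probabilistically transparent; the paper's is more uniform.

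Two points in your plan need repair. First, the exponentials $e^{tz}$ with $t\ne 0$ do \emph{not} belong to $\mathcal E_{\mathrm{min}}^1(\mathbb C)$ (they have order $1$ and type $|t|$, not minimal type), so they cannot serve as a spanning set there; you must test on polynomials, which are dense, and separately check that the right-hand sides of \eqref{vcsraeE33e}--\eqref{cxesawra4ea} define continuous operators on $\mathcal E_{\mathrm{min}}^1(\mathbb C)$ via the norms $\lVert\cdot\rVert_t$ and the decay of the densities (for the Meixner case one needs $t<\operatorname{Arg}(\alpha)/\Im(\alpha)$, as in the paper's estimate \eqref{vyd6}). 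Second, your explanation of the shift $f(z+x+\beta)$ in the Meixner case --- that it ``reflects that the increment $\beta-\alpha$ is purely imaginary'' --- is not the reason. The shift occurs because in that case $l=0$, so $\mu_{\alpha,\beta,\eta}$ has mean $0$ rather than mean $\sigma/\alpha=\eta/\alpha=\beta$ as in the cases $\alpha\ge\beta>0$; the correct moment input is $\int_{\R}(x+\beta\mid\alpha-\beta)_n\,\mu_{\alpha,\beta,\eta}(dx)=\beta^n\,n!$\,. Without this recentering the Meixner computation will not close. With those two corrections your argument goes through.
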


\section{Proofs}\label{vftds5a453q}
\subsection{Proof of Lemmas~\ref{yufd6e64w5}--\ref{xdesrea5qw3} }

\subsubsection{{The case of the gamma distribution and the negative binomial distribution}}
\label{vcrts5eywy5}

First, we will prove Lemmas~\ref{yufd6e64w5}--\ref{xdesrea5qw3} in the case $\alpha\ge\beta>0$. We divide the proof into several steps.

{\it Step 1}. By Corollary~\ref{vctrsw645bvf}, Proposition~\ref{f7iri7}, and formula~\eqref{cfgdyjk} in the Appendix, we get, for $z\in(-\sigma/\alpha,+\infty)$,  
\begin{align}
(\mathcal Sx^n)(z)&=(\mathcal S\mathcal S^{-1}\mathcal R^n \mathcal S1)(z)= 
(\mathcal R^n1)(z)= \sum_{k=1}^{n} (\alpha-\beta)^{n-k}  S(n,k)\,(z+\sigma/\alpha \mid -\beta)_k
\notag\\
&=\int_{X_{\alpha,\beta}} x^n\,\mu_{\alpha,\beta,\sigma+\alpha z}(dx).\label{cxdw3rvcrdr}
\end{align}
Hence, for each polynomial $p\in\mathcal P(\mathbb C)$ and $z\in(-\sigma/\alpha,+\infty)$,
\begin{equation}\label{vfye67i}
(\mathcal Sp)(z)=\int_{X_{\alpha,\beta}}p\,d\mu_{\alpha,\beta,\sigma+\alpha z}.
\end{equation}
Note that $(\mathcal Sp)(z)$ can be extended to an entire function of $z\in\mathbb C$.

\begin{lemma}\label{bvgdseaw}
Let $p\in\mathcal P(\mathbb C)$. Then the function
$\mathcal D_{\alpha,\beta,\sigma}\ni z\mapsto\int_{X_{\alpha,\beta}}p\,d\mu_{\alpha,\beta,\sigma+\alpha z}$
is well-defined and analytic.
\end{lemma}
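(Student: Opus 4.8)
The plan is to reduce to monomials and then treat the two possible forms of $\mu_{\alpha,\beta,\zeta}$ separately. By linearity of the integral in $p$, it suffices to prove that $z\mapsto\int_{X_{\alpha,\beta}}x^k\,\mu_{\alpha,\beta,\sigma+\alpha z}(dx)$ is well-defined and analytic for each monomial $x^k$ ($k\in\mathbb N_0$); the general case then follows by taking finite linear combinations. Throughout I write $\zeta=\sigma+\alpha z$ and recall $\eta=\alpha\beta$. For $\alpha=\beta>0$ the domain is $\mathcal D_{\alpha,\alpha,\sigma}=\{z:\Re(\alpha z)>-\sigma/2\}$ by \eqref{tedrer4we4}, so $\Re(\zeta)>\sigma/2>0$ and hence $\Re(\zeta/\eta)>0$ there; for $\alpha>\beta>0$ one has $\mathcal D_{\alpha,\beta,\sigma}=\mathbb C$.

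In the gamma case I would simply compute the moment. Replacing $\sigma$ by $\zeta$ in the density \eqref{buyfr7iei9} and using the elementary identity $\int_{\R_+}x^{s-1}e^{-x/\alpha}\,dx=\alpha^s\,\Gamma(s)$ (valid for $\Re(s)>0$, which applies since $\Re(k+\zeta/\eta)>0$), one gets
\[
\int_{\R_+}x^k\,\mu_{\alpha,\alpha,\zeta}(dx)=\alpha^k\,\frac{\Gamma(k+\zeta/\eta)}{\Gamma(\zeta/\eta)}=\bigl(\tfrac{\zeta}{\alpha}\mid-\alpha\bigr)_k,
\]
a polynomial in $\zeta$, hence in $z$. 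The same inequality $\Re(\zeta/\eta)>0$ makes the integral absolutely convergent, so the expression is well-defined, and being a polynomial it is entire; in particular it is analytic on $\mathcal D_{\alpha,\alpha,\sigma}$.

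In the negative binomial case the measure from \eqref{buyfbufyt7r7iei9} is purely atomic, so
\[
\int_{X_{\alpha,\beta}}x^k\,\mu_{\alpha,\beta,\zeta}(dx)=\Bigl(1-\tfrac{\beta}{\alpha}\Bigr)^{\zeta/\eta}\sum_{n=0}^{\infty}\bigl((\alpha-\beta)n\bigr)^k\,\Bigl(\tfrac{\beta}{\alpha}\Bigr)^n\frac{(\zeta/\eta)^{(n)}}{n!}.
\]
Each summand is entire in $z$, since $(1-\beta/\alpha)^{\zeta/\eta}=\exp\bigl((\zeta/\eta)\log(1-\beta/\alpha)\bigr)$ is entire in $\zeta$ and the rising factorial $(\zeta/\eta)^{(n)}$ is a polynomial in $\zeta$. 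To finish I would prove local uniform convergence and invoke Weierstrass' theorem. Fixing a compact set $K\subset\mathbb C$ and setting $M=\sup_{z\in K}|\zeta/\eta|$, the bound $|(\zeta/\eta)^{(n)}|\le\prod_{j=0}^{n-1}(|\zeta/\eta|+j)\le(M)^{(n)}$ together with $(M)^{(n)}/n!=\binom{n+M-1}{n}=O(n^{M-1})$ shows that the general term is dominated, uniformly in $z\in K$, by $C_K\,(\beta/\alpha)^n\,n^{k+M-1}$. Since $0<\beta/\alpha<1$, this is summable, so the series converges uniformly on $K$ and its sum is holomorphic on $\mathbb C$; the same estimate shows $\mu_{\alpha,\beta,\zeta}$ has finite total variation, so the integral is well-defined.

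The only real work is the negative binomial estimate, where the strict inequality $\alpha>\beta$ is essential: the geometric weight $(\beta/\alpha)^n$ must dominate both the factor $((\alpha-\beta)n)^k$ and the polynomially growing factorials $(\zeta/\eta)^{(n)}/n!$, uniformly over a compact set of $z$. In the gamma case there is no obstacle, the moment being literally a polynomial. I note finally that, once analyticity is established, the explicit polynomial identity of \eqref{cxdw3rvcrdr} extends automatically from the real ray to all of the connected domain $\mathcal D_{\alpha,\beta,\sigma}$ by the identity theorem.
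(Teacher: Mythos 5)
Your proof is correct, and it follows the paper's overall skeleton (reduce to monomials by linearity, then split into the cases $\alpha=\beta$ and $\alpha>\beta$); your treatment of the negative binomial case is essentially the paper's own, namely bounding $\lvert(\zeta/\eta)^{(n)}\rvert$ by a rising factorial at a real point and deducing local uniform convergence of the series, with the only cosmetic difference that you make the $O(n^{M-1})$ growth of $(M)^{(n)}/n!$ explicit where the paper just cites integrability of monomials against $\mu_{\alpha,\beta,\lvert\zeta\rvert}$. Where you genuinely diverge is the gamma case: the paper proves analyticity of $\zeta\mapsto\int_0^\infty x^{k-1+\zeta/\eta}e^{-x/\alpha}\,dx$ abstractly, by differentiating under the integral sign with a dominating bound on $\lvert\log(x)\rvert\,x^{k-1+\Re(\zeta)/\eta}e^{-x/\alpha}$, whereas you evaluate the integral in closed form via $\int_{\R_+}x^{s-1}e^{-x/\alpha}\,dx=\alpha^s\Gamma(s)$ and observe that the moment $\alpha^k\,\Gamma(k+\zeta/\eta)/\Gamma(\zeta/\eta)=(\zeta/\alpha\mid-\alpha)_k$ is a polynomial in $\zeta$, hence entire. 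Your shortcut is cleaner and has the bonus of re-deriving the appendix identity \eqref{cfgdyjk} directly; the paper's dominated-convergence argument is more robust in that it does not depend on explicit integrability and is exactly the template reused later (Step 2 of the Meixner case) where no closed form is available. Both establish well-definedness correctly from $\Re(\zeta/\eta)>0$ on $\mathcal D_{\alpha,\alpha,\sigma}$, and your closing remark about extending \eqref{cxdw3rvcrdr} by the identity theorem matches how the paper uses the lemma.
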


\begin{proof} Let $\alpha=\beta$. For $z\in\mathcal D_{\alpha,\beta,\sigma}$, we have $\Re(\sigma+\alpha z)>\sigma/2$. Hence, it is sufficient to prove that, for each $k\in\mathbb N_0$, the function
$$\zeta\mapsto\int_0^\infty x^k\,\mu_{\alpha,\alpha,\zeta}(dx)=\frac{1}{\Gamma (\frac{\zeta}\eta)} \,  \alpha^{-\frac{\zeta}\eta}\int_0^\infty  x^{k-1+ \frac{\zeta}\eta} \, e^{- \frac{x}{\alpha}} \, dx$$
is well-defined and analytic on the domain $\{\zeta\in\mathbb C\mid\Re(\zeta)>0\}$. To this end, it is sufficient to check the analyticity of the function 
\begin{equation}\label{vcyrdu}
\zeta\mapsto\int_0^\infty  x^{k-1+ \frac{\zeta}\eta} \, e^{- \frac{x}{\alpha}} \, dx.\end{equation}
We have
$$\int_0^\infty  \big|x^{k-1+ \frac{\zeta}\eta} \, e^{- \frac{x}{\alpha}} \big|\, dx=\int_0^\infty  x^{k-1+ \frac{\Re(\zeta)}\eta} \, e^{- \frac{x}{\alpha}} \, dx<\infty,$$
hence the function in \eqref{vcyrdu} is well defined. Furthermore, 
\begin{equation}\label{vyr6eufd}
\bigg|\frac{d}{d\zeta}\,x^{k-1+ \frac{\zeta}\eta} \, e^{- \frac{x}{\alpha}} \bigg|=
\frac{1}{\eta}\,|\log(x)|\,x^{k-1+ \frac{\Re(\zeta)}\eta} \, e^{- \frac{x}{\alpha}}.
\end{equation}
Note that $\log(x)\le x$ for $x\ge1$ and, for each  $\varepsilon>0$ there exists $C_1>0$ such that  $|\log(x)|\le C_1x^{-\varepsilon}$ for $x\in(0,1)$. Hence,  formula \eqref{vyr6eufd} easily implies that the function in~\eqref{vcyrdu} is indeed analytic on $\{\zeta\in\mathbb C\mid\Re(\zeta)>0\}$.

Next, let $\alpha>\beta$. It is sufficient to prove that, for each $k\in\mathbb N_0$, the function
\begin{equation}\label{vctesw5yu}
\zeta\mapsto \int_{(\alpha-\beta)\mathbb N_0} x^k\,\mu_{\alpha,\beta,\zeta}(dx)=\left( 1-\frac{\beta}{\alpha} \right)^{\frac{\zeta}{\eta}} \sum_{n=0}^\infty \left( \frac{\beta}{\alpha} \right)^n \frac{1}{n!} \left( \frac{\zeta}{\eta} \right)^{(n)} \big((\alpha-\beta)n\big)^k
\end{equation}
is entire. We have
$$\sum_{n=0}^\infty \left( \frac{\beta}{\alpha} \right)^n \frac{1}{n!}\,\bigg|\left( \frac{\zeta}{\eta} \right)^{(n)}\bigg|\, \big((\alpha-\beta)n\big)^k\le \sum_{n=0}^\infty \left( \frac{\beta}{\alpha} \right)^n \frac{1}{n!} \left( \frac{|\zeta|}{\eta} \right)^{(n)} \big((\alpha-\beta)n\big)^k<\infty,$$
because each monomial $x^k$ is integrable with respect to the negative binomial distribution~$\mu_{\alpha, \beta, \lvert \zeta\rvert}$. Hence, the series in \eqref{vctesw5yu} converges uniformly on compact sets in $\mathbb C$, which implies that the function in  \eqref{vctesw5yu} is entire. 
\end{proof}

Now formula \eqref{vfye67i}, Lemma~\ref{bvgdseaw}, and the identity theorem for analytic functions imply
\begin{equation}\label{gcfsrthe5u86}
(\mathcal S p)(z)= \int_{X_{\alpha,\beta}} p\,d\mu_{\alpha,\beta,\sigma+\alpha z},\quad p\in\mathcal P(\mathbb C),\ z\in\mathcal D_{\alpha,\beta,\sigma}.
\end{equation}

{\it Step 2}. Let $\alpha=\beta$, let $f\in L^2(\R_+,\mu_{\alpha,\alpha,\sigma})$ and let $-\frac{\sigma}{2\alpha} < \delta < \Delta < +\infty$. Then, for each $z \in \mathbb C$ with $\delta \leq \Re (z) \leq \Delta$, we have
\begin{align}
&\int_0^{\infty} \lvert f(x) \rvert \, \lvert x^{-1+\frac{\alpha z+\sigma}{\eta}} \rvert \, e^{- \frac{x}{\alpha}} \, dx \notag = \int_0^{\infty} \lvert f(x) \rvert \, x^{-1+\frac{\alpha \Re (z)+\sigma}{\eta}} \, e^{- \frac{x}{\alpha}} \, dx \notag \\
&\quad\leq \left( \int_0^{\infty} \lvert f(x) \rvert^2 \, x^{-1+\frac{\sigma}{\eta}} \, e^{- \frac{x}{\alpha}} \, dx \right)^\frac{1}{2} \left( \int_0^{\infty} x^{-1+\frac{\sigma+ 2\alpha\Re (z) }{\eta}} \, e^{- \frac{x}{\alpha}} \, dx \right)^{\frac{1}{2}}\notag\\
&\quad \le C_2 \, \lVert f \rVert_{L^2(\mu_{\alpha, \alpha, \sigma})} \label{76iry6u}
\end{align}
for a constant $C_2>0$ that depends on $\delta$ and $\Delta$. Now write $f(x)=\sum_{n=0}^\infty f_ns_n(x) $ and define, for $N\in\mathbb N$, $p_N(x)=\sum_{n=0}^Nf_ns_n(x)$. Formulas \eqref{gcfsrthe5u86} and  \eqref{76iry6u} imply that  $(\mathcal S p_N )(z)=\sum_{n=0}^N f_n(z\mid\alpha)_n $ converges uniformly on compact sets in $\mathcal D_{\alpha,\alpha,\sigma}$ to an analytic function and 
\begin{equation}\label{rs4qa4q}
(\mathcal Sf)(z)=\sum_{n=0}^\infty f_n (z\mid\alpha)_n=\int_0^{\infty} f \, d \mu_{\alpha, \alpha, \alpha z + \sigma}.\end{equation}

{\it Step 3.} Let $\alpha>\beta$ and  let $f\in L^2((\alpha-\beta)\mathbb N_0,\mu_{\alpha,\beta,\sigma})$.  We have
\begin{align}
&\sum_{n=0}^\infty \ \lvert f( (\alpha - \beta)n ) \rvert \left( \frac{\beta}{\alpha} \right)^n \frac{1}{n!} \, \bigg\lvert \left( \frac{\sigma + \alpha z}{\eta} \right)^{(n)} \bigg\rvert	\notag \\
&\qquad \leq \left( 1 - \frac{\beta}{\alpha} \right)^{-\frac{\sigma}{\eta}} {\lVert f \rVert}_{ L^2 (\mu_{\alpha, \beta, \sigma})}  \bigg( \sum_{n=0}^\infty \left( \frac{\beta}{\alpha} \right)^n \frac{\left[ \left( \frac{\sigma + \alpha \lvert z \rvert}{\eta} \right)^{(n)} \right]^2}{n! \left( \frac{\sigma}{\eta} \right)^{(n)}} \bigg)^{\frac{1}{2}} . \label{gf45gp}
\end{align}

\begin{lemma} \label{5t4g8} For any $\, a_1 > a_2 > 0$ and $\, 0 < q < 1$, 
$$ \sum_{n=0}^\infty q^n \frac{\left[(a_1)^{(n)}\right]^2}{n!\, (a_2)^{(n)}} < \infty. $$
\end{lemma}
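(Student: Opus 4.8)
The plan is to prove convergence by the ratio test (d'Alembert's criterion), which reduces the problem to a single limit computation. First I would observe that, since $a_1,a_2>0$, every term
\[
b_n := q^n\,\frac{\big[(a_1)^{(n)}\big]^2}{n!\,(a_2)^{(n)}}
\]
is strictly positive (each rising factorial $(a)^{(n)}$ is a product of positive numbers). Hence $\sum_n b_n$ is a series of positive terms, and the ratio test applies directly.

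Next I would compute the consecutive ratio using the elementary recursion $(a)^{(n+1)}=(a+n)\,(a)^{(n)}$ for the rising factorials, together with $(n+1)!/n!=n+1$. This gives
\[
\frac{b_{n+1}}{b_n}=q\,\frac{(a_1+n)^2}{(n+1)(a_2+n)},
\]
where the squared rising factorial of $a_1$ contributes the factor $(a_1+n)^2$, the factorial contributes $1/(n+1)$, and the rising factorial of $a_2$ in the denominator contributes $1/(a_2+n)$.

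Then I would pass to the limit $n\to\infty$. Since $(a_1+n)^2/[(n+1)(a_2+n)]\to1$, we obtain $\lim_{n\to\infty} b_{n+1}/b_n=q$. As $0<q<1$, the ratio test yields convergence of $\sum_n b_n$, which is exactly the claim.

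I do not anticipate a genuine obstacle here: the only point to get right is the bookkeeping of the three factorial ratios and confirming that the quadratic numerator is precisely balanced by the product of the two linear denominators, so that the polynomial-in-$n$ part tends to $1$ and the geometric factor $q<1$ forces convergence. (An equivalent route is to note the asymptotic $b_n\sim C\,q^n\,n^{2a_1-a_2-1}$ via Stirling and compare with a convergent geometric-type series, but the ratio test is more self-contained.) In particular, the estimate holds for all $a_1,a_2>0$; the hypothesis $a_1>a_2$ is what the application in \eqref{gf45gp} supplies but is not actually needed for the convergence itself.
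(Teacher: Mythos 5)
Your proof is correct, and it takes a genuinely different (and more streamlined) route than the paper. You apply the ratio test directly: with $b_n = q^n\,[(a_1)^{(n)}]^2/(n!\,(a_2)^{(n)})$, the recursion $(a)^{(n+1)}=(a+n)(a)^{(n)}$ gives $b_{n+1}/b_n = q\,(a_1+n)^2/[(n+1)(a_2+n)] \to q < 1$, and positivity of all terms makes the test applicable. The paper instead argues via explicit growth bounds: from the normalization of the negative binomial distribution it extracts $(a_1)^{(n)} \le C(1+\varepsilon)^n\, n!$ for every $\varepsilon>0$, combines this with the crude lower bound $(a_2)^{(n)} \ge a_2\,(n-1)!$ to get $(a_1)^{(n)}/(a_2)^{(n)} \le C'(1+\varepsilon)^n$, and then dominates the series by $\sum_n (q(1+\varepsilon)^2)^n$ with $\varepsilon$ chosen so that $(1+\varepsilon)^2 < 1/q$. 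Your argument is more self-contained (no appeal to the negative binomial series) and cleanly exposes why the result holds; the paper's argument yields term-wise geometric bounds, though these are not exploited elsewhere. Your observation that the hypothesis $a_1>a_2$ is not needed --- only $a_1,a_2>0$ and $0<q<1$ --- is also correct; it is simply what the application in the surrounding estimate happens to supply.
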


\begin{proof}
It follows from the construction of a negative binomial distribution that, for each $q\in(0,1)$ and $a_1 > 0$, 
$\sum_{n=0}^\infty q^n \, \frac{1}{n!} \, (a_1)^{(n)} < \infty$.
Therefore, for each $\varepsilon>0$, we have
$(a_1)^{(n)} \leq C_3  (1+\varepsilon)^n\,n!$\,, 
 where the constant $ C_3 > 0 $ depends only on $a_1$ and $\varepsilon$. Next, for any $a_2 > 0$,  $(a_2)^{(n)}\ge a_2  (n-1)!$\,. Therefore, for any $a_1>a_2>0$ and $\varepsilon>0$,
$
(a_1)^{(n)}/(a_2)^{(n)} \leq C_4 (1+\varepsilon)^n
$,	
 where $C_4 > 0 $ depends on $a_1$, $a_2$ and $\varepsilon$. Hence, 
$$\sum_{n=0}^\infty q^n \, \frac{[(a_1)^{(n)}]^2}{n! \, (a_2)^{(n)}} \leq C_3 C_4 
\sum_{n=0}^\infty  \big(q(1+\varepsilon)^2\big)^n < \infty,$$
if we choose $\varepsilon >0$ such that $(1+\varepsilon)^2< 1/q$. 	
\end{proof}

Using estimate \eqref{gf45gp} and Lemma~\ref{5t4g8}, we now show similarly to Step 2 that, for $f(x)=\sum_{n=0}^\infty f_n s_n(x)\in L^2((\alpha-\beta)\mathbb N_0,\mu_{\alpha,\beta,\sigma})$,  
\begin{equation}\label{vctrst5rs4qa4q}
(\mathcal Sf)(z)=\sum_{n=0}^\infty f_n (z\mid\beta)_n=\int_0^{\infty} f \, d \mu_{\alpha, \beta, \alpha z + \sigma}, \end{equation}
the series in \eqref{vctrst5rs4qa4q} converges uniformly on compact sets in $\mathbb C$, and hence, $(\mathcal Sf)(z)$ is an entire function. 

Thus, Lemma~\ref{xfsxersarewa4w} (i) is proven. 

{\it Step 4.} To finish the proof of Lemma~\ref{yufd6e64w5}, we only need to prove formula \eqref{i78yu}. In fact, the first equality in \eqref{i78yu} is an immediate consequence of the fact that $D_\beta$ is the lowering operator for the polynomial sequence $\big((z\mid\beta)_n\big)_{n=0}^\infty$. 
The second equality in \eqref{i78yu} is a well-known identity for the $n$th difference operator, see e.g.\ formula~(6.2) in \cite{14-Quaintance}. 

{\it Step 5.} Let $z\in\mathcal D_{\alpha,\beta,\sigma}$. It follows from Steps 3 and 4 that  there exists a constant $C_5>0$ such that, for all $f\in L^2(X_{\alpha,\beta},\mu_{\alpha,\beta,\sigma})$, we have $|(\mathcal Sf)(z)|\le C_5\|f\|_{L^2(\mu_{\alpha,\beta,\sigma})}$. Hence, by the Riesz representation theorem, there exists $\mathcal K_z\in L^2(X_{\alpha,\beta},\mu_{\alpha,\beta,\sigma})$ such that 
\begin{equation}\label{vfye6wqa}(\mathcal Sf)(z)=\int_{X_{\alpha,\beta}}f(x)\mathcal K_z(x)\,\mu_{\alpha,\beta,\sigma}(dx)\quad\text{for all }f\in L^2(X_{\alpha,\beta},\mu_{\alpha,\beta,\sigma}).
\end{equation}
By \eqref{7ir7yj} and \eqref{vfye6wqa}, we conclude that $\mathcal K_z(x)=\mathcal E(x,z)$, where $\mathcal E(x,z)$ is given by \eqref{5yrthfg}, and $\mathcal K_z(x)$ is the Radon--Nykodim derivative $\frac{d\mu_{\alpha,\beta,\alpha z+\sigma}}{d\mu_{\alpha,\beta,\sigma}}(x)$. This easily implies Lemma~\ref{vcydyrtds6y}.

{\it Step 6.} In view of Proposition~\ref{f7iri7}, to prove Lemma~\ref{xdesrea5qw3}, we only need to check that the operator $\mathcal R$ with domain $\mathcal P(\mathbb C)$ is essentially self-adjoint in ${\mathcal F}_{\alpha, \beta, \sigma} $. But this can be easily shown by using Nelson's analytic vector criterium, see e.g.\ \cite[Section~X.6]{ReedSimon}. 

\subsubsection{The case of the Meixner distribution}

Now we consider the case $\Re(\alpha)\ge0$, $\Im(\alpha)>0$, $\beta=\overline\alpha$. We again divide the proof into several steps.

{\it Step 1}. Let $K>1$ be fixed. We state that there exists a constant $C_1>0$ such that
\begin{equation}\label{crstrs}
|\Gamma(ix+y)|\le C_1 \exp \left(-\frac{\pi}{2} \,\lvert x \rvert \right)(1+|x|)^{K},\quad x\in\R,\ y\in[1/K,K]. 
\end{equation}

Indeed, by e.g.~\cite[p.~15]{Gamma-Lebedev}, the following asymptotic formula holds, for all $x,y\in\R$, $-ix-y\not\in\mathbb N_0$:
	\begin{equation*}
		\lvert \Gamma(ix+y) \rvert = \sqrt{2\pi} \,  \exp \left(-\frac{\pi}{2} \lvert x \rvert \right) \, {\lvert x \rvert}^{ y -\frac{1}{2}} \, \left( 1+ E(x,y)\right),
	\end{equation*}
where the function $E(x,y)$ satisfies, for   each fixed $R>0$, 
$$\lim_{|x|\to\infty}\sup_{y\in[-R,R]}|E(x,y)|=0.$$
 Hence, formula~\eqref{crstrs} easily follows if we take into account that the function
	$$\R\times[1/K,K]\ni(x,y)\mapsto |\Gamma(ix+y)|\in\R$$
is continuous, hence bounded on $[-L,L]\times[1/K,K]$ for each $L>0$.

{\it Step 2}. Recall the domain $\mathfrak D_{\alpha,\beta}$ defined in Section~\ref{crtds6ewu65e3}. We state that, for each $p\in\mathcal P(\mathbb C)$, the function
$\mathfrak D_{\alpha,\beta}\ni \zeta\mapsto\int_\R p(x)\,\mu_{\alpha,\beta,\zeta}(dx)\in\mathbb C$
is well-defined and analytic.

Indeed, since $\operatorname{Arg}(\alpha) \in (0, \pi/2]$, we have $\cos\big(\frac\pi2 - \operatorname{Arg}(\alpha)\big) > 0$. Therefore, the function
$$\zeta\mapsto C_{\alpha,\beta,\zeta}=\frac{\big(2\cos\big(\frac\pi2 - \operatorname{Arg}(\alpha)\big)\big)^{\frac{\zeta}{\eta }}}{4\Im(\alpha)\pi \, \Gamma(\frac{\zeta}{\eta })} \, \exp\bigg( \frac{(\frac\pi2-\operatorname{Arg}(\alpha))\zeta\Re(\alpha)}{\Im(\alpha)\eta } \bigg)\in\mathbb C$$
is analytic on the domain $\{\zeta\in\mathbb C\mid\Re(\zeta)>0\}$. Hence, it is sufficient to prove that, for each $n\in\mathbb N_0$, the following function is well-defined and analytic:
\begin{equation}\label{vcyrsd5ws53q5q}
\mathfrak D_{\alpha,\beta}\ni \zeta\mapsto\int_\R x^n\exp\big( (\pi/2 - \operatorname{Arg}(\alpha))x/\Im(\alpha) \big)g_{\alpha,\beta,\zeta}(x)\,dx\in\mathbb C,\end{equation}
where
\begin{align}
g_{\alpha,\beta,\zeta}(x)&= \Gamma \left( \frac{ix}{2\Im(\alpha)} + \frac{i\zeta}{2\alpha\Im(\alpha)} \right)
	 \Gamma \left( -\frac{ix}{2\Im(\alpha)} - \frac{i\zeta}{2\beta\Im(\alpha)} \right)\label{vctsreser5sw}\\
&=\Gamma \big( d_1(\zeta) + i (l_1(\zeta)+x/(2 \Im(\alpha))  \big)\Gamma \big( d_2(\zeta) + i (l_2(\zeta)-x/(2 \Im(\alpha))  \big).\notag	 
	 \end{align}
Here
\begin{align}
		d_1(\zeta) &= \frac{\Re(\zeta) \Im(\alpha) - \Im(\zeta) \Re(\alpha)}{2 \eta  \Im(\alpha)},\quad 		l_1(\zeta) = \frac{\Re(\zeta) \Re(\alpha) + \Im(\zeta) \Im(\alpha)}{2 \eta  \Im(\alpha)},\notag\\
		d_2(\zeta) &= \frac{\Re(\zeta) \Im(\alpha) + \Im(\zeta) \Re(\alpha)}{2 \eta  \Im(\alpha)},	\quad
		l_2(\zeta) = \frac{-\Re(\zeta) \Re(\alpha)+ \Im(\zeta) \Im(\alpha)}{2 \eta  \Im(\alpha)} 	.\label{ctrsese5s}
	\end{align}
For each $\zeta\in \mathfrak D_{\alpha,\beta}$, we have $d_1(\zeta)>0$ and $d_2(\zeta)>0$. Therefore, for a fixed $x\in\R$, the function $\mathfrak D_{\alpha,\beta}\ni\zeta\mapsto g_{\alpha,\beta,\zeta}(x)\in\mathbb C$ is analytic. 

Let $\zeta \in \mathfrak D_{\alpha, \beta}$ be fixed. For $R>0$, denote $B(\zeta,R)=\{z\in\mathbb C\mid |z-\zeta|\le R\}$. Choose $R>0$ such that $B(\zeta,R)\subset \mathfrak D_{\alpha,\beta}$. 
		To prove the differentiability of the map in~\eqref{vcyrsd5ws53q5q} at point $\zeta$, it is sufficient to prove that 
	\begin{equation}	\label{45kjlo90}
		\int_{\mathbb R} \lvert x \rvert^n \,\exp\big( (\pi/2 - \operatorname{Arg}(\alpha))|x|/\Im(\alpha) \big) \sup_{z \in B(\zeta, R/2)}  \bigg\lvert  \frac{\partial}{\partial z} \, g_{\alpha, \beta, z} (x) \bigg\rvert \, dx < \infty.
	\end{equation}
(We  used the inequality  $\pi/2 - \operatorname{Arg}(\alpha)\ge0$.)
By Cauchy's integral formula,  \eqref{45kjlo90} would follow from 
\begin{equation}	\label{drtsdrses45kjlo90}
		\int_{\mathbb R} \lvert x \rvert^n \,\exp\big( (\pi/2 - \operatorname{Arg}(\alpha))|x|/\Im(\alpha) \big) \sup_{z \in B(\zeta, R)}  |  g_{\alpha, \beta, z} (x) \rvert \, dx < \infty.
	\end{equation}

Choose $K>0$ such that, for all $z \in B(\zeta, R$), both  $d_1(z)$ and $d_2(z)$ belong to $[\frac1K,K]$.
	Denote	$L_1 = \max_{z \in B(\zeta, R)} \lvert l_1 (z) \rvert$ and $L_2 = \max_{z \in B(\zeta, R)} \lvert l_2 (z) \rvert$. Then, by \eqref{crstrs}, there exists a constant $C_2>0$ such that
\begin{equation}\label{cdtrscfdrs}
 \sup_{z \in B(\zeta, R)}  |  g_{\alpha, \beta, z} (x) |\le C_2\exp\bigg(-\frac{\pi\,|x|}{2\Im(\alpha)}\bigg)(1+|x|)^{2K}.
 \end{equation}
Hence, the integral in \eqref{45kjlo90} is bounded by the following integral
$$ C_2\int_{\mathbb R} \lvert x \rvert^n\exp
\big(-\operatorname{Arg}(\alpha)|x|/\Im(\alpha)\big)(1+|x|)^{2K}<\infty, $$
where we used that $\operatorname{Arg}(\alpha)>0$.

{\it Step 3}. Let $(p_n)_{n=0}^\infty$ be the monic polynomial sequence over $\mathbb C$ satisfying the recurrence formula~\eqref{vtyd6yed}. Thus, for $x\in\R$, we have $s_n(x)=p_n(x+\sigma/\alpha)$. Let $\mathcal I$ be the linear operator in $\mathcal P(\mathbb C)$ as defined in Proposition~\ref{f7iri7}. Define  $\widetilde {\mathcal S}=\mathcal I^{-1}$. Thus, $(\widetilde {\mathcal S}p_n)(z)=(z\mid\beta)_n$. Similarly to \eqref{cxdw3rvcrdr}, we conclude from Corollary~\ref{vctrsw645bvf} and Proposition~\ref{f7iri7} that 
\begin{equation}\label{cxterawa44}
(\widetilde{\mathcal S}\zeta^n)(z)=\sum_{k=1}^{n} (\alpha-\beta)^{n-k}  S(n,k)\,(z+\sigma/\alpha \mid -\beta)_k\,,\quad n\in\mathbb N.
\end{equation}
On the other hand, for each $r\in(-\sigma/\eta,+\infty)$ and $z=\beta r$, we have $\sigma+\alpha z=\sigma+\eta r\in(0,\infty)$. Hence, by  \eqref{vcdtre54w5uw5xs},
\begin{equation}\label{xts5ew5}
\int_{\mathbb R} (x +z+ \sigma/\alpha)^n \, \mu_{\alpha, \beta, \sigma+\alpha z}(dx) = \sum_{k=1}^n (\alpha-\beta)^{n-k}  S(n,k)  (z+\sigma/\alpha \mid -\beta)_k\,,\quad n\in\mathbb N . 
\end{equation}
By \eqref{cxterawa44} and \eqref{xts5ew5}, we have, for each $p\in\mathcal P(\mathbb C)$,
\begin{equation}\label{vydt6u}
(\widetilde{\mathcal S}p)(z)=\int_{\mathbb R} p(x +z+ \sigma/\alpha) \, \mu_{\alpha, \beta, \sigma+\alpha z}(dx),\quad z=\beta r,\ r\in(-\sigma/\eta,+\infty). 
\end{equation}
Setting $p=p_n$ into \eqref{vydt6u} gives
\begin{align}
&\int_{\mathbb R} s_n(x +z) \, \mu_{\alpha, \beta, \sigma+\alpha z}(dx)=
\int_{\mathbb R} p_n(x +z+ \sigma/\alpha) \, \mu_{\alpha, \beta, \sigma+\alpha z}(dx)\notag\\
&\quad =(\widetilde{\mathcal S}p_n)(z)=(z\mid\beta)_n=(\mathcal Ss_n)(z),\quad z=\beta r,\ r\in(-\sigma/\eta,+\infty),\ n\in\mathbb N_0.\notag
\end{align}
Therefore, for each  $p\in\mathcal P(\mathbb C)$,
\begin{equation}\label{yd6e6}
(\mathcal Sp)(z)=\int_{\mathbb R} p(x +z) \, \mu_{\alpha, \beta, \sigma+\alpha z}(dx),\quad z=\beta r,\ r\in(-\sigma/\eta,+\infty).
\end{equation}

Recall the open domain $\Psi_{\alpha,\beta,\sigma}$ defined by \eqref{ctrs5wu65cdrsdr}.  Obviously, 
$$\{z=\beta r\mid r\in(-\sigma/\eta,+\infty)\}\subset \Psi_{\alpha,\beta,\sigma}.$$ 
It follows from Step 2 that, for each $p\in\mathcal P(\mathbb C)$, the function
$$\Psi_{\alpha,\beta,\sigma}\ni z\mapsto \int_{\mathbb R} p(x +z) \, \mu_{\alpha, \beta, \sigma+\alpha z}(dx)\in\mathbb C$$
is analytic. On the other hand, $(\mathcal Sp)(z)$ is an entire function. Hence, by \eqref{yd6e6} and the identity theorem for analytic functions,
\begin{equation}\label{xse5wyu}
(\mathcal Sp)(z)=\int_{\mathbb R} p(x +z) \, \mu_{\alpha, \beta, \sigma+\alpha z}(dx),\quad z\in \Psi_{\alpha,\beta,\sigma}.
\end{equation}

{\it Step 4}. A direct calculation shows that, if $\Re(\alpha)=0$ then $\R\subset \Psi_{\alpha,\beta,\sigma}$, and if $\Re(\alpha)>0$ then $(-\sigma/(2\Re(\alpha)),\infty)\subset \Psi_{\alpha,\beta,\sigma}$. Below we will use the notation $-\sigma/(2\Re(\alpha))$ even if $\Re(\alpha)=0$, meaning that  $-\sigma/(2\Re(\alpha))=-\infty$. Hence, by \eqref{xse5wyu},
\begin{equation}\label{vfrtyde65e564}
(\mathcal Sp)(z)=\int_{\mathbb R} p(x +z) \, \mu_{\alpha, \beta, \sigma+\alpha z}(dx),\quad z\in 
(-\sigma/(2\Re(\alpha)),\infty).\end{equation}
The change of variable $x'=x+z$ in the integral in \eqref{vfrtyde65e564} implies
\begin{equation}\label{vcrdstrsr}
 (\mathcal S p)(z) = \int_{\mathbb R} p(x)\, \mathcal G(x,z) \, dx,\quad  z\in 
(-\sigma/(2\Re(\alpha)),\infty),
\end{equation}
where for $x\in\R$ and $z\in (-\sigma/(2\Re(\alpha)),\infty)$,
\begin{align}
		&\mathcal {G} (x,z) = \frac{\big(2\cos\big(\frac\pi2 - \operatorname{Arg}(\alpha)\big)\big)^{\frac{\alpha z + \sigma}{\eta}}}{4\Im(\alpha)\pi  \Gamma(\frac{\alpha z + \sigma}{\eta})}\,    \exp\left( \frac{(\frac\pi2-\operatorname{Arg}(\alpha))(\alpha z + \sigma)  \Re(\alpha)}{\Im(\alpha)  \eta} \right) \notag	\\
							    &\quad\times \exp\left( \frac{(\frac\pi2 - \operatorname{Arg}(\alpha)) (x -z)}{\Im(\alpha)} \right) \Gamma \left( \frac{i x}{2  \Im(\alpha)} + \frac{i \sigma \beta}{2 \eta  \Im(\alpha)} \right)  \Gamma \left( \frac{-ix}{2 \Im(\alpha)} - \frac{i \sigma \alpha}{2 \eta  \Im(\alpha)} + \frac{z \alpha}{\eta }  \right).\label{drs5ws5w5}	
	\end{align}
(Note that the real part of the argument of each of the gamma functions in \eqref{drs5ws5w5} is positive.)

 Recall the domain $\mathcal D_{\alpha,\beta,\sigma}$ defined by \eqref{tedrer4we4}. It is straightforward to see that, for each fixed $x\in\R$, the function $\mathcal G(x,\cdot)$ admits a unique extension to an analytic function on $\mathcal D_{\alpha,\beta,\sigma}$, and this extension is still given by formula \eqref{drs5ws5w5}. Similarly to Step~3, we show that, for each $p\in\mathcal P(\mathbb C)$, the function
$\mathcal D_{\alpha,\beta,\sigma}\ni z\mapsto  \int_{\mathbb R} p(x)\, \mathcal G(x,z) \, dx\in\mathbb C$
is analytic. Therefore, formula \eqref{vcrdstrsr} implies
\begin{equation}\label{vreseseswtrsr}
 (\mathcal S p)(z) = \int_{\mathbb R} p(x)\, \mathcal G(x,z) \, dx,\quad  z\in 
\mathcal D_{\alpha,\beta,\sigma}.
\end{equation}

{\it Step 5}. Let $z_0\in \mathcal D_{\alpha,\beta,\sigma}$. Choose $R>0$ such that the closed ball $B(z_0,R)$ is a subset of  $\mathcal D_{\alpha,\beta,\sigma}$ We state that there exists a constant $C_3>0$ such that, for all $f \in L^2 (\mathbb R, \mu_{\alpha, \beta, \sigma})$,
	\begin{equation}
		\sup_{z \in B(z_0, R)} \ \int_{\mathbb R} \lvert f(x) \rvert \, \lvert \mathcal {G}(x,z) \rvert \, dx		\leq 
		C_3  \lVert f \rVert_{L^2 ( \mu_{\alpha, \beta, \sigma})}.\label{665rgf}
	\end{equation}

Indeed, we have, by the Cauchy inequality, for each $z\in B(z_0,R)$,
\begin{equation}
		\int_{\mathbb R} \lvert f(x) \rvert \, \lvert \mathcal {G}(x,z) \rvert \, dx	\le {\lVert f \rVert}_{L^2( \mu_{\alpha, \beta, \sigma})} \left( \int_{\mathbb R} \frac{ \lvert \mathcal {G} (x,z) \rvert^2 }{G_{\alpha,\beta,\sigma}(x)} \, dx \right)^{\frac{1}{2}}. 	\label{ctrsw5rw}
			\end{equation} 
Here $G_{\alpha,\beta,\sigma}(x)$ is the density of the measure $\mu_{\alpha,\beta,\sigma}$ with respect to the Lebesgue measure:
$$ G_{\alpha,\beta,\sigma}(x)={C}_{\alpha, \beta, \sigma} \, \exp\left( (\pi/2 - \operatorname{Arg}(\alpha))x/\Im(\alpha) \right)g_{\alpha,\beta,\sigma}(x). $$
Using \eqref{sewa4tq}, \eqref{vctsreser5sw},  \eqref{drs5ws5w5}, and the equality $|\Gamma(\zeta)|=|\Gamma(\bar \zeta)|$, we find
\begin{align}
		&\frac{ |\mathcal {G}(x,z)| ^2 }{{G}_{\alpha, \beta, \sigma} (x)}  	 = \bigg|\frac{\big(2\cos\big(\frac\pi2 - \operatorname{Arg}(\alpha)\big)\big)^
		{\frac{2 \alpha z+\sigma}{\eta }}}{4 \Im(\alpha)\pi} \cdot \frac{\Gamma(\frac{\sigma}{\eta })}{\Gamma^2 \left( \frac{\alpha z + \sigma}{\eta} \right)} 	\notag\\
		&\quad\times \exp\left[ \frac{(\frac\pi2-\operatorname{Arg}(\alpha)) \, \Re(\alpha) \, (2 \alpha z + \sigma)}{\Im(\alpha)\eta } +\frac{(\frac\pi2 - \operatorname{Arg}(\alpha)) (x - 2z)}{\Im(\alpha)} \right] \notag 	\\
		&\quad\quad \times \Gamma^2 \left( -\frac{i x}{2\Im(\alpha)} - \frac{i \sigma \alpha}{2\eta  \, \Im(\alpha)} + \frac{z \alpha}{\eta} \right)\bigg|.\label{drdrdsrs4}
		\end{align}	
Thus, by \eqref{ctrsw5rw} and \eqref{drdrdsrs4}, to prove \eqref{665rgf}, it is sufficient to show that 			\begin{equation}
	\sup_{z \in B(z_0, R)} \, \int_{\mathbb R}  \exp\left[ \frac{(\frac\pi2 - \operatorname{Arg}(\alpha)) \,|x|}{\Im(\alpha)} \right] \cdot	\bigg|\Gamma \left( -\frac{i x}{2\Im(\alpha)} - \frac{i \sigma \alpha}{2\eta   \Im(\alpha)} + \frac{z \alpha}{\eta} \right) \bigg|^2	\, dx < \infty. 	\label{62dhj9o}
			\end{equation}
We note that, for each $z\in B(z_0,R)\subset \mathcal D_{\alpha,\beta,\sigma}$,
$$\Re\bigg(- \frac{i \sigma \alpha}{2\eta   \Im(\alpha)} + \frac{z \alpha}{\eta}\bigg)=\frac\sigma{2\eta}+\frac{\Re(\alpha z)}{\eta}>0. $$
Hence, there exists $K>1$ such that 
$$\Re\bigg(- \frac{i \sigma \alpha}{2\eta   \Im(\alpha)} + \frac{z \alpha}{\eta}\bigg)\in[1/K,K]\quad\forall z\in B(z_0,R).$$
Since $\Re(\alpha z)$ is bounded on $B(z_0,R)$, estimate \eqref{crstrs} easily implies \eqref{62dhj9o}. (Compare with Step~2.)

{\it Step 6}. Similarly to Step 2 in Subsection~\ref{vcrts5eywy5},  we conclude from \eqref{665rgf} that, for each $f(x)=\sum_{n=0}^\infty f_ns_n(x)\in L^2(\R,\mu_{\alpha,\beta,\sigma})$, the series
$\sum_{n=0}^\infty f_n(z\mid \beta)_n$ converges uniformly on compact sets in $\mathcal D_{\alpha,\beta,\sigma}$ to an analytic function and
\begin{equation}\label{vhvcygtdcyd}
(\mathcal Sf)(z)=\sum_{n=0}^\infty f_n(z\mid \beta)_=\int_\R f(x)\mathcal G(x,z)\,dx.
\end{equation}
Similarly to Step 4 in Subsection~\ref{vcrts5eywy5}, this proves Lemma~\ref{yufd6e64w5}.   Next, using  \eqref{sewa4tq}, \eqref{cdzserar4ftstew5}, \eqref{drs5ws5w5}, and \eqref{vhvcygtdcyd}, we find that
formulas~\eqref{4rggn9} and \eqref{fdtrs5ea43qq} hold for each $f\in L^2(\R,\mu_{\alpha,\beta,\sigma})$. Similarly to Step~5 in Subsection \ref{vcrts5eywy5}, we see that $\mathcal E(\cdot,z)\in L^2(\R,\mu_{\alpha,\beta,\sigma})$ for each $z\in\mathcal D_{\alpha,\beta,\sigma}$ and formula~\eqref{5yrthfg} holds. This proves Lemma~\ref{vcydyrtds6y}.

{\it Step 7}. Note that $((z\mid\beta)_n)_{n=0}^\infty$ is Sheffer sequence with generating function~\eqref{vcxresayw} in which $A(t)=0$ and $B(t)=\frac1\beta\,\log(1+\beta t)$. Therefore, by Theorem~\ref{bvgytdddsesea}, the linear operator $\mathcal S$, acting in $\mathcal P(\mathbb C)$ 
and satisfying $\mathcal Ss_n=(\cdot\mid\beta)_n$ ($n\in\mathbb N_0$), extends to a continuous linear operator in $\mathcal E_{\mathrm{min}}^1(\mathbb C)$. 

Let $f(z)=\sum_{n=0}^\infty f_ns_n(z)\in\mathcal E_{\mathrm{min}}^1(\mathbb C)$ and define $p_N(z)=\sum_{n=0}^N f_ns_n(z)\in\mathcal P(\mathbb C)$ ($N\in\mathbb N$). Then $p_N\to f$ and $\mathcal Sp_N\to\mathcal Sf$ in $\mathcal E_{\mathrm{min}}^1(\mathbb C)$. In particular, for each fixed $z\in\mathbb C$, we have $p_N(z)\to f(z)$ and $(\mathcal Sp_N)(z)\to (\mathcal Sf)(z)$ as $N\to\infty$.

In view of \eqref{xse5wyu}, to prove Lemma~\ref{xfsxersarewa4w} (ii), it is sufficient to show that, for each $\zeta\in \mathfrak D_{\alpha,\beta}$ and $z\in\mathbb C$, we have
$$\lim_{N\to\infty}\int_{\R}p_N(x+z)\,\mu_{\alpha,\beta,\zeta}(dx)= \int_{\R}f(x+z)\,\mu_{\alpha,\beta,\zeta}(dx),$$
which is equivalent to 
\begin{align}
&\lim_{N\to\infty}\int_{\R}p_N(x+z)\exp\big( (\pi/2 - \operatorname{Arg}(\alpha))x/\Im(\alpha) \big)g_{\alpha,\beta,\zeta}(x)\,dx\notag\\
&\qquad= \int_{\R}f(x+z)\exp\big( (\pi/2 - \operatorname{Arg}(\alpha))x/\Im(\alpha) \big)g_{\alpha,\beta,\zeta}(x)\,dx.\label{fxdvfdtrs}
\end{align}

It follows from \eqref{cdtrscfdrs} that, for each $\zeta\in \mathfrak D_{\alpha,\beta}$, there exist constants $C_4>0$ and $K>0$ such that
\begin{equation}\label{vyd6}
|  g_{\alpha, \beta, \zeta} (x) |\le C_4\exp\bigg(-\frac{\pi\,|x|}{2\Im(\alpha)}\bigg)(1+|x|)^{2K}.
\end{equation} 
Since the sequence $(p_N)_{N=1}^\infty$ converges in  $\mathcal E_{\mathrm{min}}^1(\mathbb C)$, for each $t>0$, there exists a constant $C_t>0$ (depending on the fixed $z\in\mathbb C$) such that
\begin{equation}\label{csrere5aewaq}
\sup_{x\in\R}|p_N(x+z)|\le C_t\exp(t|x|).
\end{equation}
Choosing $t\in(0,\operatorname{Arg}(\alpha)/\Im(\alpha) )$, we conclude \eqref{fxdvfdtrs} from \eqref{vyd6}, \eqref{csrere5aewaq}, and the dominated convergence theorem. 

Thus, Lemma~\ref{xfsxersarewa4w} (ii) is proven. Finally, the proof of Lemma~\ref{xdesrea5qw3}  is similar to Step~6 in Subsection~\ref{vcrts5eywy5}.

\subsection{The remaining proofs}

\begin{proof}[Proof of Lemma~\ref{ftsxrearwa}]
We state that, for each  $z \in \mathbb C$,  
\begin{equation} \label{234hgjk}
		\int_{\mathbb N_0} (\xi)_n \, \pi_{z}(d\xi) = z^n ,\quad n\in\mathbb N.
							\end{equation}	
 For $z>0$, equality \eqref{234hgjk} is well-known. (To show it, one can use the equality 
 $\int_{\mathbb N_0} \xi^n \, \pi_{\sigma}(d\xi) = \sum_{k=1}^n S(n,k)\sigma^k$ and formula~\eqref{65yhu7}.)
As easily seen, the function 
$\mathbb C \ni z \mapsto  \int_{\mathbb N_0} (\xi)_n \, \pi_{z}(d\xi)$ is entire.
 Hence, formula \eqref{234hgjk} holds for all $z \in \mathbb C$ by the identity theorem for analytic functions.

Formula~\eqref{234hgjk} implies $\int_{\mathbb N_0} (\beta \xi\mid\beta)_n \, \pi_{\frac z\beta}(d\xi) = z^n$. Therefore, formula~\eqref{87.iur7.k} holds for $f\in\mathcal P(\mathbb C)$. 

Let $f(z) = \sum_{n=0}^\infty  f_n (z \mid \beta)_n \in {\mathcal F}_{\alpha, \beta, \sigma} $. Using \eqref{234hgjk}, we have, for $z\in\mathbb C$,
\begin{align}
& \sum_{k=0}^\infty |f(\beta k)|\,\frac1{k!}\,\bigg|\frac z\beta\bigg|^k\le 
 \sum_{k=0}^\infty \sum_{n=0}^\infty | f_n | \, \big| (\beta k \mid \beta)_n \big| \, \frac{1}{k!} \, \bigg|\frac z\beta\bigg|^k \notag\\
 &\quad= \sum_{n=0}^\infty | f_n | \, |\beta|^n\sum_{k=0}^\infty  (k)_n \, \frac{1}{k!} \, \bigg|\frac z\beta\bigg|^k 
 = \exp\big(|z|/\beta\big)\sum_{n=0}^\infty | f_n | \, |\beta|^n \int_{{\mathbb N}_0} (\xi)_n \, \pi_{\left|\frac{z}{\beta}\right|} (d\xi)\notag\\ 
&\quad = \exp\big(|z|/|\beta|\big)\sum_{n=0}^\infty | f_n | \, |z|^n\le
 \exp\big(|z|/|\beta|\big)
\bigg( \sum_{n=0}^\infty \frac{|z|^{2n}}{n!\, (\sigma \mid -\eta)_n} \bigg)^{1/2} {\lVert f \rVert}_{{\mathcal F}_{\alpha, \beta, \sigma} }. \notag
\end{align}
	Hence,  the integral on the right-hand side of formula \eqref{87.iur7.k} is well-defined and formula~\eqref{87.iur7.k} holds.	
\end{proof}

\begin{proof}[Proof of Theorem~\ref{vcfxsra5u6e756}] By Lemmas~\ref{vcydyrtds6y} and~\ref{ftsxrearwa}, we have, for each $f\in L^2(X_{\alpha,\beta},\mu_{\alpha,\beta,\sigma})$ and $z\in\mathbb C$,
\begin{equation}\label{dsawawaww}
(\mathbb Sf)(z)=\int_{\mathbb N_0}\int_{X_{\alpha,\beta}} f(x)\mathcal E(x,\beta \xi)\,\mu_{\alpha,\beta,\sigma}(dx)\,\pi_{\frac z\beta}(d\xi).\end{equation}
By \eqref{5yrthfg}, we have, for $x\in X_{\alpha,\beta}$ and $\xi\in\mathbb N_0$,
$$|\mathcal E(x,\beta\xi)| \le \sum_{n=0}^\infty \frac{|(\beta\xi \mid \beta)_n|}{n! \, (\sigma \mid - \eta)_n} \, |s_n (x)|=\sum_{n=0}^\infty \frac{|\beta|^n}{n! \, (\sigma \mid - \eta)_n} \, (\xi)_n|s_n (x)|.$$
Therefore, using  \eqref{234hgjk}, we obtain, for $f\in L^2(X_{\alpha,\beta},\mu_{\alpha,\beta,\sigma})$ and $z\in\mathbb C$,
\begin{align}
&\int_{\mathbb N_0}\int_{X_{\alpha,\beta}} |f(x)|\,|\mathcal E(x,\beta \xi)|\,\mu_{\alpha,\beta,\sigma}(dx)\,\pi_{|\frac z\beta|}(d\xi)\notag\\
&\quad\le \sum_{n=0}^\infty \frac{|\beta|^n}{n! \, (\sigma \mid - \eta)_n} \int_{X_{\alpha,\beta}}|f(x)s_n(x)| \bigg(\int_{\mathbb N_0}(\xi)_n\,\pi_{|\frac z\beta|}(d\xi)\bigg)\mu_{\alpha,\beta,\sigma}(dx)\notag\\
&\quad= \sum_{n=0}^\infty \frac{|z|^n}{n! \, (\sigma \mid - \eta)_n} \int_{X_{\alpha,\beta}}|f(x)s_n(x)| \,\mu_{\alpha,\beta,\sigma}(dx)\notag\\
&\quad\le \|f\|_{L^2(\mu_{\alpha,\beta,\sigma})}\sum_{n=0}^\infty \frac{|z|^n}{\sqrt{n! \, (\sigma \mid - \eta)_n}} <\infty.
\label{vrts5}\end{align}
Formulas \eqref{dsawawaww} and \eqref{vrts5} imply formula~\eqref{ctstesa5a5} in which
\begin{equation}\label{rsea54q54y}
\mathbb E(x,z)=\int_{\mathbb N_0}\mathcal E(x,\beta \xi)\,\pi_{\frac z\beta}(d\xi).
\end{equation}
Formulas \eqref{ctstesa5a5}, \eqref{rsea54q54y} and Lemma~\ref{vcydyrtds6y} imply formulas \eqref{cfsersara5s}--\eqref{xawawawefljkreiojfu}. For each $z\in\mathbb C$, the map $L^2(X_{\alpha,\beta},\mu_{\alpha,\beta,\sigma})\ni z\mapsto (\mathbb Sf)(z)\in\mathbb C$ is continuous, see e.g.\ \eqref{dsawawaww}, \eqref{vrts5}. Hence, $\mathbb E(\cdot,z)\in L^2(X_{\alpha,\beta},\mu_{\alpha,\beta,\sigma})$ for each $z\in\mathbb C$. Formula~\eqref{ydtyrdsj6ej} is then also obvious. 
\end{proof} 

\begin{proof}[Proof of Corollary \ref{vcsara5a54w45ss}]
The equality $(\partial^+p,q)_{L^2(\mu_{\alpha,\beta,\sigma})}=(p,A^-q)_{L^2(\mu_{\alpha,\beta,\sigma})}$ for $p,q\in\mathcal P(\mathbb C)$ follows from \eqref{545-980+polk}. Since the adjoint of the operator $A^-$ is densely defined, the operator $A^-$ is closable. For $N\in\mathbb N$, define $\mathbb E_N(x,z)=\sum_{n=0}^N \frac{z^n}{n!\,(\sigma\mid-\eta)_n} \, s_n (x)$. Then, for each $z\in\mathbb C$, 
$\mathbb E_N(\cdot,z)\to\mathbb E(\cdot,z)$ in $L^2(X_{\alpha,\beta},\mu_{\alpha,\beta,\sigma})$ as $N\to\infty$, and $A^-\mathbb E_N(\cdot,z)=z\mathbb E_{N-1}(\cdot,z)\to z\mathbb E(\cdot,z)$ in $L^2(X_{\alpha,\beta},\mu_{\alpha,\beta,\sigma})$ as $N\to\infty$. Hence, $\mathbb E(\cdot,z)$ belongs to the domain of $A^-$ and $A^-\mathbb E(\cdot,z)=z\mathbb E(\cdot,z)$. 
\end{proof} 

\begin{proof}[Proof of Theorem \ref{cdszrea5r4w35}]
 By Lemma~\ref{xfsxersarewa4w} (i) and Lemma~\ref{ftsxrearwa}, we have, for $f\in L^2(X_{\alpha,\beta},\mu_{\alpha,\beta,\sigma})$ and $z\in\mathbb C$:
\begin{equation}\label{cytrd6ed6}
(\mathbb Sf)(z)= \int_{\mathbb N_0}\int_{X_{\alpha,\beta}}f(x)\,\mu_{\alpha,\beta,\eta\xi+\sigma}(dx)\,\pi_{\frac z\beta}(d\xi).
\end{equation}
To conclude from \eqref{cytrd6ed6} that formulas \eqref{vr5y365re5e54}, \eqref{vcdrt5e} hold, it is sufficient to show that 
\begin{equation}\label{vcdtsftdtr}
\int_{\mathbb N_0}\int_{X_{\alpha,\beta}}|f(x)|\,\mu_{\alpha,\beta,\eta\xi+\sigma}(dx)\,\pi_{\frac {|z|}\beta}(d\xi)<\infty.\end{equation}
But this is immediate since $f\in L^2(X_{\alpha,\beta},\mu_{\alpha,\beta,\sigma})$ and the left-hand side of \eqref{vcdtsftdtr} is equal to $(\mathbb S|f|)(|z|)$. 
\end{proof}

\begin{proof}[Proof of Theorem \ref{xtxstesar5ewa54aqy}]
By Theorem~\ref{bvgytdddsesea}, the operator $\mathbb S$ acts continuously in $\mathcal E_{\mathrm{min}}^1(\mathbb C)$. Now   the theorem follows from Lemma~\ref{xfsxersarewa4w} (ii) and Lemma~\ref{ftsxrearwa} (note that $\mathcal E_{\mathrm{min}}^1(\mathbb C)$ can be naturally embedded into $\mathcal F_{\alpha,\beta,\sigma}$.) 
Indeed, the only fact that needs to be  checked is that, for each $\xi\in\mathbb N_0$, we have $\beta\xi\in\Psi_{\alpha,\beta,\sigma}$. But this is immediate since $\alpha\beta\xi+\sigma=\eta\xi+\sigma>0$ and so $\alpha\beta\xi+\sigma\in \mathfrak D_{\alpha,\beta}$.\end{proof}

\begin{proof}[Proof of Proposition \ref{cfxctdxtsdrytdsy}] The proposition follows immediately from Lemma~\ref{xdesrea5qw3}.
\end{proof}

\begin{proof}[Proof of Propositions~\ref{xszswcxdzwawawa} and \ref{njfdyde6}] We divide the proof into several steps.

{\it Step 1}. The operator $1-\alpha D_{\beta-\alpha}$ maps a monic polynomial sequence to a monic polynomial sequence. Hence, it is bijective as a map in $\mathcal P(\mathbb C)$. 

The equality 
\begin{equation}\label{dxsrsr}
V=\alpha\partial^-+1=(1-\alpha D_{\beta-\alpha})^{-1}\quad\text{on $\mathcal P(\mathbb C)$}
\end{equation} easily follows from umbral calculus. Indeed, $\partial^-=B(D)$, where $D$ is the differentiation operator and the function $B$ is as in formula~\eqref{vcxresayw}, see e.g.\ \cite[Section~4.4]{11-Kung}.  By \cite{13-Meixner}, if $\alpha\ne\beta$,  we have 
\begin{equation}\label{ytdsrea4}
B(t)=\frac{e^{(\beta-\alpha)t}-1}{\beta-\alpha e^{(\beta-\alpha)t}}=\frac{\frac{e^{(\beta - \alpha)t}-1}{\beta - \alpha}}{1- \alpha \frac{e^{(\beta - \alpha)t}-1}{\beta - \alpha}}.
\end{equation}
  By Boole's formula (e.g.\ \cite[Section~4.3.1]{11-Kung}), for $h\in\mathbb C$, the $h$-derivative has the representation $D_h=\frac{e^{hD}-1}h$. Hence, by \eqref{ytdsrea4},
$\partial^-= D_{\beta-\alpha}(1- \alpha D_{\beta-\alpha})^{-1}$, which implies \eqref{dxsrsr}. In the case $\alpha=\beta$, we have $B(t)=\frac t{1-\alpha t}$, which similarly implies  \eqref{dxsrsr}.

Recall that, in the case $\alpha\ge\beta>0$, we have  $Z=UV$ on $\mathcal P(\mathbb C)$, and in the case  $\Re(\alpha)\ge0$, $\Im(\alpha)>0$, $\beta=\overline \alpha$, we have $Z+\frac\sigma\alpha=UV$ on $\mathcal P(\mathbb C)$. Since $V^{-1}=1-\alpha D_{\beta-\alpha}$, this immediately implies that $U=Z(1-\alpha D_{\beta-\alpha})$ in the former case, and $U=(Z+\frac\sigma\alpha)(1-\alpha D_{\beta-\alpha})$ in the latter case. 

{\it Step 2}. Similarly to Step~1, we easily find that  $\partial^-=D_{\alpha-\beta}(1-\beta D_{\alpha-\beta})^{-1}=\sum_{k=0}^\infty\beta^k D_{\alpha-\beta}^{k+1}$. 
 Since $D_{\alpha-\beta}$ is the lowering operator for the polynomial sequence\linebreak $((z\mid\alpha-\beta)_n)_{n=0}^\infty$, we get  
\begin{equation}\label{cdtrs5t}
 \big(\partial^- (\cdot \mid \alpha - \beta)_n\big)(z)=\sum_{k=0}^{n-1} \frac{n!}{k!} \, \beta^{n-k-1} \, (z \mid \alpha - \beta)_k,\quad n\in\mathbb N .\end{equation}
 
 {\it Step 3}. We state that, when $\alpha\ge\beta>0$, 
 \begin{equation}	\label{y5rtgp0}
\int_{X_{\alpha,\beta}}  (x \mid \alpha - \beta)_n \, \mu_{\alpha, \beta, \sigma} (dx) = \beta^n ( \sigma/\eta)^{(n)},
\end{equation}
and when $\Re(\alpha)\ge0$, $\Im(\alpha)>0$, $\beta=\overline\alpha$,
\begin{equation}\label{vydste}
\int_{\mathbb R} (x + \sigma/\alpha\mid\alpha-\beta)_n \, \mu_{\alpha, \beta, \sigma}(dx) = \beta^n ( \sigma/\eta )^{(n)}.
\end{equation}

Note that, when $\alpha=\beta>0$, formula \eqref{y5rtgp0} is just \eqref{cfgdyjk}. We will prove formula \eqref{y5rtgp0} when $\alpha>\beta>0$, the proof of \eqref{vydste} being similar. By \eqref{65yhu7} and \eqref{cfgdyjk}, we obtain
\begin{align*}
&\int_{(\alpha - \beta) {\mathbb N}_0}  (x \mid \alpha - \beta)_n \, \mu_{\alpha, \beta, \sigma}
(dx)=(\alpha - \beta)^n\int_{(\alpha - \beta) {\mathbb N}_0}   \big( x/(\alpha-\beta)\big)_n \, \mu_{\alpha, \beta, \sigma}  (dx)	\\
&\quad = \sum_{k=1}^n s(n,k) (\alpha - \beta)^{n-k} \int_{(\alpha - \beta) {\mathbb N}_0}  x^k \, \mu_{\alpha, \beta, \sigma} (dx)	\\
&\quad = \sum_{k=1}^n s(n,k) (\alpha - \beta)^{n-k}  \sum_{i=1}^k (\alpha - \beta)^{k-i}  S(k,i)  ( \sigma/\alpha \mid -\beta)_i 	\\
&\quad = \sum_{i=1}^n (\alpha - \beta)^{n-i} (\sigma/\alpha \mid -\beta )_i \sum_{k=i}^n s(n,k)  S(k,i)  \\
&\quad = (\sigma/\alpha \mid -\beta )_n= \beta^n ( \sigma/\eta )^{(n)}.
\end{align*}

{\it Step 4}. Let $p\in\mathcal P(\mathbb C)$. We state that, if $\alpha\ge\beta>0$,
\begin{equation}\label{vtdtr6sa54aj}
(\partial^-p)(z)=\int_{X_{\alpha,\beta}}\big(p(z+x) -p(z)\big)\beta^{-1}\, \mu_{\alpha,\beta,\eta}(dx),
\end{equation}
and if $\Re(\alpha)\ge0$, $\Im(\alpha)>0$, $\beta=\overline\alpha$,
\begin{equation}\label{vtdtcfsxesea54aj}
(\partial^-p)(z)=\int_{X_{\alpha,\beta}}\big(p(z+x+\beta) -p(z)\big)\beta^{-1}\, \mu_{\alpha,\beta,\eta}(dx),
\end{equation}

To prove formula \eqref{vtdtr6sa54aj}, it is sufficient to show that it holds for $p(z)=(z\mid\alpha-\beta)_n$ ($n\in\mathbb N$). Then, by \eqref{cdtrs5t} and \eqref{y5rtgp0},
\begin{align*}
&\int_{X_{\alpha,\beta}}\big((z+x\mid\alpha-\beta)_n -(z\mid\alpha-\beta)_n\big)\beta^{-1}\, \mu_{\alpha,\beta,\eta}(dx)\\
&\quad=\sum_{k=0}^{n-1}\binom nk (z\mid\alpha-\beta)_k\,\beta^{-1}\int_{X_{\alpha,\beta}}(x\mid\alpha-\beta)_{n-k}\,\mu_{\alpha,\beta,\eta}(dx)\\
&\quad=\sum_{k=0}^{n-1}\binom nk (z\mid\alpha-\beta)_k\,\beta^{n-k-1}(n-k)!=(\partial^-p)(z).
\end{align*}
The proof of \eqref{vtdtcfsxesea54aj} is similar. We only need to note that $\eta/\alpha=\beta$. 

Since $V=\alpha\partial^-+1$, formulas \eqref{vtdtr6sa54aj}, \eqref{vtdtcfsxesea54aj} imply that formulas \eqref{vcsraeE33e}--\eqref{cxesawra4ea} hold for $f(z)=p(z)\in\mathcal P(\mathbb C)$.

{\it Step 5}.  Using Theorem~\ref{bvgytdddsesea}, one can easily show that the operators $\partial^+$, $\partial^-$, $Z$ and $D_{\beta-\alpha}$ admit a (unique) extension to continuous linear operators in $\mathcal E_{\mathrm{min}}^1(\mathbb C)$.  Hence, $U$ and $V$ also admit a continuous extension, $Z=UV$, respectively  $Z+\sigma/\alpha=UV$, and  $U=Z(1-\alpha D_{\beta-\alpha})$, respectively  $U=(Z+\sigma/\alpha)(1-\alpha D_{\beta-\alpha})$. 

Finally, using the definition of the space $\mathcal E_{\mathrm{min}}^1(\mathbb C)$, we easily see that the integrals on the right-hand side of formulas \eqref{vcsraeE33e}--\eqref{cxesawra4ea} are well defined for each $f\in \mathcal E_{\mathrm{min}}^1(\mathbb C)$, and furthermore, the right-hand side of each of the formulas \eqref{vcsraeE33e}--\eqref{cxesawra4ea} determines a continuous linear operator in $\mathcal E_{\mathrm{min}}^1(\mathbb C)$. Hence,  formulas \eqref{vcsraeE33e}--\eqref{cxesawra4ea} hold for $f\in \mathcal E_{\mathrm{min}}^1(\mathbb C)$. Since  $(1-\alpha D_{\beta-\alpha}) V=V(1-\alpha D_{\beta-\alpha})=1$ in $\mathcal E_{\mathrm{min}}^1(\mathbb C)$, the operator $1-\alpha D_{\beta-\alpha}$ is invertible in $\mathcal E_{\mathrm{min}}^1(\mathbb C)$ and $V=(1-\alpha D_{\beta-\alpha})^{-1}$. 
\end{proof}

\appendix
\renewcommand{\thesection}{A}
\setcounter{equation}{0}

\section*{Appendix A. Normal ordering in a class of generalized Weyl algebras and its connection to orthogonal Sheffer sequences}


We consider a special class of generalized Weyl algebras.
For  $a,b\in\mathbb C$, we are interested in the complex free algebra in two generators $\mathcal U$ and $\mathcal V$ satisfying the commutation relation $[\mathcal V,\mathcal U]=a\mathcal V+b$. 

Recall that the Stirling numbers of the first kind, $s(n,k)$, and of the second kind, $S(n,k)$, are defined as the coefficients of the expansions $ (z) _n = \sum_{k=1}^{n} s(n,k) \, z^k$ and  $ z^n = \sum_{k=1}^{n} S(n,k) (z)_k$, respectively. This definition immediately implies the orthogonality property of the Stirling numbers:
\begin{equation} \label{65yhu7} \sum_{k=i}^{n} S(n,k)s(k,i)  = \sum_{k=i}^{n} s(n,k)S(k,i)   = \delta_{n,i},\quad 1\le i\le n. \end{equation}

\begin{proposition} \label{UVtoN}
Assume that the generators $\mathcal U$, $\mathcal V$ satisfy $[\mathcal V,\mathcal U]=a\mathcal V+b$. Then, for each $n \in \mathbb N$, we have \begin{align} 
(\mathcal U\mathcal V)^n &= \sum_{k=1}^{n}  b^{n-k} \, S(n,k) \mathcal U(\mathcal U+a)(\mathcal U+2a) \cdots (\mathcal U+(k-1)a)  \mathcal V^k \notag \\
 &= \sum_{k=1}^{n} b^{n-k}  S(n,k)\,(\mathcal U \mid -a)_k\, \mathcal V^k  .\label{eqnUV}
\end{align}
\end{proposition}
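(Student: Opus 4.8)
The plan is to prove the normal-ordering identity \eqref{eqnUV} by induction on $n$, the engine being a single auxiliary commutation relation that lets one push $\mathcal V$ to the right past the ordered product $(\mathcal U\mid -a)_k$. Throughout I would abbreviate $P_k=(\mathcal U\mid -a)_k=\mathcal U(\mathcal U+a)\cdots(\mathcal U+(k-1)a)$, with $P_0=1$, and write $P_k(\mathcal U+a)=(\mathcal U+a)(\mathcal U+2a)\cdots(\mathcal U+ka)$ for the shift. The defining relation, in the form $\mathcal V\mathcal U=(\mathcal U+a)\mathcal V+b$, is the only input. The base case $n=1$ is immediate, since $(\mathcal U\mathcal V)^1=\mathcal U\mathcal V$ and the right-hand side of \eqref{eqnUV} reduces to $b^{0}S(1,1)\,P_1\,\mathcal V=\mathcal U\mathcal V$.

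First I would establish the key lemma
\[
\mathcal V\,P_k=P_k(\mathcal U+a)\,\mathcal V+kb\,P_{k-1}(\mathcal U+a),\qquad k\ge1,
\]
by induction on $k$. The case $k=1$ is exactly $\mathcal V\mathcal U=(\mathcal U+a)\mathcal V+b$. For the step I would write $P_{k+1}=P_k\,(\mathcal U+ka)$, apply the inductive hypothesis, and then use the single-variable consequence $\mathcal V(\mathcal U+ka)=(\mathcal U+(k+1)a)\mathcal V+b$ of the commutation relation. The two resulting factorial products telescope, via $P_k(\mathcal U+a)(\mathcal U+(k+1)a)=P_{k+1}(\mathcal U+a)$ and $P_{k-1}(\mathcal U+a)(\mathcal U+ka)=P_k(\mathcal U+a)$, and the coefficient collects to $(k+1)b$, giving the claim.

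With the lemma in hand, the inductive step for the main identity runs as follows. Multiplying the hypothesis for $(\mathcal U\mathcal V)^n$ on the left by $\mathcal U\mathcal V$ and applying the lemma to each summand gives
\[
\mathcal U\mathcal V\,P_k\,\mathcal V^k=\mathcal U P_k(\mathcal U+a)\,\mathcal V^{k+1}+kb\,\mathcal U P_{k-1}(\mathcal U+a)\,\mathcal V^k,
\]
where the two absorptions $\mathcal U P_k(\mathcal U+a)=P_{k+1}$ and $\mathcal U P_{k-1}(\mathcal U+a)=P_k$ collapse everything back into the $P_\bullet$ form. Summing over $k$ produces two sums, one contributing $P_{k+1}\mathcal V^{k+1}$ terms and one contributing $kb\,P_k\mathcal V^k$ terms; after shifting the index in the former (using $S(n,0)=0$) and matching powers of $b$, the combined coefficient of $P_k\mathcal V^k$ is precisely $b^{\,n+1-k}\big(kS(n,k)+S(n,k-1)\big)$. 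Invoking the Stirling recurrence $S(n+1,k)=kS(n,k)+S(n,k-1)$ identifies this with $b^{\,n+1-k}S(n+1,k)$, which is exactly \eqref{eqnUV} at level $n+1$.

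The computations are elementary once organized this way; the only genuine obstacle is isolating the correct auxiliary relation $\mathcal V P_k=P_k(\mathcal U+a)\mathcal V+kb\,P_{k-1}(\mathcal U+a)$ and checking that the products telescope cleanly, since a naive attempt to commute $\mathcal V$ past a general polynomial in $\mathcal U$ produces unwieldy derivative-type remainder terms. Working in the factorial basis $P_k=(\mathcal U\mid -a)_k$ rather than the monomial basis is exactly what makes both the lemma and the final matching with the Stirling recurrence transparent.
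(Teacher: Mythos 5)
Your argument is correct, and its overall architecture coincides with the paper's: induction on $n$, with the two resulting sums recombined via the Stirling recurrence $S(n+1,k)=S(n,k-1)+k\,S(n,k)$ (note you also need the convention $S(n,n+1)=0$, not just $S(n,0)=0$, to align the index ranges). The genuine difference is the auxiliary commutation lemma and the side on which you multiply. The paper writes $(\mathcal U\mathcal V)^{n+1}=(\mathcal U\mathcal V)^n\,\mathcal U\mathcal V$ and therefore needs to push a single $\mathcal U$ leftward past a power of $\mathcal V$, i.e.\ the relation $\mathcal V^k\mathcal U=(\mathcal U+ka)\mathcal V^k+kb\,\mathcal V^{k-1}$, which it imports from Irving's work on Ore extensions. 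You instead write $(\mathcal U\mathcal V)^{n+1}=\mathcal U\mathcal V\,(\mathcal U\mathcal V)^n$ and push a single $\mathcal V$ rightward past the factorial product $P_k=(\mathcal U\mid-a)_k$, via your lemma $\mathcal V P_k=P_k(\mathcal U+a)\,\mathcal V+kb\,P_{k-1}(\mathcal U+a)$, where $P_k(\mathcal U+a)$ denotes the shifted product $(\mathcal U+a)\dotsm(\mathcal U+ka)$. I have checked that lemma, its inductive step (the telescopings $P_k(\mathcal U+a)(\mathcal U+(k+1)a)=P_{k+1}(\mathcal U+a)$ and $P_{k-1}(\mathcal U+a)(\mathcal U+ka)=P_k(\mathcal U+a)$), and the absorptions $\mathcal U P_k(\mathcal U+a)=P_{k+1}$, $\mathcal U P_{k-1}(\mathcal U+a)=P_k$; all are correct, and the coefficient bookkeeping then matches the paper's line for line. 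The trade-off is minor: the paper's auxiliary relation involves only a bare $\mathcal U$ and powers of $\mathcal V$, so it is available off the shelf and its proof by induction is a one-liner; your version requires slightly more care with the shifted factorials but keeps every manipulation inside the basis $(\mathcal U\mid-a)_k$ in which the final answer is expressed, which makes the collapse back to $P_\bullet\mathcal V^\bullet$ form transparent. Either route is a complete proof.
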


\begin{remark}
Note that, in the existent literature, one would usually consider the normal ordering of $(\mathcal V\mathcal U)^n$ in which all operators $\mathcal V$ are to the left of the operators $\mathcal U$ (see e.g\ \cite[Section~8.5]{MansourSchork}   the references therein), while we are interested  in the opposite situation. The reader is advised to compare Proposition~\ref{UVtoN} with \cite{Mansour-1109}.
\end{remark}

\begin{proof}[Proof of Proposition~\ref{UVtoN}] First, we state that
\begin{equation} \label{VUlemma}
\mathcal V^n \mathcal U = (\mathcal U+na)  \mathcal V^n + nb  \mathcal V^{n-1}.
\end{equation}  
This formula follows immediately from \cite{Mansour-577}. Nevertheless, an interested reader can prove formula~\eqref{VUlemma} directly by induction. 

Now we prove \eqref{eqnUV} by induction. For $n=1$, \eqref{eqnUV} becomes the tautology $\mathcal U\mathcal V =\mathcal U\mathcal V$. Assume that \eqref{eqnUV} holds for $n$ and let us prove it for $n+1$. We have, by~\eqref{VUlemma}, 
\begin{align}
&(\mathcal U\mathcal V)^{n+1} = \sum_{k=1}^{n} b^{n-k}  S(n,k)\mathcal U(\mathcal U+a)(\mathcal U+2a) \cdots (\mathcal U+(k-1)a)  \mathcal V^k  \mathcal U\mathcal V   \notag \\
&\quad= \sum_{k=1}^{n}b^{n-k}  S(n,k) \ \mathcal U(\mathcal U+a)(\mathcal U+2a) \cdots (\mathcal U+(k-1)a)  \big[(\mathcal U+ka)  \mathcal V^k+ kb  \mathcal V^{k-1}\big]  \mathcal V \notag \\
&\quad= \sum_{k=1}^{n} b^{n-k}  S(n,k)  \mathcal U(\mathcal U+a)(\mathcal U+2a) \cdots (\mathcal U+ka)  \mathcal V^{k+1} \notag \\
&\qquad + \sum_{k=1}^{n} k  b^{n-k+1}  S(n,k) \mathcal U(\mathcal U+a)(\mathcal U+2a) \cdots (\mathcal U+(k-1)a)  \mathcal V^k . \label{tyhgb}
\end{align}
Setting $S(n,0) =S(n,n+1) = 0$, we continue \eqref{tyhgb} as follows:
\begin{align*}
&= \sum_{k=1}^{n+1}  S(n,k-1)  b^{n-k+1} \mathcal U(\mathcal U+a)(\mathcal U+2a) \cdots (\mathcal U+(k-1)a)  \mathcal V^k \\
&\quad + \sum_{k=1}^{n+1} S(n,k)  k  b^{n-k+1} \, \mathcal U(\mathcal U+a)(\mathcal U+2a) \cdots (\mathcal U+(k-1)a)  \mathcal V^k \\
&= \sum_{k=1}^{n+1} \big(S(n,k-1)+k  S(n,k)\big)  b^{n+1-k} \, \mathcal U(\mathcal U+a)(\mathcal U+2a) \cdots (\mathcal U+(k-1)a) \mathcal V^k \\
&= \sum_{k=1}^{n+1} b^{n+1-k}  S(n+1,k)  \mathcal U(\mathcal U+a)(\mathcal U+2a) \cdots (\mathcal U+(k-1)a)  \mathcal V^k ,
\end{align*}
where we used the well known recurrence formula $ S(n+1,k)=S(n,k-1)+k  S(n,k)$.
\end{proof}

Let now $\sigma > 0$, $\alpha,\beta\in\mathbb C\setminus\{0\}$. Define linear operators $\mathcal U$ and $\mathcal V$ in $\mathcal P (\mathbb C)$ by \eqref{vtdtstswt}.   It is straightforward to see that the operators $\mathcal U$, $\mathcal V$ generate a generalized Weyl algebra as discussed above with  $a=\beta$ and $b=\alpha-\beta$. Let $\mathcal R=\mathcal U\mathcal V$. 

Since $\mathcal V1=1$ and $\mathcal U = Z + \frac{\sigma}{\alpha}$, Proposition~\ref{UVtoN} immediately implies

\begin{corollary}\label{vctrsw645bvf}
We have
\begin{equation}
(\mathcal R^n1)(z)=\sum_{k=1}^{n} (\alpha-\beta)^{n-k}  S(n,k)\,(z+\sigma/\alpha \mid -\beta)_k .\label{cyrdei675}
\end{equation}
\end{corollary}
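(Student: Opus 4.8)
The plan is to read off the result directly from Proposition~\ref{UVtoN}, since the operators $\mathcal{U}$ and $\mathcal{V}$ defined by \eqref{vtdtstswt} were already observed to satisfy the commutation relation $[\mathcal{V},\mathcal{U}]=\beta\mathcal{V}+(\alpha-\beta)$, i.e.\ the relation assumed in Proposition~\ref{UVtoN} with $a=\beta$ and $b=\alpha-\beta$. First I would substitute these values into formula \eqref{eqnUV}, obtaining
$$\mathcal{R}^n=(\mathcal{U}\mathcal{V})^n=\sum_{k=1}^n(\alpha-\beta)^{n-k}\,S(n,k)\,(\mathcal{U}\mid-\beta)_k\,\mathcal{V}^k.$$

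Next I would apply this operator identity to the constant function $1\in\mathcal{P}(\mathbb{C})$ and evaluate the two factors separately. Since $\mathcal{V}=\alpha D_\beta+1$ and the $\beta$-derivative annihilates constants, we have $\mathcal{V}1=1$, and hence $\mathcal{V}^k1=1$ for every $k$. For the remaining factor I would use that $\mathcal{U}=Z+\sigma/\alpha$ is the operator of multiplication by $z+\sigma/\alpha$, so that $\mathcal{U}+j\beta$ is multiplication by $z+\sigma/\alpha+j\beta$. Applying the product $(\mathcal{U}\mid-\beta)_k=\mathcal{U}(\mathcal{U}+\beta)\cdots(\mathcal{U}+(k-1)\beta)$ to $1$ therefore yields
$$\big((\mathcal{U}\mid-\beta)_k\,1\big)(z)=\prod_{j=0}^{k-1}\big(z+\sigma/\alpha+j\beta\big)=(z+\sigma/\alpha\mid-\beta)_k,$$
where the last equality is just the definition \eqref{cyrtdstevf} of the generalized factorial with increment $-\beta$ evaluated at $z+\sigma/\alpha$.

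Combining the two computations gives \eqref{cyrdei675} term by term. There is no genuine obstacle here: the entire content of the corollary is already packaged in Proposition~\ref{UVtoN}, and the only things left to verify are the elementary facts that $\mathcal{V}$ fixes constants and that $(\mathcal{U}\mid-\beta)_k$ acts as multiplication by the monic polynomial $(z+\sigma/\alpha\mid-\beta)_k$. The single point that warrants a moment's care is the operator ordering in \eqref{eqnUV}: because $\mathcal{V}^k$ stands to the right of $(\mathcal{U}\mid-\beta)_k$, it must be applied to $1$ first, but since it simply returns $1$ this causes no interference, and the multiplication operators $\mathcal{U}+j\beta$ then act on the constant function as claimed.
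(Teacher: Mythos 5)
Your proof is correct and follows exactly the paper's route: the paper derives the corollary as an immediate consequence of Proposition~\ref{UVtoN} with $a=\beta$, $b=\alpha-\beta$, using precisely the two observations you spell out, namely $\mathcal V1=1$ and $\mathcal U=Z+\sigma/\alpha$. Your write-up merely makes explicit the details (including the ordering point) that the paper leaves to the reader.
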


The following proposition explains a connection 
between the generalized Weyl algebra generated by $\mathcal U$ and $\mathcal V$ and an orthogonal Sheffer  sequence.

\begin{proposition}\label{f7iri7}
 Let $\sigma>0$ and $\alpha,\beta\in\mathbb C\setminus\{0\}$. Let $(p_n(z))_{n=0}^\infty$ be the monic polynomial  sequence satisfying by the recurrence formula 
\begin{equation}\label{vtyd6yed}
 zp_n(z)=p_{n+1}(z) +(\lambda n + \sigma/\alpha)  p_n(z) + (\sigma n + \eta n (n-1) ) p_{n-1} (z),\quad n\in\mathbb N_0,
\end{equation}
where $\lambda=\alpha+\beta$ and  $\eta=\alpha\beta$. In particular, for $\alpha\ge\beta>0$, we have $s_n(z)=p_n(z)$, and for $\Re(\alpha)\ge0$, $\Im(\alpha)>0$ and $\beta=\overline\alpha$, we have $s_n(z)=p_n(z+\frac\sigma\alpha)$. 
Define a linear bijective operator $\mathcal I$ in $\mathcal P(\mathbb C)$ by setting 
$\mathcal I(\cdot\mid\beta)_n=p_n$ for $n\in\mathbb N_0$. Then  $Z = \mathcal I \mathcal R {\mathcal I}^{-1}$.
\end{proposition}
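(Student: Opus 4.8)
The plan is to verify the operator identity $Z\mathcal I = \mathcal I\mathcal R$ on $\mathcal P(\mathbb C)$, which is equivalent to the claim $Z = \mathcal I\mathcal R\mathcal I^{-1}$ because $\mathcal I$ is bijective. Since $\big((z\mid\beta)_n\big)_{n=0}^\infty$ is a basis of $\mathcal P(\mathbb C)$ and $\mathcal I(\cdot\mid\beta)_n = p_n$, it suffices to check that $\mathcal I\mathcal R(\cdot\mid\beta)_n = z\,p_n(z)$ for every $n\in\mathbb N_0$. Thus the whole argument reduces to computing the action of $\mathcal R=\mathcal U\mathcal V$ on the generalized-factorial basis and recognizing the outcome as the three-term recurrence \eqref{vtyd6yed}.

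First I would compute $\mathcal R(\cdot\mid\beta)_n$ directly from $\mathcal U = Z + \sigma/\alpha$ and $\mathcal V = \alpha D_\beta + 1$ in \eqref{vtdtstswt}. Two elementary identities drive the calculation. Since $D_\beta$ is the lowering operator for the sequence $\big((z\mid\beta)_n\big)$, we have $(\alpha D_\beta + 1)(z\mid\beta)_n = \alpha n (z\mid\beta)_{n-1} + (z\mid\beta)_n$; and from the definition \eqref{cyrtdstevf} one reads off $z\,(z\mid\beta)_n = (z\mid\beta)_{n+1} + n\beta\,(z\mid\beta)_n$, i.e.\ $Z$ raises the index and contributes a diagonal term. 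Applying $\mathcal U$ to $\mathcal V(z\mid\beta)_n$ and re-expanding everything in the basis via these two identities, one collects the coefficients of $(z\mid\beta)_{n+1}$, $(z\mid\beta)_n$, and $(z\mid\beta)_{n-1}$.

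The key step is that these three coefficients come out to be exactly $1$, then $\lambda n + \sigma/\alpha$ (using $\lambda = \alpha+\beta$), and $\sigma n + \eta n(n-1)$ (using $\eta = \alpha\beta$, after matching $\alpha n(n-1)\beta$ with $\eta n(n-1)$), so that
\[
\mathcal R\,(\cdot\mid\beta)_n = (\cdot\mid\beta)_{n+1} + \big(\lambda n + \sigma/\alpha\big)(\cdot\mid\beta)_n + \big(\sigma n + \eta n(n-1)\big)(\cdot\mid\beta)_{n-1}.
\]
Applying $\mathcal I$ and using $\mathcal I(\cdot\mid\beta)_k = p_k$ turns the right-hand side into $p_{n+1} + (\lambda n + \sigma/\alpha)p_n + (\sigma n + \eta n(n-1))p_{n-1}$, which by the defining recurrence \eqref{vtyd6yed} equals $z\,p_n(z) = Z\mathcal I(\cdot\mid\beta)_n$. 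This is precisely the required equality on each basis vector, completing the proof. I do not expect a genuine obstacle here: the argument is finite bookkeeping on each basis element, and the only point needing care is the correct diagonal term $n\beta\,(z\mid\beta)_n$ in the expansion of $Z(z\mid\beta)_n$ together with the identification $\alpha n(n-1)\beta = \eta n(n-1)$.

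For the ``in particular'' assertions I would simply compare recurrences. When $\alpha\ge\beta>0$ we have $l = \sigma/\alpha$, so \eqref{545-980+polk} for $(s_n)$ coincides verbatim with \eqref{vtyd6yed}, giving $s_n = p_n$ by uniqueness of the monic solution. When $\beta = \overline\alpha$ with $\Im(\alpha)>0$ we have $l = 0$; substituting $z = x + \sigma/\alpha$ into \eqref{vtyd6yed} and cancelling the common summand $(\sigma/\alpha)\,p_n(x+\sigma/\alpha)$ on both sides yields exactly \eqref{545-980+polk} with $l=0$ for the shifted polynomials $x\mapsto p_n(x+\sigma/\alpha)$, hence $s_n(z) = p_n(z+\sigma/\alpha)$.
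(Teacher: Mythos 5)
Your proposal is correct and follows essentially the same route as the paper: the paper expands $\mathcal R=\alpha ZD_\beta+Z+\sigma D_\beta+\frac\sigma\alpha$ and invokes exactly the two identities you use, namely that $D_\beta$ lowers the index of $(z\mid\beta)_n$ and that $z(z\mid\beta)_n=(z\mid\beta)_{n+1}+n\beta(z\mid\beta)_n$, then matches the resulting three-term relation against \eqref{vtyd6yed}. Your write-up simply carries out the bookkeeping (and the verification of the ``in particular'' clauses) that the paper leaves to the reader.
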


\begin{proof} We have $ \mathcal R = \alpha Z D_\beta + Z + \sigma D_\beta + \frac\sigma\alpha$. Recall that $D_\beta$ is the lowering operator for the monic polynomial sequence $((z\mid\beta)_n)_{n=0}^\infty$.   Furthermore, it is easy to see that
$z(z\mid\beta)_n=(z\mid\beta)_{n+1}+n\beta(z\mid\beta)_n$. In view of the recurrence formula \eqref{vtyd6yed}, the statement easily follows. 
\end{proof}

As a special case of generalized Stirling numbers of Hsu and Shiue \cite{HsuShiue}, we define, for $0\le k\le n$ and $h,r\in\mathbb C$,  numbers $S(n,k;h,r)$ as the coefficients of the expansion $(z+r\mid h)_n=\sum_{k=0}^nS(n,k;h;r)(z\mid-h)_k$\,. 

Recall that the (unsigned) Lah numbers, $L(n,k)$, are defined as the coefficients of the expansion $(z)_n = \sum_{k=1}^{n} (-1) ^{n-k} L(n,k)  (z)^{(k)}$. Explicitly, $L(n,k) = \binom{n-1}{k-1} \frac{n!}{k!}$. Note that $L(n,k)= (-1) ^{n-k}S(n,k;1,0)=S(n,k;-1,0)$.

\begin{lemma} \label{yd6de}
We have $S(n,0;h, r) = (r \mid h)_n$ and for $k=1,\dots,n$, 
\begin{equation}\label{}
S(n,k;h,r)= \sum_{j=0}^{n-k} \binom{n}{j}  (-h)^{n-j-k}   L(n-j,k)  (r \mid h)_j. \end{equation}\end{lemma}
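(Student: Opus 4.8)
The plan is to reduce the statement to two standard identities for generalized factorials: a Vandermonde-type convolution that splits $(z+r\mid h)_n$ across the two arguments $z$ and $r$, together with the Lah-number formula that converts an increment $h$ into $-h$. First I would dispose of the case $k=0$ directly. Setting $z=0$ in the defining expansion $(z+r\mid h)_n=\sum_{k=0}^n S(n,k;h,r)(z\mid -h)_k$ kills every term with $k\ge 1$, since $(z\mid -h)_k=z(z+h)\cdots(z+(k-1)h)$ carries the factor $z$; what remains is $(r\mid h)_n=S(n,0;h,r)$.

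For the main formula, I would first record that the sequence $\big((\cdot\mid h)_n\big)_{n=0}^\infty$ is of binomial type. Indeed, its exponential generating function is $\sum_{n=0}^\infty \frac{t^n}{n!}(z\mid h)_n=(1+ht)^{z/h}$, which is multiplicative in $z$; comparing coefficients in $(1+ht)^{(z+r)/h}=(1+ht)^{z/h}(1+ht)^{r/h}$ gives the convolution
\[ (z+r\mid h)_n=\sum_{j=0}^n\binom{n}{j}\,(z\mid h)_{n-j}\,(r\mid h)_j. \]
(This may also be obtained by a one-line induction using $(z+r\mid h)_{n+1}=(z+r-nh)(z+r\mid h)_n$.)

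Second, I would rewrite each factor $(z\mid h)_m$ with $m\ge 1$ in the increment $-h$. Putting $w=z/h$, we have $(z\mid h)_m=h^m(w)_m$ and $(z\mid -h)_k=h^k(w)^{(k)}$, so the Lah-number identity $(w)_m=\sum_{k=1}^m(-1)^{m-k}L(m,k)(w)^{(k)}$, which is exactly the defining relation for $L(m,k)$ recalled in the paper, rescales to
\[ (z\mid h)_m=\sum_{k=1}^m(-h)^{m-k}L(m,k)\,(z\mid -h)_k,\qquad m\ge 1. \]

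Finally, I would substitute this into the convolution identity, separating the $j=n$ term, where $m=n-j=0$ and $(z\mid h)_0=(z\mid -h)_0=1$, which simply reproduces the $k=0$ coefficient $(r\mid h)_n$. For $0\le j\le n-1$ the inner sum runs over $1\le k\le n-j$; collecting, for a fixed $k\ge 1$, all contributions imposes the constraint $0\le j\le n-k$, and reading off the coefficient of $(z\mid -h)_k$ yields exactly $S(n,k;h,r)=\sum_{j=0}^{n-k}\binom{n}{j}(-h)^{n-j-k}L(n-j,k)(r\mid h)_j$. The only point requiring genuine care is the index bookkeeping in this last step together with the separate treatment of the $m=0$ term; the rest is a direct substitution, so I do not anticipate a real obstacle.
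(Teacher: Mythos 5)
Your proof is correct and follows essentially the same route as the paper's: the binomial-type (Vandermonde) convolution for $\big((\cdot\mid h)_n\big)_{n\ge0}$, followed by the rescaled Lah-number identity $(z\mid h)_m=\sum_{k=1}^m(-h)^{m-k}L(m,k)(z\mid -h)_k$ and a collection of coefficients. The only cosmetic differences are that you justify binomial type via the explicit generating function $(1+ht)^{z/h}$ where the paper cites the umbral-calculus criterion (delta operator plus $(0\mid h)_n=0$), and you handle $k=0$ by setting $z=0$ rather than reading off the constant term.
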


\begin{proof}
Since the $h$-derivative $D_h$ is the lowering operator for the monic polynomial sequence $((z\mid h)_n)_{n=0}^\infty$  and $(0\mid h)_n=0$ for all $n\in\mathbb N$, $((z\mid h)_n)_{n=0}^\infty$ is a polynomial sequence of binomial type, see e.g.\ \cite[4.3.3~Theorem]{11-Kung}. Hence, 
\begin{align*}
&( z+r \mid h)_n = \sum_{i=0}^{n} \binom{n}{i} (z \mid h)_i\, (r \mid h)_{n-i} 
= (r \mid h)_n + \sum_{i=1}^{n} \binom{n}{i}  h^i \left( \frac{z}{h} \right)_i (r \mid h)_{n-i} \\
&\quad= (r \mid h)_n + \sum_{i=1}^{n} \binom{n}{i}  h^i  (r \mid h)_{n-i}  \sum_{k=1}^{i} (-1)^{i-k} L(i,k)  \left( \frac{z}{h} \right)^{(k)} \\
&\quad= (r \mid h)_n + \sum_{i=1}^{n} \binom{n}{i} h^i  (r \mid h)_{n-i}  \sum_{k=1}^{i} (-1)^{i-k} L(i,k) \, h^{-k} (z \mid -h)_k \\
&\quad= (r \mid h)_n + \sum_{k=1}^{n} \left( \sum_{i=k}^{n} \binom{n}{i}  (-h)^{i-k} (r \mid h)_{n-i}\, L(i,k) \right) (z \mid -h)_k\\
&\quad= (r \mid h)_n + \sum_{k=1}^{n} \left( \sum_{j=0}^{n-k} \binom{n}{n-j}  (-h)^{n-j-k} (r \mid h)_{j}\, L(n-j,k) \right) (z \mid -h)_k. \qedhere
\end{align*}
\end{proof}

The following result can be of independent interest. 

\begin{theorem}\label{vctrs5u7}
Let $(p_n(z))_{n=0}^\infty$ be a monic polynomial  sequence as in Proposition~\ref{f7iri7}. We have
\begin{align}
z^n 
&=  \sum_{k=1}^{n} (\alpha - \beta)^{n-k} \, S(n,k) \,  (\sigma/\alpha \mid -\beta)_k \notag	\\
&\quad\quad + \sum_{i=1}^{n} \left( \sum_{k=i}^{n} (\alpha - \beta)^{n-k}  S(n,k)  S(k,i;-\beta,\sigma/\alpha) \right) p_i (z) \label{ctsa5y}
\end{align} 
and 
\begin{equation}
p_n (z) 
= (-\sigma/\alpha \mid \beta)_n + \sum_{i=1}^{n} \left( \sum_{k=i}^{n} S(n,k;\beta,-\sigma/\alpha)  (\alpha - \beta)^{k-i}  s(k,i)  \right) z^i. \label{hyt67oi} \end{equation}
\end{theorem}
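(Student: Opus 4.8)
The plan is to derive both identities from the single operator relation $z^n = (Z^n 1)(z) = (\mathcal I\,\mathcal R^n\,\mathcal I^{-1}1)(z) = (\mathcal I\,\mathcal R^n 1)(z)$, which follows from Proposition~\ref{f7iri7} ($Z=\mathcal I\mathcal R\mathcal I^{-1}$) together with $\mathcal I^{-1}1=1$ (because $(\,\cdot\mid\beta)_0=1=p_0$), combined with the closed form for $\mathcal R^n 1$ supplied by Corollary~\ref{vctrsw645bvf}. Equivalently, this says $\mathcal I\,(\mathcal R^k 1)=z^k$ for every $k$, a fact I will use repeatedly.

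For \eqref{ctsa5y} the argument is direct. First I substitute $\mathcal R^n 1=\sum_{k=1}^n(\alpha-\beta)^{n-k}S(n,k)\,(z+\sigma/\alpha\mid-\beta)_k$ from Corollary~\ref{vctrsw645bvf} into $z^n=\mathcal I(\mathcal R^n 1)$. Next I pass to the basis on which $\mathcal I$ acts by definition: using the generalized Stirling numbers with parameters $h=-\beta$, $r=\sigma/\alpha$ (so that $-h=\beta$), I expand $(z+\sigma/\alpha\mid-\beta)_k=\sum_{i=0}^k S(k,i;-\beta,\sigma/\alpha)\,(z\mid\beta)_i$. Applying $\mathcal I$ term by term sends $(z\mid\beta)_i\mapsto p_i$ for $i\ge1$ and produces the constant $S(k,0;-\beta,\sigma/\alpha)=(\sigma/\alpha\mid-\beta)_k$ (Lemma~\ref{yd6de}) for $i=0$. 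Separating the constant part and interchanging the order of summation in the remaining double sum yields exactly \eqref{ctsa5y}.

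Formula \eqref{hyt67oi} is dual, and this is where the real work lies, since $\mathcal I$ is defined by $(\,\cdot\mid\beta)_n\mapsto p_n$ whereas we now want $p_n$ in the monomial basis. The key step is to invert the relation of Corollary~\ref{vctrsw645bvf}: the matrix $(\alpha-\beta)^{n-k}S(n,k)$ is lower triangular with unit diagonal, and the Stirling orthogonality \eqref{65yhu7} shows its inverse is $(\alpha-\beta)^{m-k}s(m,k)$, whence $(z+\sigma/\alpha\mid-\beta)_m=\sum_{k=1}^m(\alpha-\beta)^{m-k}s(m,k)\,\mathcal R^k 1$. Applying $\mathcal I$ and using $\mathcal I(\mathcal R^k 1)=z^k$ collapses the right-hand side to $\sum_{k}s(m,k)(\alpha-\beta)^{m-k}z^k=(z\mid\alpha-\beta)_m$, giving the clean intermediate identity $\mathcal I\,(z+\sigma/\alpha\mid-\beta)_m=(z\mid\alpha-\beta)_m$. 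I expect this inversion, and in particular the recognition that the target basis $(z\mid\alpha-\beta)_m$ appears here rather than the basis $(z\mid-\beta)_m$ used to define the $S(n,k;\beta,-\sigma/\alpha)$, to be the main obstacle; everything else is bookkeeping once this reconciliation is made.

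With the intermediate identity in hand I would finish as follows. Writing $(z\mid\beta)_n=(w-\sigma/\alpha\mid\beta)_n$ with $w=z+\sigma/\alpha$ and expanding by the definition of the generalized Stirling numbers (this time with $h=\beta$, $r=-\sigma/\alpha$) gives $(z\mid\beta)_n=\sum_{m=0}^n S(n,m;\beta,-\sigma/\alpha)\,(z+\sigma/\alpha\mid-\beta)_m$. Applying $\mathcal I$, using $p_n=\mathcal I(z\mid\beta)_n$ and the intermediate identity, yields $p_n(z)=\sum_{m=0}^n S(n,m;\beta,-\sigma/\alpha)\,(z\mid\alpha-\beta)_m$. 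Finally I expand $(z\mid\alpha-\beta)_m=\sum_{i=1}^m s(m,i)(\alpha-\beta)^{m-i}z^i$ for $m\ge1$, treat the $m=0$ term via $S(n,0;\beta,-\sigma/\alpha)=(-\sigma/\alpha\mid\beta)_n$ (Lemma~\ref{yd6de}), and interchange the sums to obtain \eqref{hyt67oi}.
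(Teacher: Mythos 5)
Your proposal is correct and follows essentially the same route as the paper's proof: both parts rest on Corollary~\ref{vctrsw645bvf}, the conjugation $Z=\mathcal I\mathcal R\mathcal I^{-1}$ from Proposition~\ref{f7iri7}, Lemma~\ref{yd6de} for the constant terms, and the Stirling orthogonality \eqref{65yhu7} to invert the triangular system for \eqref{hyt67oi}. Your explicitly stated intermediate identity $\mathcal I\,(z+\sigma/\alpha\mid-\beta)_m=(z\mid\alpha-\beta)_m$ is a tidy repackaging of what the paper carries implicitly by keeping the symbols $(\mathcal R^i 1)(z)$ until the final application of $\mathcal I$, but the substance is the same.
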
 

\begin{proof} By Corollary~\ref{vctrsw645bvf}, we have 
\begin{align}
(\mathcal R^n1)(z)&= \sum_{k=1}^{n} (\alpha - \beta)^{n-k}  S(n,k)  \sum_{i=0}^{k} S(k,i;-\beta, \sigma/\alpha) (z \mid \beta)_i \notag \\
&= \sum_{k=1}^{n} (\alpha - \beta)^{n-k} S(n,k) (\sigma/\alpha \mid -\beta)_k \notag \\
&\quad\quad + \sum_{i=1}^{n} \sum_{k=i}^{n} (\alpha - \beta)^{n-k}  S(n,k) \, S(k,i;-\beta, \sigma/\alpha) (z \mid \beta)_i .	\label{hj7uk}
\end{align} 
Applying the operator $\mathcal I$ to \eqref{hj7uk} and using Proposition~\ref{f7iri7}, we obtain \eqref{ctsa5y}.


Recall Corollary~\ref{vctrsw645bvf}. Note that there exists a unique monic polynomial sequence $(q_n (z))_{n=0}^{\infty}$ that satisfies $(\mathcal R^n 1)(z) = \sum_{k=1}^{n} (\alpha-\beta)^{n-k}  S(n,k) q_k (z)$ for $n\in {\mathbb N}_0$  and $q_n (z) = (z+\sigma/\alpha \mid -\beta)_n$. 

Define the monic polynomial sequence $(\tilde q_n(z))_{n=0}^{\infty}$ by
$$\tilde q_n(z)= \sum_{k=1}^{n} (\alpha-\beta)^{n-k}  s(n,k) \, (\mathcal R^n 1)(z).$$
We state that $q_n(z)=\tilde q_n(z)$, i.e., 
\begin{equation} \label{8709-jlf}
	 (z+\sigma/\alpha \mid -\beta)_n = \sum_{k=1}^{n} (\alpha-\beta)^{n-k}  s(n,k) \, (\mathcal R^n 1)(z).
						  \end{equation}
Indeed, using formula \eqref{65yhu7}, we have \begin{align*}
	&\sum_{k=1}^{n} (\alpha-\beta)^{n-k}  S(n,k) \tilde q_k (z)=\sum_{k=1}^{n} (\alpha-\beta)^{n-k}  S(n,k)  \sum_{i=1}^{k} (\alpha-\beta)^{k-i}  s(k,i)  (\mathcal R^i 1)(z) 	\\
		&\quad = \sum_{i=1}^{n} \left( \sum_{k=1}^{n}    S(n,k)  s(k,i) \right) (\alpha-\beta)^{n-i} (\mathcal R^i 1)(z)  = (\mathcal R^n 1)(z), 
											  \end{align*}
which proves \eqref{8709-jlf}. 

By Lemma~\ref{yd6de}, \eqref{8709-jlf} and the definition of the generalized Stirling numbers\linebreak $S(n,k;\beta, -\sigma/\alpha)$, we have \begin{align}
	(z \mid \beta)_n 
	&= \sum_{k=0}^n S(n,k; \beta, -\sigma/\alpha)(z+ \sigma/\alpha \mid -\beta)_k \notag	\\
	&= S(n,0; \beta, -\sigma/\alpha) + \sum_{k=1}^n S(n,k; \beta, -\sigma/\alpha) \left( \sum_{i=1}^{k} (\alpha-\beta)^{k-i}  s(k,i) (\mathcal R^i 1)(z) \right)	\notag	\\
	&= (-\sigma/\alpha\mid\beta)_n \, + \sum_{i=1}^{n} \left( \sum_{k=i}^{n} S(n,k; \beta, -\sigma/\alpha)  (\alpha-\beta)^{k-i}  s(k,i) \right) (\mathcal R^i 1)(z). \label{343Bm90op}
						  \end{align}
Applying $\mathcal I$ to \eqref{343Bm90op} and using Proposition~\ref{f7iri7}, we obtain \eqref{hyt67oi}. 	\qedhere

\end{proof}

The corollary below follows immediately from formula \eqref{ctsa5y}.

\begin{corollary} 	\label{654POI67}
	Let $(p_n (z))_{n=0}^\infty$ be a monic polynomial sequence as in Proposition~\ref{f7iri7}. Let $\Phi : \mathcal P (\mathbb C) \to \mathbb C$ be a linear functional defined by $\Phi(1) = 1$ and $\Phi(p_n) = 0$ for all $n \in \mathbb N$. Then $\Phi(z^n) = \sum_{k=1}^n (\alpha-\beta)^{n-k} \, S(n,k) \, (\sigma/\alpha \mid -\beta)_k$.	
	In particular, for $\alpha\ge\beta> 0$, 
	\begin{equation}\label{cfgdyjk} 
	\int_{\mathbb R} x^n \, \mu_{\alpha, \beta, \sigma}(dx) = \sum_{k=1}^n (\alpha-\beta)^{n-k} S(n,k) (\sigma/\alpha \mid -\beta)_k \end{equation}
	and for $\Re(\alpha)\ge0$, $\Im(\alpha)>0$, $\beta=\overline\alpha$,  
		\begin{equation}\label{vcdtre54w5uw5xs}
		 \int_{\mathbb R} (x + \sigma/\alpha)^n \, \mu_{\alpha, \beta, \sigma}(dx) = \sum_{k=1}^n (\alpha-\beta)^{n-k}  S(n,k)  (\sigma/\alpha \mid -\beta)_k . 
\end{equation}

\end{corollary}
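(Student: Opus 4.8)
The plan is to apply the linear functional $\Phi$ directly to the expansion \eqref{ctsa5y} of $z^n$ in the basis $(1,p_1,p_2,\dots)$ of $\mathcal P(\mathbb C)$ provided by Theorem~\ref{vctrs5u7}. Since $\Phi$ is linear with $\Phi(1)=1$ and $\Phi(p_i)=0$ for every $i\ge1$, applying $\Phi$ to \eqref{ctsa5y} will annihilate the entire double sum $\sum_{i=1}^{n}(\cdots)\,p_i(z)$, while the first summand, being a scalar multiple of $1$, is reproduced. This leaves exactly
\begin{equation*}
\Phi(z^n)=\sum_{k=1}^{n}(\alpha-\beta)^{n-k}\,S(n,k)\,(\sigma/\alpha\mid-\beta)_k,
\end{equation*}
which is the asserted general formula. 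The only point to observe is that the coefficients $(\sigma/\alpha\mid-\beta)_k$ are scalars, so $\Phi$ indeed passes through them.

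It then remains to identify $\Phi$ with the appropriate integration functional in each of the two cases. When $\alpha\ge\beta>0$, I would use that Proposition~\ref{f7iri7} gives $s_n=p_n$, and that the integration functional $p\mapsto\int_{\mathbb R}p\,d\mu_{\alpha,\beta,\sigma}$ satisfies $\int_{\mathbb R}1\,d\mu_{\alpha,\beta,\sigma}=1$ and, by orthogonality of $(s_n)_{n=0}^\infty$ together with $s_0=1$, $\int_{\mathbb R}s_n\,d\mu_{\alpha,\beta,\sigma}=0$ for $n\ge1$. Since a linear functional on $\mathcal P(\mathbb C)$ is determined by its values on the monic basis $(p_n)_{n=0}^\infty$, this integration functional coincides with $\Phi$, and substituting $\Phi(z^n)=\int_{\mathbb R}x^n\,d\mu_{\alpha,\beta,\sigma}$ will yield \eqref{cfgdyjk}.

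For the Meixner case $\Re(\alpha)\ge0$, $\Im(\alpha)>0$, $\beta=\overline\alpha$, Proposition~\ref{f7iri7} instead gives $s_n(z)=p_n(z+\sigma/\alpha)$, i.e.\ $p_n(z)=s_n(z-\sigma/\alpha)$. I would therefore match $\Phi$ with the shifted integration functional $p\mapsto\int_{\mathbb R}p(x+\sigma/\alpha)\,d\mu_{\alpha,\beta,\sigma}(x)$, which sends $p_n$ to $\int_{\mathbb R}s_n\,d\mu_{\alpha,\beta,\sigma}$, equal to $1$ for $n=0$ and to $0$ for $n\ge1$; hence it again agrees with $\Phi$ on the basis and so equals $\Phi$. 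Evaluating at $z^n$ then gives $\Phi(z^n)=\int_{\mathbb R}(x+\sigma/\alpha)^n\,d\mu_{\alpha,\beta,\sigma}(x)$, which is \eqref{vcdtre54w5uw5xs}. No genuine obstacle arises here; the one step requiring care is tracking the shift by $\sigma/\alpha$ in the Meixner case, so that the functional matched to $\Phi$ is the correctly translated integral rather than the untranslated one.
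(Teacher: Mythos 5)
Your proposal is correct and follows exactly the route the paper intends: the paper simply states that the corollary ``follows immediately from formula \eqref{ctsa5y}'', i.e.\ apply $\Phi$ to that expansion so that the $p_i$-terms vanish, and then identify $\Phi$ with the (suitably shifted, in the Meixner case) integration functional via orthogonality. Your write-up just makes these routine steps explicit, including the correct handling of the shift $p_n(z)=s_n(z-\sigma/\alpha)$.
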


\begin{center}
{\bf Acknowledgements} 
\end{center}

\noindent We are grateful to the anonymous referee for their valuable  suggestions, and for bringing papers \cite{Feinsilver2, Asai1,Asai2} to our attention.  
C.K. was financially supported  by the Doctoral Training Program (DTP), EPSRC, UKRI which co-operated with Faculty of Science and Engineering, Swansea University, the project reference  2602423, related to EP/T517987/1.

\end{document}